\documentclass[a4paper,10pt]{article}
\usepackage[bbgreekl]{mathbbol}
\usepackage{soul}
\usepackage[small]{titlesec}
\usepackage[utf8]{inputenc}
\usepackage[T1]{fontenc}  
\usepackage[]{mdframed}
\usepackage{mathrsfs}
\usepackage{caption}
\usepackage[noerroretextools,backend=biber,style=alphabetic,natbib=false,giveninits=true,doi=true,isbn=false,url=false,date=year,maxbibnames=99,sorting=nty,defernumbers=true]{biblatex}
\DeclareNameAlias{default}{family-given}

\DeclareFieldFormat[article]{title}{#1}
\renewbibmacro{in:}{}
\DeclareFieldFormat[article]{pages}{#1}
\DeclareFieldFormat{journal}{#1}
\renewcommand\bf\bfseries

\renewbibmacro*{volume+number+eid}{%
	\printfield{volume}%
	\setunit*{\addnbspace}
	\printfield{number}%
	\setunit{\addcomma\space}%
	\printfield{eid}}
\DeclareFieldFormat[article]{number}{\mkbibparens{#1}}
\DeclareFieldFormat[article]{volume}{\textbf{#1}}
\DeclareFieldFormat{year}{\mkbibparens{#1}}
\DeclareBibliographyDriver{article}{%
	\printnames{author}:%
	\newunit\newblock
	\printfield{title}%
	\newunit\newblock
	\printfield{journaltitle}
	\newunit
	\iffieldundef{number}{\printfield{volume}}{\printfield{volume}\addspace\printfield{number}}
	\addcomma\addspace\printfield{pages}\addspace
	\printfield{year}
}
\AtEveryBibitem{\clearfield{month}}
\AtEveryBibitem{\clearfield{day}}
\usepackage[english]{babel}
\usepackage[T1]{fontenc}
\usepackage{csquotes}
\usepackage{bbm}
\usepackage[leqno]{amsmath}
\usepackage{amsfonts,amsthm,amsbsy,amssymb,dsfont,stmaryrd}
\usepackage[dvipsnames]{xcolor}
\usepackage{braket}

\usepackage{caption}
\usepackage{subcaption}

\usepackage{enumitem}

\makeatletter
\newcommand{\leqnomode}{\tagsleft@true\let\veqno\@@leqno}
\newcommand{\reqnomode}{\tagsleft@false\let\veqno\@@eqno}
\makeatother

\usepackage{mathtools}
\usepackage[makeroom]{cancel}

\numberwithin{equation}{section}

\newcommand\myshade{85}
\colorlet{mylinkcolor}{violet}
\colorlet{mycitecolor}{YellowOrange}
\colorlet{myurlcolor}{Aquamarine}

\usepackage[left=2.5cm,right=2.5cm,top=2.5cm,bottom=2.5cm]{geometry}
\usepackage[unicode=true,pdfusetitle,
bookmarks=true,bookmarksnumbered=false,bookmarksopen=false,
breaklinks=false,pdfborder={0 0 1},backref=false,
linkcolor  = mylinkcolor!\myshade!black,
citecolor  = mycitecolor!\myshade!black,
urlcolor   = myurlcolor!\myshade!black,
colorlinks = true,
]
{hyperref}
\usepackage[nameinlink]{cleveref}

\usepackage{ifthen}

\usepackage{graphicx}
\usepackage{caption}
\usepackage{slashed}
\definecolor{ct_black}{HTML}{000000}
\definecolor{ct_orange}{HTML}{ED872D}
\definecolor{ct_purple}{HTML}{7A68A6}
\definecolor{ct_blue}{HTML}{348ABD}
\definecolor{ct_turquoise}{HTML}{188487}
\definecolor{ct_red}{HTML}{E32636}
\definecolor{ct_pink}{HTML}{CF4457}
\definecolor{ct_green}{HTML}{467821}

\definecolor{ct2_green}{HTML}{9FF781}
\definecolor{ct2_green_dark}{HTML}{088A08}

\theoremstyle{plain}
\newtheorem{thm}{\protect\theoremname}[section]
\theoremstyle{plain}
\newtheorem{lem}[thm]{\protect\lemmaname}
\theoremstyle{plain}
\newtheorem{cor}[thm]{\protect\corollaryname}
\theoremstyle{plain}
\newtheorem{prop}[thm]{\protect\propositionname}
\theoremstyle{plain}

\theoremstyle{remark}

\theoremstyle{definition}
\newtheorem{defn}[thm]{\protect\definitionname}

\providecommand{\assumptionname}{Assumption}
\providecommand{\claimname}{Claim}
\providecommand{\corollaryname}{Corollary}
\providecommand{\definitionname}{Definition}
\providecommand{\lemmaname}{Lemma}
\providecommand{\propositionname}{Proposition}
\providecommand{\remarkname}{Remark}
\providecommand{\theoremname}{Theorem}
\providecommand{\examplename}{Example}

\crefname{section}{Section}{Sections}
\crefname{example}{Example}{Examples}
\crefname{appendix}{Appendix}{Appendices}
\crefname{figure}{Figure}{Figures}
\crefname{assumption}{Assumption}{Assumptions}
\crefname{thm}{Theorem}{Theorems}
\crefname{lem}{Lemma}{Lemmas}
\crefformat{equation}{(#2#1#3)}
\crefname{table}{Table}{Tables}

\crefrangelabelformat{equation}{(#3#1#4--#5#2#6)}

\crefmultiformat{equation}{(#2#1#3}{, #2#1#3)}{#2#1#3}{#2#1#3}
\Crefmultiformat{equation}{(#2#1#3}{, #2#1#3)}{#2#1#3}{#2#1#3}

\newtheorem*{lem*}{\protect\lemmaname}

\newcommand{\ee}{\operatorname{e}}

\newcommand{\bS}{\mathbb{S}}

\newcommand{\NN}{\mathbb{N}}
\newcommand{\RR}{\mathbb{R}}

\newcommand{\PP}{\mathbb{P}}
\newcommand{\EE}{\mathbb{E}}

\newcommand{\calO}{\mathcal{O}}
\newcommand{\calR}{\mathcal{R}}

\newcommand{\Ord}[1]{\calO\left(#1\right)}

\newcommand\norm[1]{\left\lVert#1\right\rVert}
\newcommand{\ip}[2]{\langle #1, #2 \rangle}

\newcommand{\dif}{\operatorname{d}\!} 
\newcommand{\tr}{\operatorname{tr}}

\newcommand{\vf}{\varphi}
\newcommand{\Id}{\mathds{1}}

\newcommand{\dist}{\mathrm{dist}}

\newcommand{\supp}{\operatorname{supp}}

\newcommand{\im}{\operatorname{im}}

\usepackage{environ}

\NewEnviron{malign}{%
	\begin{align}\begin{split}
			\BODY
	\end{split}\end{align}
}

\newcommand{\abs}[1]{\left\lvert#1\right\rvert}

\newcommand{\eq}[1]{\begin{align*}#1\end{align*}}

\newcommand{\eql}[1]{\begin{align}#1\end{align}}

\newcommand{\br}[1]{\left(#1\right)}

\title{Instantons in a Double-Well are Poisson Distributed}
\author{	\href{mailto:jacobshapiro@princeton.edu}{Jacob Shapiro}\\
	{\footnotesize Department of Mathematics, Princeton University}
}

\begin{document}
\reqnomode
\maketitle
\begin{abstract}
We give a rigorous realization of the dilute instanton picture for a
semiclassical Schrödinger operator with a symmetric double-well potential
on $\RR^n$. Using a localized Feynman--Kac representation, we decompose the
heat-kernel trace according to the number of passages made by a Brownian
bridge between shrinking neighborhoods of the two wells. We identify the
weight of one passage with a hopping coefficient $\rho_\lambda$ 
\eq{
\rho_\lambda
=\int_{\partial\Omega}
\left(
\br{\nabla\overline{\vf_{\lambda,0}^\Omega}}
\vf_{\lambda,0}^{-\Omega}
-
\overline{\vf_{\lambda,0}^\Omega}
\nabla\vf_{\lambda,0}^{-\Omega}
\right)\cdot\nu\,.
}
On the exponentially long time scale $\beta=N/\abs{\rho_\lambda}$, the number of
passages converges, for every fixed $N>0$, to a Poisson random variable of
mean $N$. We identify $-\frac{1}{\lambda}\log\abs{\rho_\lambda}\to S(d,-d)$
and obtain
\eq{
E_1(\lambda)-E_0(\lambda)
=
2\abs{\rho_\lambda}\br{1+o(1)}.
}
Thus the familiar instanton expansion of the double-well eigenvalue
splitting emerges directly from a factorization of the heat-kernel trace.
\end{abstract}

\section{Introduction}

Tunneling in a symmetric double-well potential is one of the basic
semiclassical mechanisms by which a classically degenerate energy level
splits into two nearby quantum energy levels. For the Schrödinger operator
\eq{
H_\lambda=-\Delta+\lambda^2V(X),
}
the two lowest eigenstates are concentrated near the two minima of $V$,
while their symmetric and antisymmetric combinations have eigenvalues
$E_0(\lambda)$ and $E_1(\lambda)$ separated by an exponentially small
quantity. The exponential scale of this splitting and, under stronger
hypotheses, its leading prefactor are classical subjects of semiclassical
analysis; see, for example,
\cite{Simon_1984_10.2307/2007072,Helffer_Sjostrand_1984}.

One of the physical descriptions is given by instantons. In
Euclidean time, an instanton is a classical zero-energy trajectory joining
the two minima. The standard dilute-gas picture treats repeated transitions
between the wells as approximately independent events. If $\abs{\rho_\lambda}$
denotes the weight of one such transition, the contribution of trajectories
containing exactly $k$ transitions over a time interval of length $\beta$
is then expected to be proportional to
\eq{
\frac{\br{\beta\abs{\rho_\lambda}}^k}{k!}.
}
Summing over even and odd values of $k$ produces the hyperbolic functions
which encode the symmetric and antisymmetric eigenvalues. This argument is
fundamental in the physics literature
\cite{Coleman1985Uses,VainshteinEtAl1982ABC} as it is a stepping stone to instantons in gauge theories.

The ground-state-diffusion and hitting-time approach to multiple tunneling goes back to Jona-Lasinio, Martinelli, and Scoppola \cite{JonaLasinioMartinelliScoppola1981}, while Combes, Duclos, and Seiler \cite{CombesDuclosSeiler1983} obtained convergent expansions in a tunneling parameter. Related dilute-interface asymptotics were developed for one-dimensional Kac and Allen--Cahn models in \cite{COP93,BBB08,Web10,OWW14}, while process-level convergence to finite-state metastable Markov chains in a rather different small-noise setting was proved in \cite{RS23}. Most directly related to the present work, Bertini, Butt{\`a}, and Di Ges{\`u} \cite{BBDG25} proved, for the scalar quartic $\phi^4_1$ model, sharp factorial asymptotics for the number of transition layers and convergence of their locations to a Poisson point process on the natural exponentially long scale. For general symmetric double-well potentials on $\mathbb{R}^n$, we identify the Poisson intensity with the semiclassical hopping coefficient (see \cite{FeffermanLeeThorpWeinstein2018,FeffermanShapiroWeinstein2022,Fefferman2025}, also called the interaction matrix element \cite{Helffer_Sjostrand_1984,DimassiSjostrand1999InteractionMatrix}) and hence simultaneously with its flux formula, exponential action, and the leading spectral splitting. We prove that this hopping coefficient is precisely the limiting weight of the one-passage sector, that on the time scale $\beta=N/\abs{\rho_\lambda}$ the number of geometrically defined passage events converges to a Poisson law of mean $N$, and that
\[
    \lim_{\lambda\to\infty}
    -\frac{1}{\lambda}\log\abs{\rho_\lambda}
    =
    S(d,-d).
\]
Moreover, the leading eigenvalue splitting satisfies
\[
    E_1(\lambda)-E_0(\lambda)
    =
    2\abs{\rho_\lambda}(1+o(1))\,.
\]

There are several difficulties in turning this picture into a direct
argument about the heat kernel. A classical instanton approaches either
minimum only at infinite Euclidean time, so an individual transition does
not occupy a canonically specified finite interval. Brownian trajectories
also need not consist of visibly isolated classical instantons: successive
passages may occur close together and form clusters. Finally, the natural
observation time is of order $\abs{\rho_\lambda}^{-1}$ and is therefore
exponentially large in $\lambda$. It is consequently not enough to obtain
an asymptotic formula for each fixed number of transitions; one also needs
a summable estimate uniform in that number in order to recover the full
partition function.

We address these issues directly through the Feynman--Kac representation.
We localize the endpoints of the heat kernel to shrinking neighborhoods of
the two wells and use successive hitting times of these neighborhoods to
define the number of passages of a path. This gives an exact decomposition
of the localized heat-kernel traces into partition functions
$Z_{\lambda,k}(\beta)$ with definite number of passages. The hopping coefficient
is defined intrinsically by
\eq{
\rho_\lambda
:=
-\lim_{\beta\to\infty}
\frac{Z_{\lambda,1}(\beta)}
{\beta Z_{\lambda,0}(\beta)}.
}
This choice fixes the normalization before any dilute-gas approximation is
made and agrees with the normalization of \cite{Fefferman2025}.

Our main result is that, for every fixed $N>0$,
\eq{
\frac{
Z_{\lambda,k}\left(\frac{N}{\abs{\rho_\lambda}}\right)
}{
Z_{\lambda,0}\left(\frac{N}{\abs{\rho_\lambda}}\right)
}
\longrightarrow
\frac{N^k}{k!}
\qquad
\left(k\in\NN_{\geq0}\right),
}
together with a summable bound which is uniform in $k$ and
$\lambda$. It follows that the number of passages on the time scale
$N/\abs{\rho_\lambda}$ converges to the Poisson law of mean $N$. The proof
implements the dilute-gas intuition by separating paths into two classes:
those for which all passages are well separated and those containing at
least one cluster. Repeated application of the strong Markov property
factorizes the first class, while logarithmic path estimates show that the
second has vanishing contribution.

The Poisson law immediately determines the ratio between the even- and
odd-passage sectors. Comparing this ratio with the spectral expansion of
the heat kernel yields
\eq{
E_1(\lambda)-E_0(\lambda)
=
2\abs{\rho_\lambda}\br{1+o(1)}.
}
Independently, the one-passage decomposition identifies $\rho_\lambda$ with
the boundary Wronskian flux between the two one-well ground states. Uniform
semiclassical path estimates then give
\eq{
-\frac1\lambda\log\abs{\rho_\lambda}
\longrightarrow
S(d,-d),
}
where $S(d,-d)$ is the least Euclidean action required to travel between
the two minima.

The paper is organized as follows. In \cref{thm:Poisson distribution} and
the surrounding discussion, we introduce the passage sectors, state the
Poisson limit, and deduce the eigenvalue splitting. We then identify the
resulting normalization of $\rho_\lambda$ with the standard hopping
coefficient and determine its exponential scale. The final step is the
factorization of the $k$-passage sectors, including the estimate excluding
clustered passages. The appendix proves the logarithmic heat-kernel,
first-exit, and ground-state estimates needed for the particular
exterior-ball domains used throughout the paper.

\subsection{Setting}
For arbitrary $n\in\NN_{\geq1}$, we define our double-well Hamiltonian on $L^2(\RR^n)$ as \eql{
H_\lambda := -\Delta + \lambda^2 V(X)\qquad(\lambda >0 \text{ large})\,.
} We make the following assumptions about this Hamiltonian:
\begin{enumerate}
    \item $V\in C^2(\RR^n\to[0,\infty))$ has precisely two non-degenerate global minima located at $\pm d$ for some $d\in\RR^n\setminus\Set{0}$, and $V(\pm d)=0$.
    \item If $R$ is the unitary operator implementing reflection, i.e., $R \psi\equiv\psi(-\cdot)$ then $RV = V$.
    \item There exists some $c_\infty>0$ such that \eql{
\liminf_{\norm{x}\to\infty}V\left(x\right)\geq c_{\infty}
    } and $H_\lambda$ has discrete spectrum below $\lambda^2 c_\infty$. Its eigenvalues are denoted \eql{
    E_0(\lambda)\leq E_1(\lambda) \leq \cdots<\lambda^2c_\infty\,.
    } We assume that $E_0$ corresponds to the $+1$ eigenvalue of $R$ and $E_1$ to its $-1$ eigenvalue.

    \item We assume that there exists some $c_{\rm gap}>0$ such that \eql{
    \inf_\lambda \frac{1}{\lambda}\br{E_2(\lambda)-E_1(\lambda) }\geq c_{\rm gap} 
    }
    \item For the two one-well domains $\Omega_\lambda$ and
    $-\Omega_\lambda$ introduced below, we let $h_\lambda^{\pm
    \Omega_\lambda}$ denote the corresponding Dirichlet restrictions of
    $H_\lambda$. Their eigenstates and eigenvalues below
    $\lambda^2c_\infty$ are denoted by
    $\Set{\vf_{\lambda,j}^{\pm\Omega_\lambda}}_{j=0}^\infty$ and
    $\Set{e_{\lambda,j}^{\pm\Omega_\lambda}}_{j=0}^\infty$. We assume that
    there exist constants $c,C>0$ and some fixed
    $\eta\in(0,\frac12)$ such that, uniformly for
    \eq{
    (D_\lambda,q)
    \in
    \Set{
    (\Omega_\lambda,d),
    (-\Omega_\lambda,-d)
    },
    }
    \eq{
    c\lambda\leq e_{\lambda,0}^{D_\lambda}\leq C\lambda,
    \qquad
    e_{\lambda,1}^{D_\lambda}-e_{\lambda,0}^{D_\lambda}
    \geq c\lambda,
    }
    and
    \eq{
    \sup_{x\in B_{\lambda^{-1/2+\eta}}(q)}
    \left|
    \frac1\lambda\log\vf_{\lambda,0}^{D_\lambda}(x)
    \right|
    \longrightarrow0.
    }
\end{enumerate}

All of these assumptions may be verified if we assume, for example, that $V$ has a good harmonic oscillator approximation near $\pm d$, as is customary in the semiclassical analysis literature \cite{Simon_1984_10.2307/2007072,Helffer_Sjostrand_1984}; see \cref{sec:verification-one-well-assumptions} for completeness.

The relevant classical Euclidean action in this problem is given by \eql{\label{eq:classical Euclidean action}
S_T(\gamma) := \int_0^T\br{\frac{1}{4}\norm{\dot{\gamma}(t)}^2 + V\br{\gamma(t)}}\dif{t}\,.
} In principle its domain is the space of classical paths $ H^{1}\left(\left[0,T\right]\to\mathbb{R}^{n}\right)$ and the classical equation of motion is \eql{\label{eq:classical EoM}
\frac12 \ddot{\gamma} = \br{\nabla V }\circ\gamma\,.
} Moreover, along such classical paths obeying the Newton equation of motion, the energy
\eq{
E(\gamma) := \frac14 \norm{\dot{\gamma}}^2 - V\circ\gamma
} is conserved. 
\begin{defn}[Instanton] An \emph{instanton} is a solution $\gamma:\RR\to\RR^n$ of \cref{eq:classical EoM} with the boundary conditions $\gamma(\pm\infty)=\pm d$ and $\dot{\gamma}(\pm\infty)=0$. 
\end{defn}
We note that such solutions have zero energy. Moreover, the constraints on having zero velocity and having end-points at the minima cannot be satisfied at finite time.

We will mostly be interested in its infimum value as $\gamma$ ranges over paths with prescribed end-points: 
\eql{
S_T(A,B) := \inf_{\gamma\in H^{1}\left(\left[0,T\right]\to\mathbb{R}^{n}\right):\gamma(0) \in A\land \gamma(T) \in B}S_T(\gamma)\qquad(A,B\subseteq \RR^n)
} as well as 
\eql{
S(A,B) := \inf_{T>0} S_T(A,B)\qquad(A,B\subseteq \RR^n)\,.
} With abuse, if $A$ or $B$ are singletons we just use the point instead of the singleton containing the point. Thus $S(x,y)$ is the extremal classical Euclidean action for paths starting at $x$ and ending at $y$.

For an open set $D\subseteq\RR^n$ and $x,y\in\overline D$, we also use the constrained actions
\eql{
S_{T,D}(x,y)
:=
\inf_{\substack{
\gamma\in H^1([0,T];\RR^n)\\
\gamma(0)=x,\ \gamma(T)=y\\
\gamma((0,T))\subseteq D
}}
\int_0^T
\left(
\frac14\norm{\dot\gamma(t)}^2+V\br{\gamma(t)}
\right)\dif{t}
}
and
\eql{
S_D(x,y):=\inf_{T>0}S_{T,D}(x,y)\,.
}
The use of the open interval $(0,T)$ permits either endpoint to lie on
$\partial D$.

Our main result is that as $\lambda\to\infty$, the heat-kernel associated with $H_\lambda$ may be factorized according to instanton "events" which are distributed with a Poisson distribution of intensity $N$. Once this is established the well-known eigenvalue splitting follows as a consequence. 

\paragraph{Acknowledgements}
JS was supported in part by NSF grant DMS-2510207 and received access to OpenAI products through the ChatGPT for Academic Researchers program. We thank Michael Aizenman for stimulating discussions.

\section{The heat kernel trace and the eigenvalue splitting}

The general strategy to obtain the eigenvalue splitting is to compare $\exp(-t H_\lambda)(d,d)$ with $\exp(-t H_\lambda)(d,-d)$:
\eq{
\exp(-t H_\lambda)(x,y) &= \sum_{j\geq0:E_j(\lambda)<\lambda^2c_\infty}\exp(-t E_j(\lambda))P_{\lambda,j}(x,y) + \int_{E\geq \lambda^2c_\infty}\exp(-t E)\dif{Q_{H_\lambda}(E)}(x,y)\qquad(t>0)
} where $P_{\lambda,j}$ is the spectral projection onto the $j$th discrete eigenvalue and $\dif{Q_{H_\lambda}(E)}$ is the projection-valued measure. Hence reflection symmetry implies $P_{\lambda,0}(d,-d)=P_{\lambda,0}(d,d)$ and $P_{\lambda,1}(d,-d)=-P_{\lambda,1}(d,d)$ so that
\eq{
\exp(-t H_\lambda)(d,\pm d) &= \exp(-t E_0(\lambda))P_{\lambda,0}(d,d)\pm\exp(-t E_1(\lambda))P_{\lambda,1}(d,d)+R_{\lambda}^t(d,\pm d)
} so that
\eq{
E_1(\lambda)-E_0(\lambda) = \lim_{t\to\infty}-\frac{1}{t}\log\br{\frac{\exp(-t H_\lambda)(d, d)-\exp(-t H_\lambda)(d,- d)}{\exp(-t H_\lambda)(d, d)+\exp(-t H_\lambda)(d,- d)}}\,.
}

It will actually be more convenient to study $\tr\br{\exp\br{-t H_\lambda}}$ which should anyway get its main "contribution" from $\exp(-t H_\lambda)(d, d)$. The heat kernel, however, may fail to be trace-class (due to the fact $H_\lambda$ may have continuous spectrum above $\lambda^2c_\infty$ which escapes to infinity) and hence we localize it. It is possible in principle to localize to some fixed $O(1)$ box which contains both minima comfortably, but instead we choose a duo of shrinking balls, one around each minimum. Hence let $r_\lambda := \lambda^{-1/4}$ and consider $A_{\lambda,\pm }:=B_{r_\lambda}(\pm d)$. We also have the total neighborhoud $A_\lambda := A_{\lambda,+}\cup A_{\lambda,-}$. We pick $\chi_\lambda\in C^{\infty}\left(\mathbb{R}^{n}\to\left[0,1\right]\right)$ such that $\supp\left(\chi_{\lambda}\right)\subseteq A_{\lambda}$, $\left.\chi_{\lambda}\right|_{B_{\frac{1}{2}r_{\lambda}}\left(d\right)\cup B_{\frac{1}{2}r_{\lambda}}\left(-d\right)}=1$ and $R\chi_{\lambda}=\chi_{\lambda}$. Then
\begin{malign}\label{eq:def of Z and Z_twist}
Z_\lambda(\beta)
&:= \tr(\chi_\lambda(X) \exp\br{-\beta H_\lambda}\chi_\lambda(X))\\
Z_\lambda^\curvearrowright(\beta)
&:= \tr(\chi_\lambda(X) \exp\br{-\beta H_\lambda}R\chi_\lambda(X))\,.
\end{malign} Via Feynman-Kac, these have path integral expansions of the form
\begin{malign}
\label{eq:Feynman-Kac expansion of traces of heat kernels}
Z_\lambda(\beta)
&=
\int_{x\in\mathbb{R}^{n}}
\chi_{\lambda}\left(x\right)^{2}
p_{\beta}\left(x,x\right)
\mathbb{E}_{\gamma\left(0\right)=x,\gamma\left(\beta\right)=x}
\left[
\exp\left(
-\lambda^{2}\int_{0}^{\beta}V\circ\gamma
\right)
\right]\dif{x}
\\
Z_\lambda^\curvearrowright(\beta)
&=
\int_{x\in\mathbb{R}^{n}}
\chi_{\lambda}\left(x\right)^{2}
p_{\beta}\left(x,-x\right)
\mathbb{E}_{\gamma\left(0\right)=x,\gamma\left(\beta\right)=-x}
\left[
\exp\left(
-\lambda^{2}\int_{0}^{\beta}V\circ\gamma
\right)
\right]\dif{x}.
\end{malign} with 
\eql{p_{\beta}\left(x,y\right)\equiv\left(4\pi \beta\right)^{-\frac{n}{2}}\exp\left(-\frac{1}{4\beta}\norm{x-y}^{2}\right)\qquad\left(x,y\in\mathbb{R}^{n},\beta>0\right)\,.} Here $\mathbb{E}_{\gamma\left(0\right)=x,\gamma\left(\beta\right)=y}
\left[\cdot\right]$ is the expectation w.r.t. the Brownian-bridge measure starting at $x$ and ending at $y$ at time $\beta$. The representation \cref{eq:Feynman-Kac expansion of traces of heat kernels} is useful for us over \cref{eq:def of Z and Z_twist} as now we can start conditioning geometrically on the Brownian path $\gamma$.

Indeed, within the context of the Brownian bridge path $\gamma$ appearing in \cref{eq:Feynman-Kac expansion of traces of heat kernels} we define a sequence of stopping times as follows. We know that $\gamma(0)\in A_{\lambda,\pm}$ and so let $\sigma_0$ be the parity in which $\gamma(0)$ starts. Define $\sigma_j:=(-1)^j\sigma_0$. Then $\tau_0:=0$ and
\eql{\tau_{j}\left(\gamma\right)	:=	\inf\left(\Set{t>\tau_{j-1}(\gamma)|\gamma\left(t\right)\in A_{\lambda,\sigma_{j}}}\right)\qquad\left(j\geq1\right)\,.} We then have also the total number of instantons up to time $\beta$ as 
\eql{
\mathcal{N}_{\beta}\left(\gamma\right)	:=	\sup\left(\Set{j|\tau_{j}(\gamma)\leq \beta}\right)\,.
} With these stopping times, we may then define the definite $k$-instanton partition function as 
\eql{
Z_{\lambda,k}(\beta) := \int_{x\in\mathbb{R}^{n}}\chi_{\lambda}\left(x\right)^{2}p_{\beta}\left(x,\left(-1\right)^{k}x\right)\mathbb{E}_{\gamma\left(0\right)=x,\gamma\left(\beta\right)=\left(-1\right)^{k}x}\left[\exp\left(-\lambda^{2}\int_{0}^{\beta}V\circ\gamma\right)\chi_{\Set{\mathcal{N}_{\beta}\left(\gamma\right)=k}}\right]\dif{x}\,.
}

An emergent parameter, which we call \emph{the hopping coefficient}, emerges via
\eq{
\frac{Z_{\lambda,1}(\beta)}{ Z_{\lambda,0}(\beta)} \sim \beta\abs{\rho_\lambda}\qquad(\beta\to\infty)
} i.e., $\abs{\rho_\lambda}$ will be the "weight" of one instanton event, and a key factorization result will exhibit then
\eq{
\frac{Z_{\lambda,k}(\beta)}{ Z_{\lambda,0}(\beta)} \sim \frac{\br{\beta\abs{\rho_\lambda}}^k}{k!}\,.
} 

This prompts us to define a probability measure on $\NN_{\geq0}$ given by
\eql{
\mathbb{P}_{\lambda,\beta}^{\text{instanton}}\left[\Set{k}\right] := \frac{Z_{\lambda,k}\left(\beta\right)}{Z_{\lambda}\left(\beta\right)+Z_{\lambda}^\curvearrowright \left(\beta\right)} = \frac{Z_{\lambda,k}\left(\beta\right)}{\sum_{\ell=0}^\infty Z_{\lambda,\ell}\left(\beta\right)}\,.
} A natural time-scale to pick then is $\beta_\lambda := \frac{N}{\abs{\rho_\lambda}}$ for some fixed $N>0$, which should make $\mathbb{P}_{\lambda,\beta_\lambda}^{\text{instanton}}$ converge, as $\lambda\to\infty$, to a probability distribution whose mean number of instantons is $N$.

Our main result is thus
\begin{thm}\label{thm:Poisson distribution}
    There exists some $\rho_{\lambda}<0$, $\log(\lambda)\abs{\rho_\lambda}\to0$, namely,
    \eql{\label{eq:hopping coefficient in terms of partition functions}
    \rho_\lambda := -\lim_{\beta\to\infty}\frac{Z_{\lambda,1}(\beta)}{\beta Z_{\lambda,0}(\beta)}
    }
    such that for any $N>0$ (independent of $\lambda$), \eql{\lim_{\lambda\to\infty}\mathbb{P}_{\lambda,\frac{N}{\abs{\rho_{\lambda}}}}^{\text{instanton}}\left[\Set{k}\right]	=	\ee^{-N}\frac{N^{k}}{k!}\qquad\left(k\in\mathbb{N}_{\geq0}\right)} so that instantons are distributed according to a Poisson distribution.
\end{thm}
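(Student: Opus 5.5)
The plan is to decompose each $k$-passage sector by the strong Markov property at the hitting times $\tau_1,\dots,\tau_k$ and compare it with a one-well (Dirichlet) semigroup picture. Set $D_+:=\RR^n\setminus\overline{A_{\lambda,-}}$ and $D_-:=-D_+$.

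\textbf{Single-passage kernel.} Between $\tau_{j-1}$ and $\tau_j$ the path lives in $D_{\sigma_{j-1}}$ and exits through $\partial A_{\lambda,\sigma_j}$. The Feynman--Kac weight of such a segment is the Dirichlet heat kernel of $H_\lambda$ on $D_{\sigma_{j-1}}$, followed by its normal-derivative (harmonic-measure) flux density on $\partial A_{\lambda,\sigma_j}$. I will spectrally decompose each Dirichlet semigroup as
\[
\ee^{-t e_{\lambda,0}}\vf_{\lambda,0}\otimes\overline{\vf_{\lambda,0}}+\calO\br{\ee^{-t(e_{\lambda,0}+c\lambda)}},
\]
using the gap assumption (5). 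By reflection symmetry both wells have the same $e_{\lambda,0}$.

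\textbf{Existence of $\rho_\lambda$.} For large $\beta$ this decomposition gives
\[
Z_{\lambda,0}(\beta)=\ee^{-\beta e_{\lambda,0}}\br{c_0+o(1)},\qquad Z_{\lambda,1}(\beta)=\beta\,\ee^{-\beta e_{\lambda,0}}\br{c_0\abs{\rho_\lambda}+o(1)}.
\]
The factor $\beta$ arises as the volume of the simplex of passage times, and the one-passage amplitude is exactly the Wronskian boundary flux written in the abstract. This yields the limit \eqref{eq:hopping coefficient in terms of partition functions}. Its sign and the bound $\log(\lambda)\abs{\rho_\lambda}\to0$ follow from the flux representation combined with the logarithmic ground-state estimate of assumption (5) and exit estimates. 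These show $\abs{\rho_\lambda}\le \ee^{-c\lambda}$, with the exponent in fact tied to $S(d,-d)>0$.

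\textbf{$k$-passage factorization.} For $\beta=N/\abs{\rho_\lambda}$, I iterate the strong Markov property to write $Z_{\lambda,k}(\beta)/Z_{\lambda,0}(\beta)$ as an integral over ordered times $0<t_1<\dots<t_k<\beta$. The integrand is a product of one-passage fluxes interleaved with Dirichlet semigroups acting on the gaps $t_{j+1}-t_j$. I then split the time simplex into two regions.

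\emph{Well-separated region.} Here every gap is at least $T_\lambda$, with $\lambda^{-1}\log\lambda\ll T_\lambda\ll \abs{\rho_\lambda}^{-1}$ (possible since $\abs{\rho_\lambda}$ is exponentially small). Each semigroup is then replaced by its rank-one ground-state projection up to relative error $\ee^{-c\lambda T_\lambda}$. The integrand collapses to $\abs{\rho_\lambda}^k(1+o(1))^k$, and the simplex volume is $\beta^k/k!$, giving the limit $N^k/k!$.

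\emph{Clustered region.} Here some gap is less than $T_\lambda$. Its volume is at most $k\,T_\lambda\,\beta^{k-1}/(k-1)!$. A short gap costs at most one extra passage weight bounded by $\calO(\abs{\rho_\lambda}\,\mathrm{poly}(\lambda))$ via the uniform Dirichlet kernel bound, so this region contributes $\calO\br{k^2 (CN)^{k-1}T_\lambda\abs{\rho_\lambda}\,\mathrm{poly}(\lambda)/(k-1)!}\to0$.

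\textbf{Summable bound and conclusion.} The same estimates give $Z_{\lambda,k}/Z_{\lambda,0}\le (CN)^k/k!$ uniformly in $\lambda$. By dominated convergence, $\sum_\ell Z_{\lambda,\ell}/Z_{\lambda,0}\to \ee^N$, and dividing by this sum yields the Poisson law.

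The main obstacle is the clustered estimate. I must bound, uniformly in $k$, paths making several passages within short time windows, where no spectral-gap relaxation is available. I would first try an a priori bound on the short-time passage kernel $\sup_{x}\int_0^{T}(\text{flux density})\,\dif t\lesssim \abs{\rho_\lambda}\ee^{o(\lambda)}$, from the logarithmic heat-kernel and first-exit estimates of the appendix. I would then choose $T_\lambda=\lambda^{-1}\log^2\lambda$, so that $T_\lambda\ee^{o(\lambda)}\abs{\rho_\lambda}$ still vanishes after summation against $N^{k}/k!$.
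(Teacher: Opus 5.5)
Your overall architecture (strong Markov decomposition at the hitting times, one-well Dirichlet semigroups, spectral-gap relaxation, separated versus clustered passages, dominated convergence) is the paper's, and your treatment of the existence and sign of $\rho_\lambda$ matches it. The gap is in how you control the passage factors.

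\textbf{The separated region.} You claim the integrand collapses to $\abs{\rho_\lambda}^k\br{1+o(1)}^k$. That is, each passage factor would equal its ground-state part $\abs{\rho_\lambda}m(y)$, with $m$ the normalized exit density, up to a relative error $\ee^{-c\lambda T_\lambda}$. The spectral gap controls the orthogonal part only in $L^2$. After smoothing to a pointwise bound on the boundary flux, and normalizing by the one-well ground states at the entry and exit spheres, you get only an \emph{absolute} bound $A_\lambda\ee^{-\delta_\lambda^D t}$. Here $A_\lambda$ is only known to be $\ee^{o(\lambda)}$: ratios of ground states on spheres of radius $r_\lambda=\lambda^{-1/4}$ can be of size $\ee^{c\sqrt\lambda}$. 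The ground-state part, by contrast, is $\abs{\rho_\lambda}=\ee^{-\lambda S(d,-d)+o(\lambda)}$. So the relative error $A_\lambda\ee^{-\delta_\lambda^D t}/\abs{\rho_\lambda}$ is small only when $\delta_\lambda^D t\gtrsim\lambda S(d,-d)$, i.e.\ for $t$ of order one, certainly not for $t\geq\lambda^{-1}\log^2\lambda$.

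Enlarging $T_\lambda$ to order one rescues the relative replacement, but moves the difficulty into your clustered region. Gaps of order one are then controlled neither by the short-time exit bound (the factor $\ee^{te_\lambda^D}$ is no longer subexponential) nor pointwise by $C\abs{\rho_\lambda}$.

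The paper avoids relative errors altogether. It writes the normalized passage kernel as $\chi_{\{t>\varepsilon_\lambda\}}\abs{\rho_\lambda}m_{\lambda,\sigma}(y)+\mathsf D_{\lambda,\sigma}$. It then shows the defect is small in absolute, time-integrated form: $\int_0^\infty d_\lambda\to0$, together with a first-moment bound and $\sup_t t\,\widetilde d_\lambda(t)\to0$. This only needs $\lambda\varepsilon_\lambda\gg\log A_\lambda$. Each defect factor then replaces a factor $\abs{\rho_\lambda}\beta=N$ by $o(1)$.

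\textbf{The clustered region and the uniform bound.} The same issue breaks your clustered and uniform-in-$k$ estimates. Bounding the clustered region by volume times a supremum assumes the remaining $k-1$ factors are pointwise $\lesssim\abs{\rho_\lambda}$ and the short one is $\calO\br{\abs{\rho_\lambda}\mathrm{poly}(\lambda)}$. Neither is available. A uniform Dirichlet kernel bound produces no factor $\abs{\rho_\lambda}$. The exit estimates give only $\ee^{o(\lambda)}\abs{\rho_\lambda}$, and only after time integration. And other gaps in that region may also be short.

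Once defects are integrated in time, a word with $j$ defect factors contributes roughly $\binom{k}{j}\frac{N^{k-j}}{(k-j)!}\eta_\lambda^{j}$, with no $1/j!$. So the bound $Z_{\lambda,k}/Z_{\lambda,0}\leq(CN)^k/k!$ does not follow from these estimates. It is also not to be expected, since a single tight cluster of $k$ passages decays only geometrically in $k$.

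What is needed is a geometric bound obtained by summing all defect words. The paper first glues the final residence interval to the first passage cyclically. This also handles the closing condition of the trace and the short initial and final intervals, which you do not address. It then bounds the resulting words by the coefficients of $\exp\br{Nz/(1-\eta_\lambda z)}$ plus two similar series, and evaluates at $z=2$ once $\eta_\lambda\leq\frac14$ to get $C_N2^{-k}$. The all-defect word, with total time pinned at $\beta$, is exactly where the pointwise estimate $\sup_t t\,\widetilde d_\lambda(t)\to0$ enters. That geometric bound is what makes the dominated-convergence step work, and your argument does not supply it.
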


Now we connect the eigenvalue splitting with the emergent quantity $\rho_\lambda$. The identification relies on the above theorem; once we explicitly define $\rho_\lambda$, we shall identify it as the hopping coefficient (see e.g. \cite{FeffermanShapiroWeinstein2022,Fefferman2025}) and thus there is an entirely independent route (via the Schur complement) to associate it with the eigenvalue splitting. The proof of the corollary below is yet another alternative.
\begin{cor}\label{cor:eigenvalue splitting controlled by rho}
    For the $\rho_\lambda$ defined above, we have
    \eql{
    \lim_{\lambda\to\infty}\frac{E_1(\lambda)-E_0(\lambda)}{2\abs{\rho_\lambda}} = 1\,.
    }
\end{cor}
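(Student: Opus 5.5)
The plan is to compare the even and odd passage sectors, using both the Poisson limit of \cref{thm:Poisson distribution} and the spectral decomposition of the localized traces. Since $\mathcal N_\beta$ has the parity of $k$ exactly when the bridge ends at $(-1)^kx$, we have
\eq{
Z_\lambda(\beta)=\sum_{k\text{ even}}Z_{\lambda,k}(\beta),\qquad Z^\curvearrowright_\lambda(\beta)=\sum_{k\text{ odd}}Z_{\lambda,k}(\beta)\,.
}
\cref{thm:Poisson distribution} applies at $\beta=N/\abs{\rho_\lambda}$. Its convergence is pointwise in $k$, but the probabilities sum to one, so Scheffé's lemma upgrades it to total variation convergence. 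Hence, for every fixed $N>0$,
\eq{
\lim_{\lambda\to\infty}\frac{Z_\lambda(\beta_\lambda)-Z_\lambda^\curvearrowright(\beta_\lambda)}{Z_\lambda(\beta_\lambda)+Z_\lambda^\curvearrowright(\beta_\lambda)}=\ee^{-N}\sum_k(-1)^k\frac{N^k}{k!}=\ee^{-2N}\,.
}

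On the spectral side, write $a_j:=\norm{\chi_\lambda\psi_j}^2$ with $\psi_j$ the normalized eigenfunctions. Then
\eq{
Z_\lambda\pm Z_\lambda^\curvearrowright=\sum_j \ee^{-\beta E_j}(1\pm r_j)a_j+\text{(continuum)},
}
where $r_j=\pm1$ is the parity, since $\tr(\chi\ee^{-\beta H}R\chi)=\sum_j \ee^{-\beta E_j}\langle\chi\psi_j,R\chi\psi_j\rangle$ and $R$ commutes with $\chi$ and $H_\lambda$. The $j=0$ term contributes only to the sum, and the $j=1$ term only to the difference. Therefore
\eq{
\frac{Z-Z^\curvearrowright}{Z+Z^\curvearrowright}=\frac{2a_1\ee^{-\beta E_1}+\mathcal E_-}{2a_0\ee^{-\beta E_0}+\mathcal E_+},
}
with $\mathcal E_\pm$ collecting $j\geq2$ and the continuum, each bounded by $\ee^{-\beta E_2}\tr(\chi^2\ee^{-(\beta-1)(H-E_2)}\ldots)$. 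These remainders are small because of the gap assumption $E_2-E_1\geq c_{\rm gap}\lambda$ and $\beta_\lambda\to\infty$. To control them I would bound $\tr(\chi_\lambda\ee^{-H}\chi_\lambda)$ polynomially in $\lambda$ using the Gaussian bound on $p_1(x,x)$ and $\abs{A_\lambda}\leq C$. This gives $\abs{\mathcal E_\pm}\leq C\lambda^{n/2}\ee^{-(\beta-1)E_2}$, which is superexponentially small relative to $\ee^{-\beta E_1}$ since $\beta_\lambda\lambda\gg\log\lambda$ (indeed $\abs{\rho_\lambda}\log\lambda\to0$).

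One must also show $a_0,a_1$ are bounded below (by $\lambda^{-C}$, say) and that $a_1/a_0\to1$. Here I would use that $\psi_0,\psi_1$ are approximately $(\vf^\Omega\pm\vf^{-\Omega})/\sqrt2$, which follows from the one-well gap assumptions and a standard Helffer--Sjöstrand-type comparison. I expect establishing $a_1/a_0\to1$ to be the main obstacle. If it is hard, an alternative avoids it entirely: compare two values $N_1,N_2$. The ratio of the differences at $\beta=N_i/\abs{\rho_\lambda}$, divided by the ratio of the sums, equals $\ee^{-(N_2-N_1)(E_1-E_0)/\abs{\rho_\lambda}}(1+o(1))$, in which $a_1/a_0$ cancels. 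The Poisson side gives $\ee^{-2(N_2-N_1)}$.

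Taking logs, $\frac{E_1-E_0}{\abs{\rho_\lambda}}(N_2-N_1)\to2(N_2-N_1)$, which yields the corollary. This two-time trick requires only that the remainders be negligible and that $a_0,a_1>0$ (so that the logarithms make sense), not their precise values.
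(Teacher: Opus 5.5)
Your argument is essentially the paper's proof. The paper likewise combines the Poisson limit with Scheff\'e's lemma to evaluate the alternating sum, splits $Z_\lambda\pm Z_\lambda^{\curvearrowright}$ spectrally into the even and odd sectors with the $E_2$-remainder controlled by the gap, and cancels the unknown ratio $a_{\lambda,1}/a_{\lambda,0}$ by comparing two times, which is exactly your fallback (the paper uses $\beta=N/(2\abs{\rho_\lambda})$ with $N=1,2$).

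Two small points. First, take the smoothing time $\tau_\lambda=\lambda^{-2}$ rather than $1$, as the paper does. Then $\ee^{\tau_\lambda E_2}\leq \ee^{c_\infty}$, and your condition $\beta_\lambda\lambda\gg\log\lambda$ really suffices; with $\tau=1$ the relative remainder carries a factor $\ee^{E_2}$, which under the stated hypotheses could be as large as $\ee^{c_\infty\lambda^2}$. Second, the two-time trick removes the need for $a_{\lambda,1}/a_{\lambda,0}\to1$, but it still needs your quantitative lower bound on $a_{\lambda,j}$, not mere positivity as your last sentence says.
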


\begin{proof}
The parity of the endpoint implies the exact decompositions
\eq{
Z_\lambda(\beta)
&=
\sum_{m=0}^{\infty}Z_{\lambda,2m}(\beta),\\
Z_\lambda^{\curvearrowright}(\beta)
&=
\sum_{m=0}^{\infty}Z_{\lambda,2m+1}(\beta).
}
Hence, upon setting
\eq{
\beta_{\lambda,N}:=\frac{N}{2\abs{\rho_\lambda}},
}
we obtain
\eq{
A_\lambda(N)
&:=
\frac{
Z_\lambda(\beta_{\lambda,N})
-
Z_\lambda^{\curvearrowright}(\beta_{\lambda,N})
}{
Z_\lambda(\beta_{\lambda,N})
+
Z_\lambda^{\curvearrowright}(\beta_{\lambda,N})
}\\
&=
\sum_{k=0}^{\infty}
(-1)^k
\mathbb{P}_{\lambda,\beta_{\lambda,N}}^{\mathrm{instanton}}
\left[\Set{k}\right].
}
Applying the theorem, we have
\eq{
\lim_{\lambda\to\infty}
\mathbb{P}_{\lambda,\beta_{\lambda,N}}^{\mathrm{instanton}}
\left[\Set{k}\right]
=
\ee^{-N/2}\frac{(N/2)^k}{k!}.
}
Since the limiting Poisson masses sum to one, this pointwise
convergence implies convergence in total variation (discrete Scheffé lemma). It follows that
\eq{
\lim_{\lambda\to\infty}A_\lambda(N)
&=
\sum_{k=0}^{\infty}
(-1)^k\ee^{-N/2}\frac{(N/2)^k}{k!}\\
&=
\ee^{-N/2}
\sum_{k=0}^{\infty}\frac{(-N/2)^k}{k!}
=
\ee^{-N}.
}

Let
\eq{
\Pi_\pm:=\frac{\Id\pm R}{2}
}
be the projections onto the even and odd subspaces, respectively.
Since $H_\lambda$ and $\chi_\lambda$ commute with $R$, we have
\eq{
Z_\lambda(\beta)\pm Z_\lambda^{\curvearrowright}(\beta)
=
2\tr\br{
\chi_\lambda
\exp\br{-\beta H_\lambda}
\Pi_\pm
\chi_\lambda
}.
}
The range of $P_{\lambda,0}$ lies in the even subspace, whereas the
range of $P_{\lambda,1}$ lies in the odd subspace. Set
\eq{
a_{\lambda,j}
:=
\tr\br{
\chi_\lambda P_{\lambda,j}\chi_\lambda
},
\qquad j\in\Set{0,1}\,,\,\qquad \calR_{\lambda,\pm}(\beta) := \tr\br{\chi_\lambda(X)\exp\br{-\beta H_\lambda}\Pi_\pm \chi_{[E_2(\lambda),\infty)}(H_\lambda) \chi_\lambda(X)}\,.
} 
Hence
\eq{
A_\lambda(N) = \frac{a_{\lambda,1}
\exp\br{
-N\frac{E_1(\lambda)}
        {2\abs{\rho_\lambda}}
}+\calR_{\lambda,-}(\beta_{\lambda,N})}{a_{\lambda,0}\exp\br{
-N\frac{E_0(\lambda)}
        {2\abs{\rho_\lambda}}
}+\calR_{\lambda,+}(\beta_{\lambda,N})}
}

Now choose e.g.
$\tau_\lambda:=\frac{1}{\lambda^2}$.
Using the spectral theorem, we obtain
\eq{
\calR_{\lambda,\pm}(\beta)
\leq
\exp\br{
-\br{\beta-\tau_\lambda}E_2(\lambda)
}
\tr\br{
\chi_\lambda
\exp\br{-\tau_\lambda H_\lambda}
\chi_\lambda
}.
}
Dividing by the corresponding leading term gives, for
$j\in\Set{0,1}$,
\eq{
\frac{
\calR_{\lambda,\pm}(\beta)
}{
a_{\lambda,j}\exp\br{-\beta E_j(\lambda)}
}
\leq
\frac{
\exp\br{\tau_\lambda E_2(\lambda)}
\tr\br{
\chi_\lambda
\exp\br{-\tau_\lambda H_\lambda}
\chi_\lambda
}
}{
a_{\lambda,j}
}
\exp\br{
-\beta\br{E_2(\lambda)-E_j(\lambda)}
}.
}
At $\beta=\beta_{\lambda,N}$, the relevant suppression factor satisfies
\eq{
\exp\br{
-\beta_{\lambda,N}
\br{E_2(\lambda)-E_1(\lambda)}
}
\leq
\exp\br{
-\frac{N \lambda c_{\rm gap}}{2\abs{\rho_\lambda}}
}.
}
Since $\log(\lambda)\abs{\rho_\lambda}\to0$, this factor tends to zero. To conclude that
the entire right-hand side tends to zero, however, we must also
control the prefactor. By the Feynman--Kac formula and the non-negativity of $V$,
\eq{
\tr\br{
\chi_\lambda
\exp\br{-H_\lambda/\lambda^2}
\chi_\lambda
}
\leq
\br{\frac{\lambda^2}{4\pi}}^{n/2}
\int_{x\in\RR^n}\chi_\lambda(x)^2\dif{x}.
}
Because $\chi_\lambda$ is supported in two balls of radius
$\lambda^{-1/4}$,
\eq{
\int_{x\in\RR^n}\chi_\lambda(x)^2\dif{x}
=
\Ord{\lambda^{-n/4}},
}
and hence
\eq{
\tr\br{
\chi_\lambda
\exp\br{-H_\lambda/\lambda^2}
\chi_\lambda
}
=
\Ord{\lambda^{3n/4}}.
}
We have $E_2(\lambda)\leq\lambda^2 c_{\infty}$ and
$a_{\lambda,j}\gtrsim1-\calO(\lambda^{-1/2})$ by basic energy estimates. The prefactor is therefore at most
polynomial in $\lambda$. Thus, for some $C,M>0$,
\eq{
\frac{
\calR_{\lambda,\pm}(\beta_{\lambda,N})
}{
a_{\lambda,j}
\exp\br{-\beta_{\lambda,N}E_j(\lambda)}
}
\leq
C\lambda^M
\exp\br{
-\frac{N \lambda c_{\rm gap}}{2\abs{\rho_\lambda}}
}
\longrightarrow0\,.
}

We conclude that for every fixed
$N>0$,
\eq{
A_\lambda(N)
=
\frac{a_{\lambda,1}}{a_{\lambda,0}}
\exp\br{
-N\frac{E_1(\lambda)-E_0(\lambda)}
        {2\abs{\rho_\lambda}}
}
\br{1+o(1)}.
}

Applying this asymptotic with $N=1$ and $N=2$ and taking
their ratio eliminates the localization factors:
\eq{
\frac{A_\lambda(2)}{A_\lambda(1)}
=
\exp\br{
-\frac{E_1(\lambda)-E_0(\lambda)}
       {2\abs{\rho_\lambda}}
}
\br{1+o(1)}.
}
On the other hand, the Poisson limit gives
\eq{
\lim_{\lambda\to\infty}
\frac{A_\lambda(2)}{A_\lambda(1)}
=
\frac{\ee^{-2}}{\ee^{-1}}
=
\ee^{-1}.
}
Taking logarithms therefore yields
\eq{
\lim_{\lambda\to\infty}
\frac{E_1(\lambda)-E_0(\lambda)}
     {2\abs{\rho_\lambda}}
=
1\,.
}
\end{proof}
\begin{lem}
    The limit in the definition of $\rho_\lambda$ from \cref{eq:hopping coefficient in terms of partition functions} exists and equals
    \eql{\label{eq:flux equation for hopping coeffivient}
\rho_\lambda
=
\int_{y\in\partial\Omega_\lambda}
\nu\cdot\br{\nabla\vf_{\lambda,0}^{\Omega_\lambda}}
\vf_{\lambda,0}^{-\Omega_\lambda}
\dif{y}
} where $\Omega_\lambda := \RR^n\setminus A_{\lambda,-}$; when
$n=1$, $\Omega_\lambda$ denotes the connected component containing
$d$; correspondingly, $-\Omega_\lambda$ denotes its reflection.  $\nu:\partial \Omega_\lambda \to \RR^n$ is the unit normal pointing out of $\Omega_\lambda$.
\end{lem}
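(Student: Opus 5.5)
The plan is to compute $Z_{\lambda,1}(\beta)$ exactly by applying the strong Markov property at the single passage time $\tau_1$, and to express both $Z_{\lambda,1}$ and $Z_{\lambda,0}$ through Dirichlet heat kernels of the one-well operators $h_\lambda^{\pm\Omega_\lambda}$. The large-$\beta$ behaviour should then follow from their spectral expansions, using the one-well gap $e^{D_\lambda}_{\lambda,1}-e^{D_\lambda}_{\lambda,0}\geq c\lambda$ from the standing assumptions.

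\emph{Zero-passage sector.} Consider a path with $\gamma(0)=x\in A_{\lambda,+}$ and $\mathcal N_\beta(\gamma)=0$. Such a path never enters $A_{\lambda,-}$, i.e.\ it stays in $\Omega_\lambda$. By Feynman--Kac for the Dirichlet problem, and using the reflection symmetry for starting points in $A_{\lambda,-}$, I would show
\begin{align*}
Z_{\lambda,0}(\beta)=2\int\chi_\lambda(x)^2\,\mathbf 1_{A_{\lambda,+}}(x)\,\exp\br{-\beta h_\lambda^{\Omega_\lambda}}(x,x)\dif{x}.
\end{align*}
Expanding the Dirichlet kernel in eigenfunctions gives
\begin{align*}
Z_{\lambda,0}(\beta)=2\ee^{-\beta e_{\lambda,0}}\br{c_\lambda+\Ord{\ee^{-\beta(e_{\lambda,1}-e_{\lambda,0})}}},\qquad c_\lambda:=\int_{A_{\lambda,+}}\chi_\lambda^2\br{\vf_{\lambda,0}^{\Omega_\lambda}}^2,
\end{align*}
where $e_{\lambda,j}:=e_{\lambda,j}^{\Omega_\lambda}=e_{\lambda,j}^{-\Omega_\lambda}$ by symmetry. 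The error term is made rigorous by splitting off a short time $\ee^{-\tau h}$ so that the remainder is trace class.

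\emph{One-passage sector.} Here the path runs inside $\Omega_\lambda$ until $\tau_1$, when it first hits $\partial\Omega_\lambda=\partial A_{\lambda,-}$ at some point $y$ at time $s$. After that it stays inside $-\Omega_\lambda$, since a second passage would require entering $A_{\lambda,+}$, and it ends at $-x$. The joint density of $(\tau_1,\gamma(\tau_1))$ weighted by the potential is the Dirichlet boundary flux $-\nu_y\cdot\nabla_y\exp\br{-s h_\lambda^{\Omega_\lambda}}(x,y)$ (with the paper's $-\Delta$ normalization, up to the diffusion constant, which cancels consistently). This requires boundary regularity of the Dirichlet kernel, which holds because $\partial\Omega_\lambda$ is a sphere and $V\in C^2$. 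The strong Markov property then yields
\begin{align*}
Z_{\lambda,1}(\beta)=2\int_0^\beta\!\!\dif{s}\int_{\partial\Omega_\lambda}\!\!\dif{y}\int\chi_\lambda(x)^2\mathbf 1_{A_{\lambda,+}}(x)\br{-\nu\cdot\nabla_y\ee^{-sh_\lambda^{\Omega_\lambda}}(x,y)}\ee^{-(\beta-s)h_\lambda^{-\Omega_\lambda}}(y,-x)\dif{x}.
\end{align*}
Now insert the ground-state projections and use $\vf_{\lambda,0}^{-\Omega_\lambda}(-x)=\vf_{\lambda,0}^{\Omega_\lambda}(x)$. Since $\ee^{-se_{\lambda,0}}\ee^{-(\beta-s)e_{\lambda,0}}=\ee^{-\beta e_{\lambda,0}}$, the leading term is
\begin{align*}
2\beta\,\ee^{-\beta e_{\lambda,0}}\,c_\lambda\int_{\partial\Omega_\lambda}\br{-\nu\cdot\nabla\vf_{\lambda,0}^{\Omega_\lambda}}\vf_{\lambda,0}^{-\Omega_\lambda}\dif{y}.
\end{align*}
Dividing by $\beta Z_{\lambda,0}(\beta)$ and letting $\beta\to\infty$ gives exactly the flux formula \cref{eq:flux equation for hopping coeffivient}, with the minus sign in the definition of $\rho_\lambda$ cancelling the sign above. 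Negativity of $\rho_\lambda$ follows because $\vf_{\lambda,0}^{\Omega_\lambda}>0$ inside, vanishes on the boundary, and so has $\nu\cdot\nabla\vf_{\lambda,0}^{\Omega_\lambda}<0$ by Hopf's lemma, while $\vf_{\lambda,0}^{-\Omega_\lambda}>0$ on $\partial\Omega_\lambda$. For $n=1$ the same argument runs on the connected component containing $d$.

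\emph{Error terms.} The cross terms, in which at least one factor is replaced by its excited part, are bounded by
\begin{align*}
\ee^{-\beta e_{\lambda,0}}\int_0^\beta\br{\ee^{-s(e_{\lambda,1}-e_{\lambda,0})}+\ee^{-(\beta-s)(e_{\lambda,1}-e_{\lambda,0})}}\dif{s}=\Ord{\ee^{-\beta e_{\lambda,0}}}.
\end{align*}
This is $o(\beta)$ relative to the main term, provided the remainder kernels, including normal derivatives at the boundary, are controlled uniformly.

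\emph{Main obstacle.} I expect the hardest step to be this uniform control. Pointwise eigenfunction expansions of $\nu\cdot\nabla_y\ee^{-sh}(x,y)$ do not converge uniformly as $s\to0$. I would therefore split $s$ into $[0,1]$, $[1,\beta-1]$ and $[\beta-1,\beta]$. On the two edge intervals, positivity and the sub-Markov bound $\int_0^1\int_{\partial\Omega}(\text{flux})\leq1$ bound the contribution by $\Ord{\ee^{-\beta e_{\lambda,0}}}$ times a $\lambda$-dependent constant, which is harmless because $\lambda$ is fixed while $\beta\to\infty$. On the middle interval I would write $\ee^{-sh}=\ee^{-h/2}\ee^{-(s-1)h}\ee^{-h/2}$. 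Elliptic boundary regularity then makes $\nu\cdot\nabla_y\ee^{-h/2}(\cdot,y)$ an $L^2$-valued function, bounded uniformly on $\partial\Omega_\lambda$. The spectral gap applied to the middle factor gives the required exponential decay, and Cauchy--Schwarz closes the estimate.
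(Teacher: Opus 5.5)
Your proposal is correct and follows essentially the same route as the paper's proof. It has the same components: the Dirichlet representation $Z_{\lambda,0}=2\tr\br{\chi_{\lambda,+}\ee^{-\beta h_\lambda^{\Omega_\lambda}}\chi_{\lambda,+}}$ with a short-time split for the remainder, the strong Markov decomposition of $Z_{\lambda,1}$ at $\tau_1$ through the boundary flux kernel, the $t$-independent ground-state product that produces the factor $\beta$, and error control that separates endpoint times from the bulk (where the paper likewise sandwiches $\ee^{-(t-2)h}Q$ between fixed-time smoothing factors and uses the one-well gap and Cauchy--Schwarz), followed by Hopf's lemma for the sign.

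Two minor points. First, with the paper's generator $\Delta$ the weighted exit density is exactly $-\nu\cdot\nabla_yK^{\Omega_\lambda}_{\lambda,t}(x,y)$, so there is no diffusion constant that needs to cancel. Second, on the late edge $[\beta-1,\beta]$ the short-time factor is the heat kernel, not the flux, so the bound $\int_0^1\int_{\partial\Omega}(\text{flux})\leq1$ does not apply there. The $\ee^{-\beta e_{\lambda,0}}$ bound on that edge instead comes from the Markov property at time $\beta-1$ together with the $L^2$ spectral bound, or from the same smoothing you use in the bulk.
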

\begin{proof}
Let $\nu_\lambda$ denote the outward unit normal to
$\Omega_\lambda$ on $\partial\Omega_\lambda$, and set
\eq{
-\Omega_\lambda
&:=
\Set{-x\mid x\in\Omega_\lambda}
=
\RR^n\setminus A_{\lambda,+},\\
\chi_{\lambda,\pm}
&:=
\chi_\lambda\chi_{A_{\lambda,\pm}}.
}
By reflection symmetry, the Dirichlet operators
$h_\lambda^{\Omega_\lambda}$ and $h_\lambda^{-\Omega_\lambda}$ are
unitarily equivalent. Thus their ground-state energies coincide; write
\eq{
e_\lambda^D
:=
e_{\lambda,0}^{\Omega_\lambda}
=
e_{\lambda,0}^{-\Omega_\lambda}.
}
Their positive normalized ground states may be chosen so that
\eq{
R\vf_{\lambda,0}^{\Omega_\lambda}
=
\vf_{\lambda,0}^{-\Omega_\lambda},
\qquad
\vf_{\lambda,0}^{-\Omega_\lambda}(-x)
=
\vf_{\lambda,0}^{\Omega_\lambda}(x).
}
Set
\eq{
a_\lambda
:=
\norm{
\chi_{\lambda,+}
\vf_{\lambda,0}^{\Omega_\lambda}
}^2
=
\norm{
\chi_{\lambda,-}
\vf_{\lambda,0}^{-\Omega_\lambda}
}^2.
}

A path contributing to $Z_{\lambda,0}(\beta)$ is killed when it enters
the opposite well. Consequently, we may replace $H_\lambda$ with $h_\lambda^{\Omega_\lambda}$ in the heat kernel to get
\eq{
Z_{\lambda,0}(\beta)
&=
\tr\br{
\chi_{\lambda,+}
\exp\br{-\beta h_\lambda^{\Omega_\lambda}}
\chi_{\lambda,+}
}+
\tr\br{
\chi_{\lambda,-}
\exp\br{-\beta h_\lambda^{-\Omega_\lambda}}
\chi_{\lambda,-}
} \\ 
&= 2\tr\br{
\chi_{\lambda,+}
\exp\br{-\beta h_\lambda^{\Omega_\lambda}}
\chi_{\lambda,+}
}
} where the last equality is by reflection symmetry.

Let
\eq{
P_{\lambda,0}^{\Omega_\lambda}
&:=
\vf_{\lambda,0}^{\Omega_\lambda}
\otimes
\br{\vf_{\lambda,0}^{\Omega_\lambda}}^*,
\qquad
Q_{\lambda}^{\Omega_\lambda}
:=
\Id-P_{\lambda,0}^{\Omega_\lambda},\\
\delta_\lambda^D
&:=
e_{\lambda,1}^{\Omega_\lambda}
-
e_{\lambda,0}^{\Omega_\lambda}.
}
By the assumed one-well spectral-gap estimate, there exists
$c_{\rm gap}^D>0$ such that
\eq{
\delta_\lambda^D
\geq
c_{\rm gap}^D\lambda
}
for all sufficiently large $\lambda$.
Resolving 
$\Id=P_{\lambda,0}^{\Omega_\lambda}
+Q_{\lambda}^{\Omega_\lambda}$, we obtain the exact decomposition
\eql{\label{eq:zero passage spectral decomposition}
Z_{\lambda,0}(\beta)
=
2a_\lambda\ee^{-\beta e_\lambda^D}
+
2\calR_{\lambda,0}(\beta),
}
where
\eq{
\calR_{\lambda,0}(\beta)
:=
\tr\br{
\chi_{\lambda,+}
\exp\br{-\beta h_\lambda^{\Omega_\lambda}}
Q_{\lambda}^{\Omega_\lambda}
\chi_{\lambda,+}
}.
}

To estimate the remainder, fix
$\tau_\lambda:=\lambda^{-1}$ and suppose that
$\beta>\tau_\lambda$. Since
$Q_{\lambda}^{\Omega_\lambda}$ commutes with
$h_\lambda^{\Omega_\lambda}$ and the spectrum of
$h_\lambda^{\Omega_\lambda}$ on
$\im Q_{\lambda}^{\Omega_\lambda}$ is contained in
$[e_{\lambda,1}^{\Omega_\lambda},\infty)$, the spectral theorem gives
\eq{
\calR_{\lambda,0}(\beta)
&\leq
\ee^{-\br{\beta-\tau_\lambda}
e_{\lambda,1}^{\Omega_\lambda}}
\tr\br{
\chi_{\lambda,+}
\exp\br{-\tau_\lambda h_\lambda^{\Omega_\lambda}}
\chi_{\lambda,+}
}.
}
Consequently,
\eq{
\frac{
\calR_{\lambda,0}(\beta)
}{
a_\lambda\ee^{-\beta e_\lambda^D}
}
&\leq
C_\lambda
\ee^{-\beta\delta_\lambda^D}\\
&\leq
C_\lambda
\ee^{-c_{\rm gap}^D\lambda\beta},
}
where
\eq{
C_\lambda
:=
\frac{
\ee^{\tau_\lambda e_{\lambda,1}^{\Omega_\lambda}}
}{
a_\lambda
}
\tr\br{
\chi_{\lambda,+}
\exp\br{-\tau_\lambda h_\lambda^{\Omega_\lambda}}
\chi_{\lambda,+}
}
<
\infty.
}
Moreover, the one-well localization assumption gives
$a_\lambda\geq\exp\br{-o(\lambda)}$, while Feynman--Kac domination by
the free heat kernel and
$e_{\lambda,1}^{\Omega_\lambda}<\lambda^2c_\infty$ give
$C_\lambda\leq\exp\br{O(\lambda)}$. Consequently, using
\cref{eq:hopping-exponent}, for every fixed $N>0$,
\eq{
C_\lambda
\ee^{-c_{\rm gap}^D\lambda N/\abs{\rho_\lambda}}
\longrightarrow0,
}
so the relative remainder in \cref{eq:zero passage flux asymptotic}
is $o(1)$ at $\beta=N/\abs{\rho_\lambda}$.
Thus, as $\beta\to\infty$ with $\lambda$ fixed,
\eql{\label{eq:zero passage flux asymptotic}
Z_{\lambda,0}(\beta)
=
2a_\lambda\ee^{-\beta e_\lambda^D}
\br{
1+
\calO_\lambda\br{
\ee^{-c_{\rm gap}^D\lambda\beta}
}
}.
}

We next decompose a one-passage path at its first entrance into the
opposite well. Denote the Dirichlet heat kernels by
\eq{
K_{\lambda,t}^{\Omega_\lambda}(x,y)
&:=
\exp\br{-t h_\lambda^{\Omega_\lambda}}(x,y),\\
K_{\lambda,t}^{-\Omega_\lambda}(x,y)
&:=
\exp\br{-t h_\lambda^{-\Omega_\lambda}}(x,y).
}
The strong Markov property and the Dirichlet heat-kernel flux formula
give
\begin{malign}
\label{eq:one passage flux convolution}
Z_{\lambda,1}(\beta)
&=
2\int_{x\in A_{\lambda,+}}
\chi_{\lambda,+}(x)^2
\int_{y\in\partial\Omega_\lambda}
\int_{t=0}^\beta
\br{
-\nu_\lambda(y)\cdot
\nabla_yK_{\lambda,t}^{\Omega_\lambda}(x,y)
}
\\
&\qquad\qquad\times
K_{\lambda,\beta-t}^{-\Omega_\lambda}(y,-x)
\dif{t}\dif{y}\dif{x}.
\end{malign}
The factor $2$ accounts for the reflected passage from
$A_{\lambda,-}$ to $A_{\lambda,+}$.

Insert the spectral decompositions of the two heat kernels into
\cref{eq:one passage flux convolution}. The product of their
ground-state contributions is
\eq{
\ee^{-t e_\lambda^D}
\ee^{-(\beta-t)e_\lambda^D}
=
\ee^{-\beta e_\lambda^D},
}
which is independent of $t$. Its integral over $[0,\beta]$ therefore
produces a factor $\beta$. Reflection symmetry also gives
\eq{
\int_{x\in A_{\lambda,+}}
\chi_{\lambda,+}(x)^2
\vf_{\lambda,0}^{\Omega_\lambda}(x)
\vf_{\lambda,0}^{-\Omega_\lambda}(-x)
\dif{x}
=
a_\lambda.
}

The corresponding ground-state contribution to
\cref{eq:one passage flux convolution} is $2a_\lambda\beta\ee^{-\beta e_\lambda^D} q_\lambda
$ where \eq{q_\lambda\equiv
-\int_{y\in\partial\Omega_\lambda}
\nu_\lambda(y)\cdot
\br{\nabla\vf_{\lambda,0}^{\Omega_\lambda}}(y)
\vf_{\lambda,0}^{-\Omega_\lambda}(y)
\dif{y}} is precisely minus the RHS of \cref{eq:flux equation for hopping coeffivient}.
Every remaining contribution contains the orthogonal complement of the
ground state in at least one of the two heat kernels. The one-well
spectral gap gives
\eq{
\norm{
\exp\br{-t h_\lambda^{\Omega_\lambda}}
Q_\lambda^{\Omega_\lambda}
}
\leq
\ee^{-t\br{e_\lambda^D+\delta_\lambda^D}},
}
and, by reflection symmetry, the analogous estimate holds on
$-\Omega_\lambda$. Moreover, $\chi_{\lambda,+}$ is smooth and compactly
supported at positive distance from $\partial\Omega_\lambda$. Splitting
the $t$-integral into the two endpoint regions and the intervening
region, the off-diagonal boundary heat-kernel estimates control the
endpoint regions, while the preceding spectral-gap estimate controls
the intervening region. Hence, for fixed $\lambda$, the sum of all
terms containing at least one orthogonal factor is
$\calO_\lambda\br{\ee^{-\beta e_\lambda^D}}$. We present the full details of these error estimates in \cref{app:one-passage-error-estimates} below.

We have therefore proved
\eql{\label{eq:one passage flux asymptotic}
Z_{\lambda,1}(\beta)
=
2a_\lambda\ee^{-\beta e_\lambda^D}
\br{
\beta q_\lambda
+\calO_\lambda(1)
}.
}

Dividing \cref{eq:one passage flux asymptotic} by
$\beta$ times \cref{eq:zero passage flux asymptotic} yields
\eq{
\rho_\lambda\equiv -\lim_{\beta\to\infty}
\frac{Z_{\lambda,1}(\beta)}
{\beta Z_{\lambda,0}(\beta)}
= -q_\lambda
}
Thus the limit in
\cref{eq:hopping coefficient in terms of partition functions} exists
and equals the asserted boundary flux.
Finally, the ground states are strictly positive on the component of $\pm\Omega_\lambda$ containing $\pm d$ and the Hopf
boundary-point lemma gives
\eq{
-\nu_\lambda\cdot
\nabla\vf_{\lambda,0}^{\Omega_\lambda}
>
0
\qquad\text{on }\partial\Omega_\lambda,
}
so $\rho_\lambda<0$.
\end{proof}

\section{Logarithmic asymptotics for the hopping coefficient}

Although much of this section could be carried out for a general domain
$\Omega$, for simplicity we state and prove only the results needed for
the particular domains $\Omega_\lambda$ and $-\Omega_\lambda$ chosen in
the preceding section. When $n=1$, $\Omega_\lambda$ below means the
connected component of $\RR\setminus A_{\lambda,-}$ containing $d$, and
$-\Omega_\lambda$ means its reflection. This convention does not alter any
of the stopped paths in the preceding section.

Since $\vf_{\lambda,0}^{\Omega}\equiv0$ on
$\partial\Omega$ we see that $\rho_\lambda$ from the previous section agrees with the more general definition of the hopping coefficient, appearing e.g. in \cite[Eq.~(3.1)]{Fefferman2025} (and see references
therein):
\eql{
\rho_\lambda^\Omega
=\int_{\partial\Omega}
\left(
\br{\nabla\overline{\vf_{\lambda,0}^\Omega}}
\vf_{\lambda,0}^{-\Omega}
-
\overline{\vf_{\lambda,0}^\Omega}
\nabla\vf_{\lambda,0}^{-\Omega}
\right)\cdot\nu
\label{eq:hopping coefficient}
}

In this section, with the special choice of $\Omega=\Omega_\lambda$ from the preceding section, we determine the exponential asymptotics of $\rho_\lambda$.

The appendix proves the logarithmic path estimates needed below directly
for these exterior-ball domains. Under the one-well assumptions above,
\cref{lem:exterior-ball-path-estimates}(ii) gives, for every fixed compact
set $K\subseteq\RR^n$,
\eq{
\sup_{\substack{x\in K\cap D_\lambda\\
\dist(x,\partial D_\lambda)\geq r_\lambda}}
\left|
-\frac1\lambda\log\vf_{\lambda,0}^{D_\lambda}(x)
-S_{D_\lambda}(q,x)
\right|
\longrightarrow0,
}
uniformly for
\eq{
(D_\lambda,q)
\in
\Set{
(\Omega_\lambda,d),
(-\Omega_\lambda,-d)
}.
}
The corresponding upper bound remains valid up to the boundary, and the
same lemma gives the polynomial $L^\infty$ bound on the one-well ground
state. In the proof below we use these estimates through the integrated
first-exit estimate in
\cref{lem:exterior-ball-path-estimates}(iii).

\begin{lem}[Hopping exponent]
\label{lem:hopping-exponent}
We have
\eql{
\lim_{\lambda\to\infty}
-\frac1\lambda\log\abs{\rho_\lambda}
=
S(d,-d).
\label{eq:hopping-exponent}
}
\end{lem}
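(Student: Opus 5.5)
We will prove matching upper and lower bounds on $\abs{\rho_\lambda}=q_\lambda$. The starting point is the flux formula \cref{eq:flux equation for hopping coeffivient}, which writes $q_\lambda$ as $-\int_{\partial\Omega_\lambda}\nu\cdot\nabla\vf_{\lambda,0}^{\Omega_\lambda}\,\vf_{\lambda,0}^{-\Omega_\lambda}$. Here $\partial\Omega_\lambda=\partial A_{\lambda,-}$ is the sphere of radius $r_\lambda=\lambda^{-1/4}$ about $-d$. On this sphere the second factor $\vf_{\lambda,0}^{-\Omega_\lambda}$ is evaluated inside its own well, at distance $r_\lambda\ll\lambda^{-1/2+\eta}$ for $\eta$ close to $1/2$. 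The one-well assumption (item 5) therefore gives $\vf_{\lambda,0}^{-\Omega_\lambda}=\exp(o(\lambda))$ uniformly there. It follows that $-\frac1\lambda\log q_\lambda$ is governed entirely by the integrated normal derivative (exit flux) of the ground state $\vf_{\lambda,0}^{\Omega_\lambda}$ through $\partial A_{\lambda,-}$.

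\textbf{Probabilistic representation of the flux.} We rewrite the flux as an eigenfunction identity. Integrating the eigenvalue equation against the heat semigroup, the flux $-\int_{\partial\Omega_\lambda}\nu\cdot\nabla\vf_{\lambda,0}^{\Omega_\lambda}\,g$ is, up to the factor $\ee^{t e^D_\lambda}$, the Feynman--Kac weight of Brownian paths started near $d$ that first exit $\Omega_\lambda$ (that is, hit $A_{\lambda,-}$) before time $t$, with the boundary test function $g$ evaluated at the exit point. This is exactly the integrated first-exit estimate \cref{lem:exterior-ball-path-estimates}(iii), which we take to state that
\[
-\tfrac1\lambda\log\Big(\text{exit flux of }\vf_{\lambda,0}^{\Omega_\lambda}\text{ through }\partial A_{\lambda,-}\Big)\longrightarrow S_{\Omega_\lambda}(d,\partial A_{\lambda,-}).
\]
Alternatively, we can compare directly with the interior asymptotic \cref{lem:exterior-ball-path-estimates}(ii) at distance $r_\lambda$ from the boundary, combined with the polynomial $L^\infty$ bound and interior elliptic (gradient) estimates on balls of radius $\sim\lambda^{-1}$. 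This gives $\abs{\nabla\vf}\le \lambda^{O(1)}\sup\vf$ near the boundary, which yields the upper bound. For the lower bound we use the Hopf/Harnack-type comparison: positivity of $\vf$ at distance $r_\lambda$, with value $\exp(-\lambda S+o(\lambda))$, forces a normal derivative of at least $\exp(-\lambda S+o(\lambda))$ on a boundary patch of polynomially small area. Polynomial prefactors in $\lambda$, the sphere area $r_\lambda^{n-1}$, and the value $\exp(o(\lambda))$ of the second factor are all $\exp(o(\lambda))$ and disappear after taking $\frac1\lambda\log$.

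\textbf{Action limit.} It remains to show $S_{\Omega_\lambda}(d,\partial A_{\lambda,-})\to S(d,-d)$, together with the same limit for nearby points at distance $r_\lambda$ from the boundary. For the bound $\le$, take a near-optimal path from $d$ to $-d$ and truncate it at its first hit of $\partial A_{\lambda,-}$. For the bound $\ge$, observe that any path from $d$ to $\partial A_{\lambda,-}$ can be completed to $-d$ by a straight segment of length $r_\lambda$ traversed in time $r_\lambda$. This adds action $O(r_\lambda)+r_\lambda\sup_{A_{\lambda,-}}V=o(1)$. The constraint $\gamma\subseteq\Omega_\lambda$ only helps the lower bound, since it restricts the infimum, and it is harmless for the upper bound by truncation at the first hit. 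Moreover $S(d,\cdot)$ is continuous, via the estimate $S(x,y)\lesssim \norm{x-y}(\sup V)^{1/2}$ locally.

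\textbf{Main obstacle.} The main difficulty is converting the interior logarithmic asymptotics of $\vf_{\lambda,0}^{\Omega_\lambda}$ into two-sided asymptotics for its normal derivative on the shrinking sphere $\partial A_{\lambda,-}$. The lower bound is the delicate direction, since the derivative could in principle be exponentially smaller than nearby values. We would handle this through \cref{lem:exterior-ball-path-estimates}(iii). If a direct argument is needed instead, we would use a barrier on the annulus $r_\lambda\le\norm{x+d}\le 2r_\lambda$: there $\lambda^2V=O(\lambda^2 r_\lambda^2)=O(\lambda^{3/2})$, and a radial subsolution of $-\Delta+\lambda^2V-e^D_\lambda$ vanishing on the inner sphere gives a normal derivative bounded below by $\exp(-O(\lambda^{3/4}r_\lambda))\min_{\text{outer sphere}}\vf=\exp(-\lambda S+o(\lambda))$.
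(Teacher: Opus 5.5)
Your main route is the paper's: it too writes $-\rho_\lambda$ as a $\vf_{\lambda,0}^{\Omega_\lambda}(z)\dif z$-weighted stopped Feynman--Kac exit expectation (exit before time $L/\lambda$, with a subexponential prefactor), applies \cref{lem:exterior-ball-path-estimates}(iii), and identifies the limit by truncating a near-optimal path at $\partial\Omega_\lambda$ and closing it with a short segment to $-d$. There are three minor differences. First, the paper keeps $\vf_{\lambda,0}^{-\Omega_\lambda}$ inside the expectation and reduces the three-term infimum of (iii) to $\inf_y\{S_{\Omega_\lambda}(d,y)+S_{-\Omega_\lambda}(y,-d)\}$ by a dynamic-programming step that you skip. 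Second, your bound $\vf_{\lambda,0}^{-\Omega_\lambda}=\ee^{o(\lambda)}$ on $\partial\Omega_\lambda$ should come from \cref{lem:exterior-ball-path-estimates}(ii), not from item 5: item 5 is assumed only for some fixed $\eta$, and if $\eta\leq\frac14$ the ball $B_{\lambda^{-1/2+\eta}}(-d)$ does not reach the sphere of radius $\lambda^{-1/4}$.
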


\begin{proof}
Let $\EE$ be the Wiener expectation used in the appendix and set
\eq{
\EE_z[F(\gamma)]
:=
\EE[F(z+\gamma)],
\qquad
\tau_{\Omega_\lambda}(\gamma)
:=
\inf\Set{t>0\mid\gamma(t)\notin\Omega_\lambda}.
}
For any fixed $L>0$, the parabolic Poisson-kernel representation,
integrated over $[0,L/\lambda]$, gives
\eql{
-\rho_\lambda
&=
\frac{e_{\lambda,0}^{\Omega_\lambda}}
{1-\ee^{-\frac{L}{\lambda}e_{\lambda,0}^{\Omega_\lambda}}}
\int_{z\in\Omega_\lambda}
\EE_z\left[
\exp\left(
-\lambda^2
\int_0^{\tau_{\Omega_\lambda}(\gamma)}
V\br{\gamma(t)}\dif{t}
\right)
\right.
\nonumber\\
&\hspace{10em}\left.
\times
\vf_{\lambda,0}^{-\Omega_\lambda}
\br{\gamma\br{\tau_{\Omega_\lambda}(\gamma)}}
\chi_{\Set{
\tau_{\Omega_\lambda}(\gamma)\leq L/\lambda
}}
\right]
\vf_{\lambda,0}^{\Omega_\lambda}(z)\,\dif{z}.
\label{eq:hopping-coefficient-stopped-expectation}
}
Indeed, after writing the expectation by means of the boundary exit
density, symmetry of the Dirichlet heat kernel gives
\eq{
&\int_{z\in\Omega_\lambda}
\vf_{\lambda,0}^{\Omega_\lambda}(z)
\left[
-\nu_\lambda(y)\cdot
\nabla_yK_{\lambda,t}^{\Omega_\lambda}(z,y)
\right]
\dif z
\\
&\qquad=
\ee^{-t e_{\lambda,0}^{\Omega_\lambda}}
\left[
-\nu_\lambda(y)\cdot
\nabla\vf_{\lambda,0}^{\Omega_\lambda}(y)
\right].
}
Integration in $t$ produces the prefactor in
\cref{eq:hopping-coefficient-stopped-expectation}. Since
$e_{\lambda,0}^{\Omega_\lambda}\asymp\lambda$, this prefactor is
subexponential on the scale $\lambda$.

By \cref{lem:exterior-ball-path-estimates}(iii),
\eq{
-\frac1\lambda\log\abs{\rho_\lambda}
&=
\inf_{\substack{
z\in\Omega_\lambda,\ y\in\partial\Omega_\lambda\\
0<T\leq L
}}
\left\{
S_{\Omega_\lambda}(d,z)
+S_{T,\Omega_\lambda}(z,y)
+S_{-\Omega_\lambda}(y,-d)
\right\}
+o(1).
}
For every $y\in\partial\Omega_\lambda$, the dynamic-programming property
of the action gives
\eq{
\inf_{\substack{z\in\Omega_\lambda\\0<T\leq L}}
\left\{
S_{\Omega_\lambda}(d,z)
+S_{T,\Omega_\lambda}(z,y)
\right\}
=
S_{\Omega_\lambda}(d,y).
}
One inequality follows by concatenation. For the reverse inequality,
split an almost-minimizing path sufficiently close to its terminal point
that the duration of its final segment is at most $L$. Consequently,
\eql{
-\frac1\lambda\log\abs{\rho_\lambda}
=
\inf_{y\in\partial\Omega_\lambda}
\left\{
S_{\Omega_\lambda}(d,y)
+S_{-\Omega_\lambda}(y,-d)
\right\}
+o(1).
\label{eq:hopping-action-through-boundary}
}

It remains to identify the limit of the variational expression. By
concatenation,
\eq{
S(d,-d)
\leq
S_{\Omega_\lambda}(d,y)
+S_{-\Omega_\lambda}(y,-d)
\qquad
(y\in\partial\Omega_\lambda).
}
Conversely, fix $\varepsilon>0$ and choose a path from $d$ to $-d$ with
action at most $S(d,-d)+\varepsilon$. Stop it when it first meets
$\partial\Omega_\lambda$, at a point $y$. Its initial segment is
admissible in $\Omega_\lambda$. The straight segment from $y$ to $-d$
lies in $-\Omega_\lambda$ for all sufficiently large $\lambda$ and,
because $V(x)\leq C\norm{x+d}^2$ near $-d$, has action at most
$Cr_\lambda^2$. Hence
\eq{
\inf_{y\in\partial\Omega_\lambda}
\left\{
S_{\Omega_\lambda}(d,y)
+S_{-\Omega_\lambda}(y,-d)
\right\}
\leq
S(d,-d)+\varepsilon+Cr_\lambda^2.
}
Letting first $\lambda\to\infty$ and then
$\varepsilon\downarrow0$ proves the result.
\end{proof}
\subsection{Relation to the FSW hopping coefficient}
In the context where both are defined, we ask to what extent the
Dirichlet hopping coefficient used here agrees with the full-space
single-well overlap coefficient used in the FSW framework
\cite{FeffermanShapiroWeinstein2022,Fefferman2025}. Strictly speaking,
the cited FSW framework concerns magnetic Hamiltonians, whereas here
$\rho_\lambda^{\rm FSW}$ denotes the corresponding non-magnetic
coefficient. Up to a sign and translation convention, the same
non-magnetic full-space overlap integral appears already in
\cite[Eq.~(4.7)]{FeffermanLeeThorpWeinstein2018}, where it governs the
nearest-neighbor coefficient in a periodic tight-binding reduction
rather than the splitting of an isolated double well.

The Dirichlet coefficient used here, on the other hand, is directly
modeled on the classical interaction-matrix construction for
semiclassical multiple wells
\cite{Helffer_Sjostrand_1984,DimassiSjostrand1999InteractionMatrix}.
There the one-well states are ground states of suitable Dirichlet
realizations obtained by excising the other wells, and their Wronskian
flux across a separating hypersurface gives the leading off-diagonal
interaction-matrix element. Thus the Dirichlet coefficient used here
and the full-space FSW coefficient are two realizations of the same
underlying interaction, and the purpose of the present subsection is
to compare them. Related full-space orbital matrix elements also occur
in the earlier LCAO tight-binding literature. We retain the superscript
${\rm FSW}$ merely as a local label distinguishing the full-space
single-well coefficient from the Dirichlet coefficient $\rho_\lambda$
used above.

To relate the two contexts, let us take
$V^{\rm FSW}(x) := v(x-d)+v(x+d)$ for all $x\in\RR^n$ where $v$ is compactly supported and $v,d$ are chosen so that the two supports do not overlap. Moreover, we assume $v:\RR^n\to[-1,0]$, $v=v\br{-\cdot}$ and $\supp(v)\subseteq B_a(0)$ for some $a<\norm{d}$. If we further assume that $v\in C^2$ and it has a unique non-degenerate minimum at the origin with $v(0)=-1$, i.e.,\eq{
v(x) = -1 + \frac12 \ip{x}{\operatorname{Hess}\br{v}_0 x} + o(\norm{x}^2)\qquad(x\in\RR^n)
} then the two contexts overlap, except that there is an overall discrepancy by a constant, which however, does not change the dynamics. Indeed, in the FSW setting, the minimum of the potential is at $-\lambda^2$ whereas here it is at $0$, hence \eq{
V^{\rm FSW} = V-1\,.
}

The single-well Hamiltonian defined by FSW is \eq{
h_\lambda^{\rm FSW} := -\Delta + \lambda^2 v(X)
} with eigenvalues $e_{\lambda,0},e_{\lambda,1},\cdots$ and eigenfunctions $\vf_{\lambda,0},\vf_{\lambda,1},\cdots$. Then the hopping coefficient is defined by FSW as 
\eq{
\rho_\lambda^{\rm FSW} := \ip{T^d \vf_{\lambda,0}}{\br{H_\lambda^{\rm FSW}-e_{\lambda,0}\Id}T^{-d}\vf_{\lambda,0}}
} where $\br{T^y f}(x)\equiv f(x-y)$ is the translation operator. Then in \cite{Fefferman2025} it was shown that if $\Omega\subseteq\RR^n$ has a smooth boundary such that $\supp(v(\cdot-d))\subseteq\operatorname{interior}\br{\Omega}$, $\supp(v(\cdot+d))\subseteq\operatorname{interior}\br{\Omega^c}$ then
\eql{\label{eq:FSW hopping coefficient}
\rho_\lambda^{\rm FSW} = \int_{\partial\Omega}
\left(
\br{\nabla\overline{T^d\vf_{\lambda,0}}}
T^{-d}\vf_{\lambda,0}
-
\overline{T^d\vf_{\lambda,0}}
\nabla T^{-d}\vf_{\lambda,0}
\right)\cdot\nu\,.
} Hence this formula appears identical to \cref{eq:hopping coefficient} but its interpretation is different: in \cref{eq:hopping coefficient} the eigenfunctions were Dirichlet ground states and $\Omega\equiv\Omega_\lambda$ was shrinking on the minima. In \cref{eq:FSW hopping coefficient} the eigenfunctions are translates of the single-well Hamiltonian and $\Omega$ is any set (in particular independent of $\lambda$). 

\begin{prop}
    The two agree asymptotically as $\lambda\to\infty$:
    \eql{
        \lim_{\lambda\to\infty} \frac{\rho_\lambda}{\rho_\lambda^{\rm FSW}} = 1\,.
    }

    In particular combined with the preceding subsection this shows that 
    \eq{
        \lim_{\lambda\to\infty}-\frac{1}{\lambda}\log\br{\abs{\rho_\lambda^{\rm FSW}}} = S(d,-d) = \inf_{T>0} \inf_{\substack{
\gamma\in H^1([0,T];\RR^n)\\
\gamma(0)=d,\ \gamma(T)=-d
}}
\int_0^T
\left(
\frac14\norm{\dot\gamma}^2+V^{\rm FSW}\circ\gamma
+1\right)\,.
    }
\end{prop}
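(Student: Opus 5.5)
Both coefficients are Wronskian fluxes across a separating hypersurface, so the plan is to compare them on a common surface. Both are independent of the choice of surface, up to exponentially smaller corrections. For $\rho_\lambda^{\rm FSW}$ this is exact by Green's identity: in the region between two admissible surfaces, $T^{\pm d}\vf_{\lambda,0}$ solve the same equation $(-\Delta+\lambda^2V^{\rm FSW}-e_{\lambda,0})u=0$ up to the term $\lambda^2 v(\cdot\mp d)$, which vanishes there. For the Dirichlet coefficient, Green's identity in the annular region between $\partial\Omega_\lambda$ and a fixed hyperplane $\Sigma=\{x\cdot d=0\}$ shows that $\rho_\lambda$ equals the Wronskian flux $W_\Sigma(\vf^{\Omega_\lambda}_{\lambda,0},\vf^{-\Omega_\lambda}_{\lambda,0})$ across $\Sigma$, plus a volume error. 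That error is controlled by the product $\vf^{\Omega_\lambda}_{\lambda,0}\vf^{-\Omega_\lambda}_{\lambda,0}$ over the region where the equations differ. Since both functions share the eigenvalue $e^D_\lambda$ by reflection symmetry and solve the same equation away from the excised balls, the only discrepancy comes from near $-d$. I would therefore choose the surface adapted to each function: $\partial\Omega_\lambda$ for the exact flux, and $\Sigma$ for the comparison.

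The core step is a one-well comparison: $\vf^{\Omega_\lambda}_{\lambda,0}$ should be close to $T^{d}\vf_{\lambda,0}$ in the flux sense, near $\Sigma$, with relative error $o(1)$. Here I would take $T^d\vf_{\lambda,0}$ normalized positive, shifting the constant $1$ into the eigenvalue ($V=V^{\rm FSW}+1$). I would first compare eigenvalues. Using a cutoff of $T^d\vf_{\lambda,0}$ vanishing on $A_{\lambda,-}$ as a quasimode for $h^{\Omega_\lambda}_\lambda$, its error is of size $\exp(-\lambda(S(d,-d)-o(1)))$ by Agmon decay of $T^d\vf_{\lambda,0}$. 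The one-well gap $\ge c\lambda$ then gives $\norm{\vf^{\Omega_\lambda}_{\lambda,0}-T^d\vf_{\lambda,0}}$ and $|e^D_\lambda-e_{\lambda,0}-\lambda^2|$ of that order. Plain $L^2$ closeness is not enough, though, because the flux itself is of size $e^{-\lambda S(d,-d)}$. I would instead work with weighted (Agmon) estimates: set $w=\vf^{\Omega_\lambda}_{\lambda,0}-T^d\vf_{\lambda,0}$ and show $\norm{e^{\lambda\phi}w}_{H^1}$ is small for a weight $\phi$ near $S(d,\cdot)$ in the region near $\Sigma$. The source of $w$ is localized near $-d$, where the Agmon distance from $d$ is about $S(d,-d)$, so in the relevant region $w$ is smaller than $T^d\vf_{\lambda,0}$ by a factor $e^{-c\lambda}$. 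The same holds for the reflected pair, by symmetry.

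With this in hand, I would write $W_\Sigma(\vf^{\Omega},\vf^{-\Omega})-W_\Sigma(T^d\vf,T^{-d}\vf)$ as a sum of two cross terms, each involving $w$ paired with a one-well state. By Cauchy–Schwarz and the trace theorem on $\Sigma$, these are bounded by the weighted norms, giving a relative error $o(1)$ compared with $|\rho^{\rm FSW}_\lambda|$. Matching the lower bound on $|\rho^{\rm FSW}_\lambda|$ requires a logarithmic lower bound. I would obtain it from \cref{lem:hopping-exponent} applied to $\rho_\lambda$, since $V$ here satisfies the standing assumptions (see \cref{sec:verification-one-well-assumptions}). The argument is then circular only mildly: I would show $|\rho_\lambda-\rho^{\rm FSW}_\lambda|\le e^{-\lambda(S(d,-d)+c)}$, and divide using the known exponent of $\rho_\lambda$. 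The second display then follows at once from \cref{lem:hopping-exponent} together with $V=V^{\rm FSW}+1$.

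\textbf{Main obstacle.} The critical step is the weighted estimate: obtaining a strictly better exponent, $S(d,-d)+c$, for the difference than for the fluxes themselves. Near the saddle region the weight must be almost optimal. I would try to get the gain $c>0$ by exploiting that $w$ is generated only in $B_{r_\lambda}(-d)$ and must travel back through $\Sigma$, which costs extra action. If this fails to give a uniform $c$, I would fall back on bounding the relative error by $e^{-\lambda\,o(1)}$-type factors that still tend to zero, which requires more careful prefactor control.
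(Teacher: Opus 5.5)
\textbf{Verdict: genuine gap.} Your route is genuinely different from the paper's, but its decisive step is missing, and the mechanism you propose for it misidentifies where the two one-well problems differ.

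\textbf{The paper's route.} It is indirect. By \cref{cor:eigenvalue splitting controlled by rho} and the non-magnetic FSW splitting theorem, both $\abs{\rho_\lambda}$ and $\abs{\rho_\lambda^{\rm FSW}}$ equal $\tfrac12(E_1(\lambda)-E_0(\lambda))(1+o(1))$. Both coefficients are negative, so their ratio tends to $1$. The second display then follows from \cref{lem:hopping-exponent} and $V=V^{\rm FSW}+1$. A direct Green--Agmon comparison is legitimate in principle.

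\textbf{Where the source of $w$ actually lives.} The function $T^d\vf_{\lambda,0}$ solves the equation with potential $\lambda^2(1+v(\cdot-d))$, whereas $h_\lambda^{\Omega_\lambda}$ has potential $\lambda^2(1+v(\cdot-d)+v(\cdot+d))$. So the defect of your quasimode is $\lambda^2v(\cdot+d)\,T^d\vf_{\lambda,0}$ on the whole fixed region $\supp v(\cdot+d)\setminus B_{r_\lambda}(-d)\subseteq B_a(-d)$, not only near $A_{\lambda,-}$. Two further sources are present. One is the boundary datum $-T^d\vf_{\lambda,0}$ on $\partial\Omega_\lambda$. The other is the eigenvalue-mismatch term $(e_{\lambda,0}+\lambda^2-e_\lambda^D)T^d\vf_{\lambda,0}$, which lives near $d$ and forces a reduced-resolvent treatment of the component of $w$ along $\vf_{\lambda,0}^{\Omega_\lambda}$.

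\textbf{Why this kills your mechanism.} On the part of $\supp v(\cdot+d)$ facing $d$, $T^d\vf_{\lambda,0}$ has size $\exp(-\lambda\,\delta_a+o(\lambda))$. Here $\delta_a$ is the Agmon distance, for the length element $\sqrt V\,\abs{\dif{x}}$, from $d$ to $\partial B_a(-d)$. Since $V>0$ on $\overline{B_a(-d)}$ away from $-d$, every path from $d$ to $-d$ pays a further positive Agmon length after entering $B_a(-d)$. Hence $\delta_a<S(d,-d)$ strictly, and the quasimode error is exponentially larger than $\ee^{-\lambda S(d,-d)}$. Consequently three of your claims are false:
\begin{itemize}
\item your stated size of the quasimode error, and the $L^2$ bound you draw from it;
\item that the source sits where the Agmon distance from $d$ is about $S(d,-d)$;
\item that $w$ is generated only in $B_{r_\lambda}(-d)$.
\end{itemize}
The gain $c>0$ cannot come from where you look for it.

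\textbf{What the route actually needs.} You need a geometric lemma. Consider a path from $d$ that picks up the defect at some $x\in\supp v(\cdot+d)$, then passes through $y\in\Sigma$, then ends at $-d$. It must have action at least $S(d,-d)+c$ for a fixed $c>0$. The natural reason is the two extra crossings of the gap of width $\norm{d}-a$ between $B_a(-d)$ and $\Sigma$. When $v$ is radial this gives $c=2(\norm{d}-a)$. In general it needs proof, since the path can use the low-potential region of the far well to shorten the detour.

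Only with such a fixed $c$ can the $\ee^{\varepsilon\lambda}$ losses be absorbed. These come from near-optimal Agmon weights, from trace estimates on the unbounded surface $\Sigma$, and from the reduced resolvent. Your fallback does not fill this gap. A relative error of the form $\ee^{-\lambda\,o(1)}$ need not tend to zero, and $\ee^{o(\lambda)}$ losses are exactly what logarithmic estimates produce.

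\textbf{Minor point.} The Green identity for $\rho_\lambda$ on $\Set{x\cdot d<0}\setminus\overline{B_{r_\lambda}(-d)}$ is exact. Both Dirichlet ground states solve $(H_\lambda-e_\lambda^D)u=0$ there, so there is no volume error.
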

\begin{proof}
The simplest proof goes as follows. Both quantities asymptotically obey 
\eq{
E_1(\lambda)-E_0(\lambda) = 2\abs{\rho_\lambda^\sharp}\br{1+o(1)}\qquad(\sharp = {\rm FSW }\text{ or nothing})\,.
} For $\sharp=\rm FSW$ this was proven in \cite{FeffermanShapiroWeinstein2022} (which remains true in the non-magnetic setting) and otherwise it is a consequence of \cref{cor:eigenvalue splitting controlled by rho}. We avoid giving a direct proof that does not use this fact for brevity, though of course there are various different such direct proofs.

\end{proof}

\section{Factorizing the heat kernel trace}
\label{sec:factorizing-heat-kernel-trace}

The last step in proving \cref{thm:Poisson distribution} is factorizing \eq{
\mathbb{P}_{\lambda,\beta}^{\text{instanton}}\left[\Set{k}\right] \equiv \frac{Z_{\lambda,k}\left(\beta\right)}{Z_{\lambda}\left(\beta\right)+Z_{\lambda}^\curvearrowright \left(\beta\right)}\,.
} What we mean is we imagine that well-separated instanton events should be independent. Hence, roughly speaking we want to establish
\eq{
\mathbb{P}_{\lambda,\beta}^{\text{instanton}}\left[\Set{k}\right] \sim \frac{1}{k!}\mathbb{P}_{\lambda,\beta}^{\text{instanton}}\left[\Set{0}\right] \br{\frac{\mathbb{P}_{\lambda,\beta}^{\text{instanton}}\left[\Set{1}\right]}{\mathbb{P}_{\lambda,\beta}^{\text{instanton}}\left[\Set{0}\right]}}^k\,;
} to that end we first establish
\eq{
\frac{Z_{\lambda,k}\left(\beta\right)}{Z_{\lambda,0}\left(\beta\right)} \sim \frac{1}{k!}\br{\frac{Z_{\lambda,1}\left(\beta\right)}{Z_{\lambda,0}\left(\beta\right)}}^k\,.
}

We now make this precise on the time scale $\beta=N/\abs{\rho_\lambda}$. We shall prove that, for every fixed $N>0$ and $k\in\NN_{\geq0}$,
\eql{
\frac{
Z_{\lambda,k}\br{\frac{N}{\abs{\rho_\lambda}}}
}{
Z_{\lambda,0}\br{\frac{N}{\abs{\rho_\lambda}}}
}
\longrightarrow
\frac{N^k}{k!}.
\label{eq:fixed-k-factorization}
}
In addition, we shall prove that there exists $C_N<\infty$ such that, for all sufficiently large $\lambda$,
\eql{
\frac{
Z_{\lambda,k}\br{\frac{N}{\abs{\rho_\lambda}}}
}{
Z_{\lambda,0}\br{\frac{N}{\abs{\rho_\lambda}}}
}
\leq
C_N2^{-k}
\qquad
(k\in\NN_{\geq0}).
\label{eq:summable-uniform-k-bound}
}
The second estimate is used only to justify summation over $k$.

For $\sigma\in\Set{+1,-1}$, set
\eq{
\Omega_{\lambda,+1}:=\Omega_\lambda,
\qquad
\Omega_{\lambda,-1}:=-\Omega_\lambda,
}
and let $\nu_{\lambda,\sigma}$ denote the outward unit normal to $\Omega_{\lambda,\sigma}$. For
\eq{
x\in\Omega_{\lambda,\sigma},
\qquad
y\in\partial\Omega_{\lambda,\sigma},
}
write
\eql{
J_{\lambda,\sigma}(t;x,y)
:=-\nu_{\lambda,\sigma}(y)\cdot
\nabla_yK_{\lambda,t}^{\Omega_{\lambda,\sigma}}(x,y).
\label{eq:passage-flux-kernel}
}
Thus $J_{\lambda,\sigma}$ is the Feynman--Kac weighted first-exit density from $\Omega_{\lambda,\sigma}$.

By \cref{eq:flux equation for hopping coeffivient} and reflection symmetry,
\eql{
\rho_\lambda
=
\int_{y\in\partial\Omega_{\lambda,\sigma}}
\left[
\nu_{\lambda,\sigma}(y)\cdot
\nabla\vf_{\lambda,0}^{\Omega_{\lambda,\sigma}}(y)
\right]
\vf_{\lambda,0}^{\Omega_{\lambda,-\sigma}}(y)
\dif{y}
\qquad
(\sigma\in\Set{+1,-1}).
\label{eq:rho-both-directions}
}
It is convenient to normalize the corresponding density by setting
\eql{
m_{\lambda,\sigma}(y)
:=
\frac{
\left[
-\nu_{\lambda,\sigma}(y)\cdot
\nabla\vf_{\lambda,0}^{\Omega_{\lambda,\sigma}}(y)
\right]
\vf_{\lambda,0}^{\Omega_{\lambda,-\sigma}}(y)
}{
\abs{\rho_\lambda}
}
\qquad
(y\in\partial\Omega_{\lambda,\sigma}).
\label{eq:equilibrium-passage-density}
}
Then
\eq{
\int_{y\in\partial\Omega_{\lambda,\sigma}}
m_{\lambda,\sigma}(y)\dif{y}=1.
}
For
\eq{
x\in A_{\lambda,\sigma}\cup\partial\Omega_{\lambda,-\sigma},
\qquad
y\in\partial\Omega_{\lambda,\sigma},
}
define the ground-state-normalized passage kernel
\eql{
\mathsf T_{\lambda,\sigma}(t;x,y)
:=
\ee^{t e_\lambda^D}
\frac{
\vf_{\lambda,0}^{\Omega_{\lambda,-\sigma}}(y)
}{
\vf_{\lambda,0}^{\Omega_{\lambda,\sigma}}(x)
}
J_{\lambda,\sigma}(t;x,y).
\label{eq:normalized-passage-kernel}
}
Its ground-state contribution is exactly $\abs{\rho_\lambda} m_{\lambda,\sigma}(y)$, independently of $t$ and of the incoming point $x$.

\subsection{The defect kernel}

We separate a relaxed passage from the part which either occurs too early or has not yet relaxed to the one-well ground state.

\begin{lem}[Integrated defect estimate]
\label{lem:integrated-defect-estimate}
There exists a sequence
\eql{
\varepsilon_\lambda\downarrow0,
\qquad
\lambda\varepsilon_\lambda\longrightarrow\infty,
\label{eq:choice-relaxation-time}
}
such that the following holds. Define
\eql{
\mathsf D_{\lambda,\sigma}(t;x,y)
:=
\begin{cases}
\mathsf T_{\lambda,\sigma}(t;x,y),
&0<t\leq\varepsilon_\lambda,\\[0.4em]
\mathsf T_{\lambda,\sigma}(t;x,y)-\abs{\rho_\lambda} m_{\lambda,\sigma}(y),
&t>\varepsilon_\lambda.
\end{cases}
\label{eq:defect-kernel}
}
Thus
\eql{
\mathsf T_{\lambda,\sigma}(t;x,y)
=
\chi_{\Set{t>\varepsilon_\lambda}}
\abs{\rho_\lambda} m_{\lambda,\sigma}(y)
+
\mathsf D_{\lambda,\sigma}(t;x,y).
\label{eq:good-plus-defect}
}
Set
\eql{
d_\lambda(t)
:=
\max_{\sigma=\pm1}
\sup_{x\in A_{\lambda,\sigma}\cup\partial\Omega_{\lambda,-\sigma}}
\int_{y\in\partial\Omega_{\lambda,\sigma}}
\abs{\mathsf D_{\lambda,\sigma}(t;x,y)}\dif{y},
\label{eq:def-d-lambda}
}
and
\eql{
\widetilde d_\lambda(t)
:=
\max_{\sigma=\pm1}
\abs{\partial\Omega_{\lambda,\sigma}}
\sup_{\substack{
x\in A_{\lambda,\sigma}\cup\partial\Omega_{\lambda,-\sigma}\\
y\in\partial\Omega_{\lambda,\sigma}
}}
\abs{\mathsf D_{\lambda,\sigma}(t;x,y)}.
\label{eq:def-dtilde-lambda}
}
Then
\eql{
\eta_\lambda
:=
\int_0^\infty d_\lambda(t)\dif{t}
\longrightarrow0,
\label{eq:q-lambda}
}
\eql{
\eta_\lambda^{(1)}
:=
\int_0^\infty t\,d_\lambda(t)\dif{t}
\longrightarrow0,
\label{eq:q1-lambda}
}
and
\eql{
\widehat\eta_\lambda
:=
\sup_{t>0}t\,\widetilde d_\lambda(t)
\longrightarrow0.
\label{eq:qhat-lambda}
}
\end{lem}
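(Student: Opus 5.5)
We will prove the lemma by splitting $\mathsf T_{\lambda,\sigma}$ into its ground-state part and the part carried by the excited states of the one-well Dirichlet operator, and by treating separately the short times $t\le\varepsilon_\lambda$ and the long times $t>\varepsilon_\lambda$. By reflection symmetry it suffices to take $\sigma=+1$.

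\textbf{Long times.} For $t>\varepsilon_\lambda$ we insert $\Id=P_{\lambda,0}^{\Omega_\lambda}+Q_\lambda^{\Omega_\lambda}$ into $K_{\lambda,t}^{\Omega_\lambda}$. The ground-state part of $\mathsf T_{\lambda,+}$ is exactly $\abs{\rho_\lambda}m_{\lambda,+}(y)$, so
\eq{
\mathsf D_{\lambda,+}(t;x,y)=\ee^{te_\lambda^D}\frac{\vf_{\lambda,0}^{-\Omega_\lambda}(y)}{\vf_{\lambda,0}^{\Omega_\lambda}(x)}\br{-\nu\cdot\nabla_y\br{\ee^{-th_\lambda^{\Omega_\lambda}}Q_\lambda^{\Omega_\lambda}}(x,y)}.
}
We factor $\ee^{-th}Q=\ee^{-\frac{\varepsilon_\lambda}{2}h}\,\ee^{-(t-\varepsilon_\lambda)h}Q\,\ee^{-\frac{\varepsilon_\lambda}{2}h}$. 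The outer factors are bounded pointwise in $x$ and in the boundary normal derivative at $y$ by the logarithmic heat-kernel and first-exit estimates of \cref{lem:exterior-ball-path-estimates}. These bounds are at worst $\exp(o(\lambda)+C\lambda\varepsilon_\lambda)$, using the polynomial $L^\infty$ bound on the ground state, $e_\lambda^D\le C\lambda$, and the lower bound $\vf_{\lambda,0}^{\Omega_\lambda}(x)\ge \ee^{-o(\lambda)}$ on $A_{\lambda,+}$, which follows from assumption (5) since $r_\lambda\le\lambda^{-1/2+\eta}$ fails only if we enlarge; on $\partial\Omega_{\lambda,-\sigma}=\partial A_{\lambda,+}$ the same path lower bound applies. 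The middle factor has operator norm $\ee^{-(t-\varepsilon_\lambda)(e_\lambda^D+\delta_\lambda^D)}$. The ratio $\vf^{-\Omega}_{\lambda,0}(y)$ is exponentially small, with $\vf^{-\Omega}(y)\approx \ee^{-\lambda S(y,-d)}$ while $\abs{\rho_\lambda}\approx\ee^{-\lambda S(d,-d)}$. The outcome is a bound of the form
\eq{
d_\lambda(t)\le \ee^{o(\lambda)+C\lambda\varepsilon_\lambda}\,\abs{\rho_\lambda}\,\ee^{-c\lambda (t-\varepsilon_\lambda)}\qquad(t>\varepsilon_\lambda),
}
where the factor $\abs{\rho_\lambda}$ up to $\ee^{o(\lambda)}$ comes from \cref{lem:hopping-exponent} together with the one-passage variational identity \cref{eq:hopping-action-through-boundary}. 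We then choose $\varepsilon_\lambda\downarrow0$ slowly, with $\lambda\varepsilon_\lambda\to\infty$ but $\varepsilon_\lambda\gg$ the $o(1)$ rates so that the gap decay dominates. Integrating against $1$ and $t$ then gives contributions of order $\abs{\rho_\lambda}\ee^{o(\lambda)}\to0$. The same bound taken in sup form controls $t\,\widetilde d_\lambda(t)$, because $\abs{\partial\Omega_\lambda}$ is polynomial in $\lambda$.

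\textbf{Short times.} For $t\le\varepsilon_\lambda$, $\mathsf D=\mathsf T\ge0$, so $\int_{\partial\Omega}\abs{\mathsf D}\dif y$ is a first-exit expectation weighted by $\vf^{-\Omega}_{\lambda,0}(\gamma(\tau))/\vf^{\Omega}_{\lambda,0}(x)$ on $\{\tau\le\varepsilon_\lambda\}$. Starting from $x\in A_{\lambda,+}$, reaching $\partial\Omega_\lambda$ costs action at least $S(d,\partial A_{\lambda,-})-o(1)$, by \cref{lem:exterior-ball-path-estimates}(iii). The weight $\vf^{-\Omega}$ at the exit point is $\ee^{-o(\lambda)}$ at most, and $\vf^{\Omega}(x)^{-1}\le\ee^{o(\lambda)}$. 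Hence the integral is at most $\ee^{-\lambda(S(d,-d)-o(1))}\ee^{C\lambda\varepsilon_\lambda}$. Over an interval of length $\varepsilon_\lambda$ this is $\ee^{-c\lambda}\to0$, and the same holds with the extra factors $t$ and $\abs{\partial\Omega}$.

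\textbf{Starting points on $\partial\Omega_{\lambda,-\sigma}$.} For $x\in\partial A_{\lambda,+}$ the only requirement is the path lower bound for $\vf^{\Omega}_{\lambda,0}$ at distance $r_\lambda$, which is the second part of (ii).

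\textbf{Main obstacle.} The delicate point is making all the $\ee^{o(\lambda)}$ losses, from the log-asymptotics and the prefactors, uniform and beaten simultaneously: by $\abs{\rho_\lambda}$ for long times, and by the strict action gap for short times. In particular the pointwise, rather than integrated, control needed for $\widetilde d_\lambda$ near small $t$ requires a sup bound on the normal derivative of the Dirichlet kernel. For that we would use the first-exit estimate (iii) combined with parabolic boundary regularity at scale $r_\lambda$, which costs only polynomial factors. Finally, the sequence $\varepsilon_\lambda$ is chosen via a diagonal argument as the slowest rate compatible with all the $o(1)$ errors.
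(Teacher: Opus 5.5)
Your short-time step is essentially the paper's argument: the restricted first-exit expectation, an $\ee^{o(\lambda)}$ ground-state ratio, $\varepsilon_\lambda e_\lambda^D=o(\lambda)$, and the action bound $S(d,-d)-Cr_\lambda^2$. The long-time step, however, has a genuine gap.

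Write $h=h_\lambda^{\Omega_\lambda}$ and $Q=Q_\lambda^{\Omega_\lambda}$. With your factorization $\ee^{-th}Q=\ee^{-\frac{\varepsilon_\lambda}{2}h}\,\ee^{-(t-\varepsilon_\lambda)h}Q\,\ee^{-\frac{\varepsilon_\lambda}{2}h}$, the spectral gap acts only on an interval of length $t-\varepsilon_\lambda$. Meanwhile, after normalizing by $\ee^{te_\lambda^D}$, your prefactor is $\ee^{o(\lambda)+C\lambda\varepsilon_\lambda}$. As $t\downarrow\varepsilon_\lambda$ there is therefore no decay at all. Your bound then only gives $\int_{\varepsilon_\lambda}^\infty d_\lambda(t)\dif{t}\leq \ee^{o(\lambda)+C\lambda\varepsilon_\lambda}/(c\lambda)$ and $\sup_{t>\varepsilon_\lambda}t\,\widetilde d_\lambda(t)\leq\varepsilon_\lambda\ee^{o(\lambda)+C\lambda\varepsilon_\lambda}$. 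Neither tends to zero, since $C\lambda\varepsilon_\lambda\to\infty$.

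The factor $\abs{\rho_\lambda}$ you insert to compensate is not there. For $\sigma=+1$ the point $y$ lies on $\partial\Omega_\lambda=\partial A_{\lambda,-}$, within $r_\lambda$ of $-d$. Hence $\vf_{\lambda,0}^{-\Omega_\lambda}(y)=\ee^{o(\lambda)}$, which is exactly what you use in your own short-time step. Moreover, operator-norm and Cauchy--Schwarz bounds on the outer smoothing factors discard the spatial separation between $x$ and $y$, so no $\ee^{-\lambda S(d,-d)}$ can come out of them. \cref{lem:hopping-exponent} controls the ground-state part $\abs{\rho_\lambda}m_{\lambda,\sigma}$, not the $Q$-part. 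Extracting such a factor from the $Q$-part would require Agmon-weighted bounds on $\ee^{-sh}Q$, which you do not supply.

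The missing idea is to use the gap on essentially all of $[0,t]$. Smooth on a scale much shorter than $1/\lambda$, say $\tau_\lambda=\lambda^{-2}$, and write $\ee^{-th}Q=\ee^{-\tau_\lambda h}\br{\ee^{-(t-2\tau_\lambda)h}Q}\ee^{-\tau_\lambda h}$. Then:
\begin{itemize}
\item the outer factors cost only powers of $\lambda$, via free Gaussian domination in the interior variable and the boundary Poisson-kernel bound at spatial scale $\sqrt{\tau_\lambda}\ll r_\lambda$ for the normal derivative;
\item $\ee^{2\tau_\lambda(e_\lambda^D+\delta_\lambda^D)}\leq\ee^{2c_\infty}$;
\item together with the $\ee^{o(\lambda)}$ ground-state ratio, you get $d_\lambda(t)+\widetilde d_\lambda(t)\leq A_\lambda\ee^{-\delta_\lambda^D t}$ for $t>2\tau_\lambda$, with $A_\lambda=\ee^{o(\lambda)}$ \emph{independent of} $\varepsilon_\lambda$.
\end{itemize}
Only now can you choose $\varepsilon_\lambda=L_\lambda/\lambda$ with $\log A_\lambda=o(L_\lambda)$ and $L_\lambda=o(\lambda)$, so that $A_\lambda\ee^{-\delta_\lambda^D\varepsilon_\lambda}\to0$. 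This handles the long-time parts of $\eta_\lambda$, $\eta_\lambda^{(1)}$ and $\widehat\eta_\lambda$ directly, with no appeal to $\abs{\rho_\lambda}$, and it is exactly why the lemma requires $\lambda\varepsilon_\lambda\to\infty$. Your instinct that the gap decay should dominate is right, but your factorization discards the decay accumulated on $[0,\varepsilon_\lambda]$.

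A minor point: the lower bound $\vf_{\lambda,0}^{\Omega_\lambda}\geq\ee^{-o(\lambda)}$ on $A_{\lambda,+}$ follows from assumption 5 only if $\eta\geq\frac14$. In general, take it from \cref{lem:exterior-ball-path-estimates}(ii), using $S_{\Omega_\lambda}(d,x)=\calO(r_\lambda^2)$ there.
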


\begin{proof}
We treat short and long residence times separately. First suppose that $0<t\leq\varepsilon_\lambda$. Then $\mathsf D_{\lambda,\sigma}=\mathsf T_{\lambda,\sigma}$ and the kernel is nonnegative. Hence
\begin{align}
&\int_0^{\varepsilon_\lambda}
\int_{y\in\partial\Omega_{\lambda,\sigma}}
\mathsf T_{\lambda,\sigma}(t;x,y)\dif{y}\dif{t}
\nonumber\\
&\qquad\leq
\ee^{\varepsilon_\lambda e_\lambda^D}
\frac{
\displaystyle
\sup_{y\in\partial\Omega_{\lambda,\sigma}}
\vf_{\lambda,0}^{\Omega_{\lambda,-\sigma}}(y)
}{
\vf_{\lambda,0}^{\Omega_{\lambda,\sigma}}(x)
}
\nonumber\\
&\qquad\qquad\times
\EE_x\left[
\exp\left(
-\lambda^2
\int_0^{\tau_{\Omega_{\lambda,\sigma}}}
V(\gamma(s))\dif{s}
\right);
\ \tau_{\Omega_{\lambda,\sigma}}\leq\varepsilon_\lambda
\right].
\label{eq:short-defect-first-exit}
\end{align}
The one-well ground-state estimate gives
\eql{
\sup_{\sigma=\pm1}
\sup_{\substack{
x\in A_{\lambda,\sigma}\cup\partial\Omega_{\lambda,-\sigma}\\
y\in\partial\Omega_{\lambda,\sigma}
}}
\frac{
\vf_{\lambda,0}^{\Omega_{\lambda,-\sigma}}(y)
}{
\vf_{\lambda,0}^{\Omega_{\lambda,\sigma}}(x)
}
=
\ee^{o(\lambda)}.
\label{eq:ground-state-ratio-boundary-spheres}
}
Since restricting the exit time can only decrease the expectation, the total first-exit Agmon estimate recorded in \cref{lem:exterior-ball-path-estimates} gives
\eql{
\EE_x\left[
\exp\left(
-\lambda^2
\int_0^{\tau_{\Omega_{\lambda,\sigma}}}
V(\gamma(s))\dif{s}
\right);
\ \tau_{\Omega_{\lambda,\sigma}}\leq\varepsilon_\lambda
\right]
\leq
\exp\left(
-\lambda
S_{\Omega_{\lambda,\sigma}}
\br{x,\partial\Omega_{\lambda,\sigma}}
+o(\lambda)
\right)
\label{eq:short-defect-agmon-upper}
}
uniformly for $x\in A_{\lambda,\sigma}\cup\partial\Omega_{\lambda,-\sigma}$. Concatenating a path from $\sigma d$ to $x$, a path from $x$ to $\partial\Omega_{\lambda,\sigma}$, and a radial segment from the exit point to $-\sigma d$ gives
\eql{
S_{\Omega_{\lambda,\sigma}}
\br{x,\partial\Omega_{\lambda,\sigma}}
\geq
S(d,-d)-Cr_\lambda^2.
\label{eq:boundary-to-boundary-action-lower}
}
Since $\lambda r_\lambda^2=\lambda^{1/2}=o(\lambda)$ and $\varepsilon_\lambda e_\lambda^D=o(\lambda)$, we obtain
\eql{
\int_0^{\varepsilon_\lambda}d_\lambda(t)\dif{t}
\leq
\exp\br{-\lambda S(d,-d)+o(\lambda)}
=
\ee^{o(\lambda)}\abs{\rho_\lambda}
\longrightarrow0.
\label{eq:short-defect-L1}
}
Consequently,
\eql{
\int_0^{\varepsilon_\lambda}t\,d_\lambda(t)\dif{t}
\leq
\varepsilon_\lambda\ee^{o(\lambda)}\ \abs{rho_\lambda}
\longrightarrow0.
\label{eq:short-defect-first-moment}
}
Applying \cref{eq:pointwise-first-exit-bound} with
$T_\lambda=\varepsilon_\lambda$, and then using
\cref{eq:ground-state-ratio-boundary-spheres,eq:boundary-to-boundary-action-lower}
and $\varepsilon_\lambda e_\lambda^D=o(\lambda)$, gives
\eql{
\sup_{0<t\leq\varepsilon_\lambda}
t\,\widetilde d_\lambda(t)
\leq
\exp\br{-\lambda S(d,-d)+o(\lambda)}
\longrightarrow0.
\label{eq:short-defect-pointwise}
}

We next derive the long-time estimate which determines the choice of
$\varepsilon_\lambda$. Set $\tau_\lambda:=\lambda^{-2}$. For
$t>2\tau_\lambda$,
\eql{
\ee^{-t h_\lambda^{\Omega_{\lambda,\sigma}}}
Q_\lambda^{\Omega_{\lambda,\sigma}}
=
\ee^{-\tau_\lambda h_\lambda^{\Omega_{\lambda,\sigma}}}
\left(
\ee^{-(t-2\tau_\lambda)h_\lambda^{\Omega_{\lambda,\sigma}}}
Q_\lambda^{\Omega_{\lambda,\sigma}}
\right)
\ee^{-\tau_\lambda h_\lambda^{\Omega_{\lambda,\sigma}}}.
\label{eq:long-Q-smoothing-factorization}
}
The spectral theorem gives
\eql{
\norm{
\ee^{-(t-2\tau_\lambda)h_\lambda^{\Omega_{\lambda,\sigma}}}
Q_\lambda^{\Omega_{\lambda,\sigma}}
}_{2\to2}
\leq
\ee^{-(t-2\tau_\lambda)\br{e_\lambda^D+\delta_\lambda^D}}.
\label{eq:long-Q-spectral-gap}
}
We quantify the two fixed-time smoothing factors. Feynman--Kac
domination by the free heat kernel gives
\eql{
\sup_{x\in\Omega_{\lambda,\sigma}}
\norm{
K_{\lambda,\tau_\lambda}^{\Omega_{\lambda,\sigma}}(x,\cdot)
}_2^2
=
\sup_{x\in\Omega_{\lambda,\sigma}}
K_{\lambda,2\tau_\lambda}^{\Omega_{\lambda,\sigma}}(x,x)
\leq
(8\pi\tau_\lambda)^{-n/2}.
\label{eq:long-kernel-row-bound}
}
After squaring and integrating in the interior variable, the boundary
Gaussian estimate
\cref{eq:exterior-ball-poisson-kernel-bound}, with
$s=\tau_\lambda$, similarly gives
\eql{
\sup_{y\in\partial\Omega_{\lambda,\sigma}}
\norm{
-\nu_{\lambda,\sigma}(y)\cdot
\nabla_y
K_{\lambda,\tau_\lambda}^{\Omega_{\lambda,\sigma}}
(\cdot,y)
}_2
\leq
C\tau_\lambda^{-n/4-1/2}.
\label{eq:long-flux-row-bound}
}
The constants are uniform in $\lambda$: on the spatial scale
$\sqrt{\tau_\lambda}$ the boundary sphere has radius
$r_\lambda/\sqrt{\tau_\lambda}=\lambda^{3/4}$.
Combining
\cref{eq:long-Q-smoothing-factorization,eq:long-Q-spectral-gap,eq:long-kernel-row-bound,eq:long-flux-row-bound}
and using Cauchy--Schwarz gives
\eql{
&\abs{
-\nu_{\lambda,\sigma}(y)\cdot
\nabla_y
\left(
\ee^{-t h_\lambda^{\Omega_{\lambda,\sigma}}}
Q_\lambda^{\Omega_{\lambda,\sigma}}
\right)(x,y)
}
\nonumber\\
&\qquad\leq
C\lambda^{n+1}
\ee^{-(t-2\tau_\lambda)
\br{e_\lambda^D+\delta_\lambda^D}}.
\label{eq:long-Q-flux-smoothing-bound}
}
Since
\eq{
e_\lambda^D+\delta_\lambda^D
=
e_{\lambda,1}^{\Omega_{\lambda,\sigma}}
<
\lambda^2c_\infty,
}
the factor produced by the two end intervals satisfies
\eq{
\ee^{2\tau_\lambda
\br{e_\lambda^D+\delta_\lambda^D}}
\leq
\ee^{2c_\infty}.
}
Thus \cref{eq:ground-state-ratio-boundary-spheres} and
\cref{eq:long-Q-flux-smoothing-bound} show that the normalized
$Q$-kernel
\eq{
\ee^{t e_\lambda^D}
\frac{
\vf_{\lambda,0}^{\Omega_{\lambda,-\sigma}}(y)
}{
\vf_{\lambda,0}^{\Omega_{\lambda,\sigma}}(x)
}
\left[
-\nu_{\lambda,\sigma}(y)\cdot
\nabla_y
\left(
\ee^{-t h_\lambda^{\Omega_{\lambda,\sigma}}}
Q_\lambda^{\Omega_{\lambda,\sigma}}
\right)(x,y)
\right]
}
has both the $L^1$ boundary norm in \cref{eq:def-d-lambda} and the
pointwise norm in \cref{eq:def-dtilde-lambda} bounded by
$A_\lambda\ee^{-\delta_\lambda^Dt}$ for $t>2\tau_\lambda$, where,
after enlarging it to be at least one,
\eq{
A_\lambda=\ee^{o(\lambda)}.
}
Here we used that $\abs{\partial\Omega_{\lambda,\sigma}}$ and the
powers of $\lambda$ in
\cref{eq:long-Q-flux-smoothing-bound} are polynomial.

For completeness, the same calculation without the boundary normal
derivative gives
\eql{
\abs{
\left(
\ee^{-t h_\lambda^{\Omega_{\lambda,\sigma}}}
Q_\lambda^{\Omega_{\lambda,\sigma}}
\right)(y,z)
}
\leq
C\lambda^n
\ee^{-(t-2\tau_\lambda)
\br{e_\lambda^D+\delta_\lambda^D}}
\label{eq:long-Q-kernel-smoothing-bound}
}
for $y,z\in\Omega_{\lambda,\sigma}$. By
\cref{eq:exterior-ball-ground-state-profile},
\eq{
\sup_{\substack{
y\in\partial\Omega_{\lambda,-\sigma}\\
z\in A_{\lambda,\sigma}
}}
\frac{
\vf_{\lambda,0}^{\Omega_{\lambda,\sigma}}(z)
}{
\vf_{\lambda,0}^{\Omega_{\lambda,\sigma}}(y)
}
=
\ee^{o(\lambda)}.
}
Indeed, both sets remain a fixed positive distance from
$\partial\Omega_{\lambda,\sigma}$, and their constrained actions from
$\sigma d$ are $O(r_\lambda^2)$.
We choose the same $A_\lambda$ large enough that the corresponding
normalized $Q$-kernel in
\cref{eq:long-Q-kernel-smoothing-bound} is also bounded by
$A_\lambda\ee^{-\delta_\lambda^Dt}$. This is the estimate used below
for the final residence interval.

The short-time estimates above hold for every choice of $\varepsilon_\lambda\downarrow0$. Since $\log A_\lambda=o(\lambda)$, we may therefore choose $L_\lambda\to\infty$ so that
\eq{
\log A_\lambda=o(L_\lambda),
\qquad
L_\lambda=o(\lambda),
}
and set
\eq{
\varepsilon_\lambda:=\frac{L_\lambda}{\lambda}.
}
Then $\lambda\varepsilon_\lambda\to\infty$, so $\varepsilon_\lambda>2\tau_\lambda$ for all sufficiently large $\lambda$. For this choice, whenever $t>\varepsilon_\lambda$ the defect is precisely the normalized $Q$-kernel above, and hence
\eql{
d_\lambda(t)+\widetilde d_\lambda(t)
\leq
A_\lambda\ee^{-\delta_\lambda^D t}
\qquad
(t>\varepsilon_\lambda).
\label{eq:long-defect-exponential}
}
Since $\delta_\lambda^D\geq c_{\rm gap}^D\lambda$, this choice gives
\eql{
A_\lambda\ee^{-\delta_\lambda^D\varepsilon_\lambda}
\longrightarrow0.
\label{eq:choice-epsilon-kills-Q}
}
It follows that
\eq{
\int_{\varepsilon_\lambda}^\infty d_\lambda(t)\dif{t}
\leq
\frac{A_\lambda\ee^{-\delta_\lambda^D\varepsilon_\lambda}}{\delta_\lambda^D}
\longrightarrow0,
}
and
\eq{
\int_{\varepsilon_\lambda}^\infty t\,d_\lambda(t)\dif{t}
\leq
A_\lambda\ee^{-\delta_\lambda^D\varepsilon_\lambda}
\left(
\frac{\varepsilon_\lambda}{\delta_\lambda^D}
+
\frac{1}{(\delta_\lambda^D)^2}
\right)
\longrightarrow0.
}
Moreover, since $\delta_\lambda^D\varepsilon_\lambda\to\infty$, the function $t\mapsto t\ee^{-\delta_\lambda^Dt}$ is decreasing for $t\geq\varepsilon_\lambda$ for all sufficiently large $\lambda$, and hence
\eql{
\sup_{t>\varepsilon_\lambda}t\,\widetilde d_\lambda(t)
\leq
A_\lambda\varepsilon_\lambda
\ee^{-\delta_\lambda^D\varepsilon_\lambda}
\longrightarrow0.
\label{eq:long-defect-pointwise}
}
Together with \cref{eq:short-defect-L1,eq:short-defect-first-moment,eq:short-defect-pointwise}, this proves the lemma.
\end{proof}

\subsection{The fixed number of passages}

We first prove the asymptotic for each fixed $k$.

\begin{lem}[Fixed-$k$ factorization]
\label{lem:fixed-k-factorization}
For every fixed $N>0$ and $k\in\NN_{\geq0}$,
\eq{
\frac{
Z_{\lambda,k}\br{\frac{N}{\abs{\rho_\lambda}}}
}{
Z_{\lambda,0}\br{\frac{N}{\abs{\rho_\lambda}}}
}
\longrightarrow
\frac{N^k}{k!}.
}
\end{lem}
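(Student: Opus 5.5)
Write $\beta:=N/\abs{\rho_\lambda}$. The plan is to expand $Z_{\lambda,k}(\beta)$ by the strong Markov property at the successive hitting times $\tau_1<\dots<\tau_k$. The case $k=0$ is trivial, so assume $k\ge1$ and fix an initial parity $\sigma$ (the factor $2$ from reflection cancels against the one in $Z_{\lambda,0}$).

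On $\Set{\mathcal N_\beta=k}$ the path proceeds in stages. It first stays in $\Omega_{\lambda,\sigma}$ until $\tau_1$, exiting through $\partial\Omega_{\lambda,\sigma}$ at some $y_1$. It then stays in $\Omega_{\lambda,-\sigma}$ until it exits at $y_2\in\partial\Omega_{\lambda,-\sigma}$, and so on. The last stage is a residence of length $\beta-\tau_k$ in $\Omega_{\lambda,(-1)^k\sigma}$, ending at $(-1)^kx$. This yields the exact formula
\eq{
Z_{\lambda,k}(\beta)=2\int\chi_{\lambda,+}(x)^2\int_{0<t_1<\dots<t_k<\beta}\int J_{\lambda,\sigma}(t_1;x,y_1)J_{\lambda,-\sigma}(t_2-t_1;y_1,y_2)\cdots K^{\Omega_{\lambda,(-1)^k\sigma}}_{\lambda,\beta-t_k}(y_k,(-1)^kx),
}
which generalizes \cref{eq:one passage flux convolution}.

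Next, normalize each factor by the ground states, using \cref{eq:normalized-passage-kernel}. The ratios $\vf^{\Omega_{\lambda,-\sigma}}_{\lambda,0}(y_j)/\vf^{\Omega_{\lambda,\sigma}}_{\lambda,0}(y_{j-1})$ telescope, and the factors $\ee^{t e^D_\lambda}$ multiply to $\ee^{\beta e_\lambda^D}$. The integrand becomes $\ee^{-\beta e_\lambda^D}\vf(x)\prod_j\mathsf T(t_j-t_{j-1};y_{j-1},y_j)$ times a normalized final residence kernel $\vf(y_k)^{-1}\ee^{(\beta-t_k)e^D_\lambda}K_{\lambda,\beta-t_k}(y_k,\pm x)$. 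Decompose the final kernel as the ground-state part $\vf(\pm x)$ (after reflection this is $\vf^{\Omega_\lambda}_{\lambda,0}(x)$) plus a $Q$-part. The $Q$-part is bounded by $A_\lambda\ee^{-\delta^D_\lambda s}$ for $s>\varepsilon_\lambda$, as recorded in the proof of \cref{lem:integrated-defect-estimate}. For $s<\varepsilon_\lambda$ it is controlled crudely by Feynman--Kac, and its integrated weight is $o(1)$ once multiplied by the at most $\varepsilon_\lambda\cdot$(bounded) time measure.

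Then insert \cref{eq:good-plus-defect} for every $\mathsf T$ and expand the product into $2^k$ terms. In the all-good term each factor is $\abs{\rho_\lambda}\,m(y_j)\chi_{\Set{t_j-t_{j-1}>\varepsilon_\lambda}}$. Because $\int m=1$, the spatial integrals collapse. What remains is $\abs{\rho_\lambda}^k$ times the volume of $\Set{0<t_1<\dots<t_k<\beta:\ \text{gaps}>\varepsilon_\lambda}$, which equals $(\beta-k\varepsilon_\lambda)_+^k/k!$. Multiplying by $\abs{\rho_\lambda}^k$ gives $(N-k\varepsilon_\lambda\abs{\rho_\lambda})^k/k!\to N^k/k!$. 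The prefactor $2\ee^{-\beta e_\lambda^D}\int\chi_{\lambda,+}^2\vf^2=2a_\lambda\ee^{-\beta e_\lambda^D}$ equals $Z_{\lambda,0}(\beta)(1+o(1))$ by \cref{eq:zero passage flux asymptotic}, whose remainder was already shown to be $o(1)$ at this time scale.

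Every other term contains at least one defect factor, and this is where the clustering is killed; I expect it to be the main obstacle. To bound a term with defects in a set $S\neq\emptyset$ of the $k$ slots, integrate the $y$ variables from last to first. A good factor contributes $\abs{\rho_\lambda}$ times a probability density in $y$, since $\int m=1$. A defect factor contributes at most $d_\lambda(t_j-t_{j-1})$ in $L^1$ uniformly in its incoming point, which is exactly why $d_\lambda$ takes a supremum over $x\in A\cup\partial\Omega$. The final kernel is handled as above: it is bounded uniformly by $\vf(x)$ times $(1+o(1))$ plus an integrable $Q$-defect. For the time integrals, each defect gap integrates to at most $\eta_\lambda$. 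Each good gap ranges over length at most $\beta$ and contributes $\abs{\rho_\lambda}\beta=N$, divided by the ordering factorial where possible. The term is therefore $\lesssim\binom{k}{|S|}N^{k-|S|}\eta_\lambda^{|S|}\to0$ for fixed $k$. A technical subtlety: the last good factor depends on $y_{k-1}$ only through $m$, but the final residence kernel's $Q$-part depends on $y_k$ pointwise. This is where $\widetilde d_\lambda$ and the bound $\widehat\eta_\lambda$ enter, in the cases where a sup-norm in $y$ is needed rather than an $L^1$ norm. The moment bound $\eta^{(1)}_\lambda$ covers the correction from shifting the time simplex. Summing the finitely many terms and dividing by $Z_{\lambda,0}(\beta)$ proves the lemma.
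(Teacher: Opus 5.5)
Your argument works in outline, but it breaks the cyclic dependence differently from the paper. The issue is that $x$ enters both the first passage factor and the final residence kernel.

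\textbf{How the paper does it.} The paper never estimates the full time simplex.
\begin{itemize}
\item For the lower bound it uses positivity to restrict to the separated region $\mathcal G_{\lambda,k}(\beta)$. There the final kernel is used only through the integrated mixing estimate \cref{eq:final-ground-state-mixing}, and the all-defect word is cut by the pointwise bound $\widetilde d_\lambda$ on the first defect.
\item For the upper bound it replaces $\chi_{\lambda,+}^2$ by $1$, extends the $x$-integral, and uses the gluing identity \cref{eq:cyclic-gluing-identity}. This merges the final residence with the first passage into one passage of duration $u_k$.
\item The weight $u_k$ is why the paper needs $\eta_\lambda^{(1)}$ and $\widehat\eta_\lambda$. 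In exchange it never needs pointwise control of the final kernel or any analysis of short final residences. The same cyclic bound is then reused for the uniform-in-$k$ estimate of \cref{lem:uniform-summable-passage-bound}.
\end{itemize}

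\textbf{How your route differs.} You break the cycle at the final kernel. For $\beta-t_k>\varepsilon_\lambda$, its ground-state-normalized form is bounded by $\vf(\pm x)(1+o(1))$ uniformly in the entry point $y_k$. This is the pointwise normalized $Q$-bound $A_\lambda\ee^{-\delta_\lambda^D s}$ stated after \cref{eq:long-Q-kernel-smoothing-bound}, which the paper only uses in integrated form. Consequences:
\begin{itemize}
\item Every long-final word with $r\geq1$ defects is at most a constant times $\binom{k}{r}\eta_\lambda^r N^{k-r}/(k-r)!$, using $\eta_\lambda$ alone.
\item $a_\lambda$ cancels exactly in the main term, and you get both inequalities at once.
\item In your scheme $\eta_\lambda^{(1)}$ plays no role. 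Also, $\widetilde d_\lambda$ has nothing to do with the $Q$-part of the final kernel; your last paragraph misattributes both.
\end{itemize}

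\textbf{The under-argued window.} The cost of your route is the window $\beta-t_k\leq\varepsilon_\lambda$, which you treat too loosely.
\begin{itemize}
\item Crude Feynman--Kac domination of $\vf(y_k)^{-1}\ee^{s e_\lambda^D}K_{\lambda,s}$ loses a factor $\ee^{o(\lambda)}$.
\item This loss is absorbed when some passage factor is good. The window then costs $\varepsilon_\lambda\abs{\rho_\lambda}$ relative to the main term, and $\abs{\rho_\lambda}=\ee^{-\lambda S(d,-d)+o(\lambda)}$.
\item It is not absorbed in the all-defect word. There no factor $\abs{\rho_\lambda}$ is available, and $L^1$ integration of the defects gains nothing from the window.
\end{itemize}

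\textbf{The repair.}
\begin{itemize}
\item Use that the final kernel is a ground-state (Doob) transform. Hence $\int_x\chi_{\lambda,+}(x)^2\vf(x)\,[\vf(y_k)^{-1}\ee^{s e_\lambda^D}K_{\lambda,s}(y_k,\pm x)]\dif{x}\leq1$ uniformly in $y_k$, with no $\ee^{o(\lambda)}$ loss.
\item When there is a good factor, cut the cycle there.
\item In the all-defect word, some gap $g_{j^*}$ is at least $(\beta-\varepsilon_\lambda)/k$. Cut at that factor and bound it pointwise by $\widetilde d_\lambda(g_{j^*})\leq k\widehat\eta_\lambda/(\beta-\varepsilon_\lambda)$. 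For fixed other gaps, $g_{j^*}$ ranges over an interval of length $\varepsilon_\lambda$.
\item This bounds the word by $O\bigl(k^2\widehat\eta_\lambda\varepsilon_\lambda\eta_\lambda^{k-1}/\beta\bigr)\to0$.
\end{itemize}
With this repair your proof is complete.
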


\begin{proof}
The result is immediate for $k=0$, so suppose $k\geq1$. Repeated use of the strong Markov property gives the exact $k$-passage analogue of \cref{eq:one passage flux convolution}. Set
\eq{
\sigma_j:=(-1)^j,
\qquad
t_0:=0,
\qquad
x_0:=x.
}
Then
\begin{align}
Z_{\lambda,k}(\beta)
&=
2\int_{x\in A_{\lambda,+}}
\chi_{\lambda,+}(x)^2
\int_{0<t_1<\cdots<t_k<\beta}
\nonumber\\
&\quad\times
\int_{x_1\in\partial\Omega_{\lambda,\sigma_0}}
\cdots
\int_{x_k\in\partial\Omega_{\lambda,\sigma_{k-1}}}
\prod_{j=1}^k
J_{\lambda,\sigma_{j-1}}
\br{t_j-t_{j-1};x_{j-1},x_j}
\nonumber\\
&\quad\times
K_{\lambda,\beta-t_k}^{\Omega_{\lambda,\sigma_k}}
\br{x_k,\sigma_kx}
\dif{x_1}\cdots\dif{x_k}
\dif{t_1}\cdots\dif{t_k}\dif{x}.
\label{eq:general-k-passage-Markov-formula}
\end{align}
Using \cref{eq:normalized-passage-kernel}, multiplying and dividing by the corresponding one-well ground states in each passage factor, and using reflection symmetry at the endpoint, we may rewrite \cref{eq:general-k-passage-Markov-formula} exactly as
\begin{align}
Z_{\lambda,k}(\beta)
&=
2\ee^{-\beta e_\lambda^D}
\int_{x\in A_{\lambda,+}}
\chi_{\lambda,+}(x)^2
\int_{0<t_1<\cdots<t_k<\beta}
\nonumber\\
&\quad\times
\int_{x_1\in\partial\Omega_{\lambda,\sigma_0}}
\cdots
\int_{x_k\in\partial\Omega_{\lambda,\sigma_{k-1}}}
\prod_{j=1}^k
\mathsf T_{\lambda,\sigma_{j-1}}
\br{t_j-t_{j-1};x_{j-1},x_j}
\nonumber\\
&\quad\times
\left[
\ee^{(\beta-t_k)e_\lambda^D}
\frac{
\vf_{\lambda,0}^{\Omega_{\lambda,\sigma_k}}(\sigma_kx)
}{
\vf_{\lambda,0}^{\Omega_{\lambda,\sigma_k}}(x_k)
}
K_{\lambda,\beta-t_k}^{\Omega_{\lambda,\sigma_k}}
\br{x_k,\sigma_kx}
\right]
\nonumber\\
&\quad\times
\dif{x_1}\cdots\dif{x_k}
\dif{t_1}\cdots\dif{t_k}\dif{x}.
\label{eq:general-k-normalized-passage-formula}
\end{align}
The expression in square brackets is the ground-state transform of the final one-well heat kernel. It is nonnegative and, after extending the $x$-integration to the whole corresponding one-well domain, integrates to one. Moreover, the same smoothing argument used in \cref{lem:integrated-defect-estimate} gives, with the same $A_\lambda$, uniformly for $s\geq\varepsilon_\lambda$ and $y\in\partial\Omega_{\lambda,-\sigma}$,
\eql{
\int_{x\in A_{\lambda,+}}
\chi_{\lambda,+}(x)^2
\ee^{s e_\lambda^D}
\frac{
\vf_{\lambda,0}^{\Omega_{\lambda,\sigma}}(\sigma x)
}{
\vf_{\lambda,0}^{\Omega_{\lambda,\sigma}}(y)
}
K_{\lambda,s}^{\Omega_{\lambda,\sigma}}(y,\sigma x)
\dif{x}
=
a_\lambda
+\calO\br{A_\lambda\ee^{-\delta_\lambda^D s}}.
\label{eq:final-ground-state-mixing}
}
Indeed, the ground-state part is $a_\lambda$. For the orthogonal part,
set $z=\sigma x$ in the normalized estimate following
\cref{eq:long-Q-kernel-smoothing-bound} and integrate in $x$; since
\eq{
\int_{x\in A_{\lambda,+}}
\chi_{\lambda,+}(x)^2\dif{x}
\leq
C,
}
this gives the stated error.

Let
\eq{
\mathcal G_{\lambda,k}(\beta)
:=
\Set{
0<t_1<\cdots<t_k<\beta
\ \big|\
t_1>\varepsilon_\lambda,\
t_j-t_{j-1}>\varepsilon_\lambda,\
\beta-t_k>\varepsilon_\lambda
}.
}
Restrict \cref{eq:general-k-normalized-passage-formula} to this region and insert \cref{eq:good-plus-defect} in each of the $k$ passage factors. The term in which every passage factor is good equals
\eql{
2\ee^{-\beta e_\lambda^D}
\abs{\rho_\lambda}^k
\br{a_\lambda+o(1)}
\frac{\br{\beta-(k+1)\varepsilon_\lambda}_+^k}{k!},
\label{eq:all-P-separated-contribution}
}
where the $o(1)$ is uniform on the restricted time simplex by \cref{eq:final-ground-state-mixing,eq:choice-epsilon-kills-Q}. Every boundary integration of a good passage factor gives one because $m_{\lambda,\sigma}$ is a probability density.

It remains to control the terms containing defects. Suppose first that exactly $r\in\Set{1,\ldots,k-1}$ of the $k$ passage factors are defects. Cut the spatial chain at one of the good rank-one passage factors. Using the $L^1$ boundary norm in \cref{eq:def-d-lambda}, the fact that $m_{\lambda,\sigma}$ is a probability density, and the Markov property of the final ground-state-transformed kernel, the absolute value of each such word is bounded by
\eq{
2\ee^{-\beta e_\lambda^D}
\abs{\rho_\lambda}^{k-r}
\frac{\beta^{k-r}}{(k-r)!}
\eta_\lambda^r.
}
Here the factor $\beta^{k-r}/(k-r)!$ comes from the $k-r$ good passage durations together with the final residence time, while each defect duration is integrated using \cref{eq:q-lambda}. Thus, at $\beta=N/\abs{\rho_\lambda}$, the terms with at least one good passage factor are bounded in total by
\eq{
2\ee^{-\beta e_\lambda^D}
\sum_{r=1}^{k-1}
\binom{k}{r}
\frac{N^{k-r}}{(k-r)!}
\eta_\lambda^r
=
o\br{\ee^{-\beta e_\lambda^D}}
}
for fixed $k$.

If every passage factor is a defect, use the pointwise bound for the first defect and the $L^1$ boundary bounds for the remaining defects. Since all passage durations on $\mathcal G_{\lambda,k}(\beta)$ are at least $\varepsilon_\lambda$, \cref{eq:long-defect-exponential} gives
\eq{
\int_{\varepsilon_\lambda}^{\infty}
\widetilde d_\lambda(t)\dif{t}
\leq
\frac{A_\lambda\ee^{-\delta_\lambda^D\varepsilon_\lambda}}{\delta_\lambda^D}
=o(1).
}
The absolute value of the all-defect word is therefore bounded by
\eq{
2\ee^{-\beta e_\lambda^D}
\left(
\int_{\varepsilon_\lambda}^{\infty}
\widetilde d_\lambda(t)\dif{t}
\right)
\eta_\lambda^{k-1}
=
o\br{\ee^{-\beta e_\lambda^D}}.
}
Consequently the contribution of $\mathcal G_{\lambda,k}(\beta)$ satisfies
\begin{align}
\frac{Z_{\lambda,k}^{\mathcal G}(\beta)}{Z_{\lambda,0}(\beta)}
&=
\frac{\br{\beta-(k+1)\varepsilon_\lambda}_+^k\abs{\rho_\lambda}^k}{k!}
+o(1).
\label{eq:separated-lower-asymptotic}
\end{align}
Here we used \cref{eq:zero passage flux asymptotic} and the one-well localization estimate
\eql{
a_\lambda\longrightarrow1.
\label{eq:a-lambda-to-one}
}
At $\beta=N/\abs{\rho_\lambda}$ we have $\varepsilon_\lambda\abs{\rho_\lambda}\to0$, and therefore
\eql{
\liminf_{\lambda\to\infty}
\frac{Z_{\lambda,k}\br{\frac{N}{\abs{\rho_\lambda}}}}{Z_{\lambda,0}\br{\frac{N}{\abs{\rho_\lambda}}}}
\geq
\frac{N^k}{k!}.
\label{eq:fixed-k-liminf}
}

It remains to prove the reverse inequality. In \cref{eq:general-k-passage-Markov-formula}, use positivity to replace $\chi_{\lambda,+}^2$ by $1$ and extend the $x$-integration to the corresponding one-well domain. Reflect the final heat kernel when $k$ is odd. Symmetry of the Dirichlet heat kernel and the semigroup property give
\begin{align}
&\int_{x\in\Omega_\lambda}
K_{\lambda,\beta-t_k}^{\Omega_{\lambda,\sigma_k}}
\br{x_k,\sigma_kx}
J_{\lambda,+1}(t_1;x,x_1)
\dif{x}
\nonumber\\
&\qquad=
J_{\lambda,+1}
\br{\beta-t_k+t_1;\sigma_kx_k,x_1}.
\label{eq:cyclic-gluing-identity}
\end{align}
Set
\eq{
u_j:=t_{j+1}-t_j
\quad(j=1,\ldots,k-1),
\qquad
u_k:=\beta-t_k+t_1.
}
Then $u_j>0$ and $u_1+\cdots+u_k=\beta$. For fixed $u_1,\ldots,u_k$, the variable $t_1$ ranges over an interval of length $u_k$. Thus
\eql{
Z_{\lambda,k}(\beta)
\leq
2\ee^{-\beta e_\lambda^D}
\int_{\substack{
u_1,\ldots,u_k>0\\
u_1+\cdots+u_k=\beta
}}
u_k\,
\mathfrak C_{\lambda,k}(u_1,\ldots,u_k)
\dif{u_1}\cdots\dif{u_{k-1}},
\label{eq:cyclic-upper-bound}
}
where $\mathfrak C_{\lambda,k}$ is the cyclic boundary integral of the $k$ alternating normalized kernels \cref{eq:normalized-passage-kernel}; in the last boundary variable we apply reflection when $k$ is odd. The ground-state factors telescope around the cycle.

Insert \cref{eq:good-plus-defect} in every cyclic factor. For fixed $k$, any term containing at least one defect tends to zero after the time integrations by \cref{eq:q-lambda,eq:q1-lambda,eq:qhat-lambda}. The unique term containing only good factors is bounded above by
\eq{
\abs{\rho_\lambda}^k\frac{\beta^k}{k!}.
}
Thus
\eq{
\limsup_{\lambda\to\infty}
\frac{Z_{\lambda,k}\br{\frac{N}{\abs{\rho_\lambda}}}}{Z_{\lambda,0}\br{\frac{N}{\abs{\rho_\lambda}}}}
\leq
\frac{N^k}{k!},
}
where we again used \cref{eq:a-lambda-to-one}. Together with \cref{eq:fixed-k-liminf}, this proves the lemma.
\end{proof}

\subsection{A summable estimate uniform in the number of passages}

We keep the complete defect expansion in \cref{eq:cyclic-upper-bound} and estimate it uniformly in $k$.

\begin{lem}[Uniform summable bound]
\label{lem:uniform-summable-passage-bound}
For every fixed $N>0$, there exists $C_N<\infty$ such that, for all sufficiently large $\lambda$,
\eq{
\frac{Z_{\lambda,k}\br{\frac{N}{\abs{\rho_\lambda}}}}{Z_{\lambda,0}\br{\frac{N}{\abs{\rho_\lambda}}}}
\leq
C_N2^{-k}
\qquad
(k\in\NN_{\geq0}).
}
\end{lem}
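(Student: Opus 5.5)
We start from the cyclic upper bound \cref{eq:cyclic-upper-bound}, which holds for every $k\geq1$ and every $\beta$ with no $\lambda$- or $k$-dependent loss. Insert \cref{eq:good-plus-defect} into each of the $k$ cyclic factors of $\mathfrak C_{\lambda,k}$ and take absolute values. This expands into $2^k$ words, indexed by the subset $S\subseteq\Set{1,\ldots,k}$ of defect positions with $\abs{S}=r$. For a word with $r<k$, cut the cycle at a good factor. Good factors are rank-one: the outgoing boundary variable integrates against the probability density $m_{\lambda,\sigma}$ irrespective of the incoming point. So the spatial integrations can be done sequentially around the cycle. Each good factor contributes $\abs{\rho_\lambda}$, and each defect contributes at most $d_\lambda(u_j)$, because $d_\lambda$ is a supremum over incoming points.

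For $r=k$ (the all-defect cycle) there is no rank-one factor to cut at. Here we would use the pointwise bound $\widetilde d_\lambda$ for one factor and $L^1$ bounds $d_\lambda$ for the remaining $k-1$. This closes the cycle.

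The time integrals come next, with weight $u_k$ over the simplex $\sum u_j=\beta$. We drop the constraint coming from the defect durations: integrate each defect $u_j$ freely over $(0,\infty)$, producing $\eta_\lambda$. The good durations then lie in a simplex of size at most $\beta$, giving volume at most $\beta^{k-r}/(k-r)!$ or so. The weight $u_k$ is handled according to who owns position $k$:
\begin{itemize}
\item if $u_k$ is good, absorb it into the simplex volume, giving a factor $\beta^{k-r}/(k-r-1)!$ or, more crudely, $\beta\cdot\beta^{k-r-1}/(k-r-1)!$;
\item if $u_k$ is a defect, use $\eta^{(1)}_\lambda$;
\item in the all-defect case, use $\widehat\eta_\lambda$ for the pointwise factor at position $k$, and $\eta_\lambda^{k-1}$ for the others.
\end{itemize}
Since $\beta\abs{\rho_\lambda}=N$, every word is bounded by $\abs{\rho_\lambda}^{-1}\cdot N^{k-r}/(k-r-1)!\,\eta_\lambda^r$ (up to constants), or by $(\eta^{(1)}_\lambda+\widehat\eta_\lambda)\eta_\lambda^{k-1}$. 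The stray $\abs{\rho_\lambda}^{-1}$ must be avoided, and this is the delicate point. I would handle it by not integrating $t_1$ crudely: the $u_k$ factor is exactly the length of the $t_1$-range, and it should be paired with $\beta$ only when $u_k$ is good. That gives a total of at most $N^{k-r}/(k-r)!\cdot k$ rather than $N^{k-r}/(k-r)!$, which is harmless.

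Summing over $S$ then gives
\[
\frac{Z_{\lambda,k}(\beta)}{2\ee^{-\beta e^D_\lambda}}\lesssim \sum_{r=0}^k\binom{k}{r}(k+1)\frac{N^{k-r}}{(k-r)!}\,\bar\eta_\lambda^{\,r},
\qquad \bar\eta_\lambda:=\max\br{\eta_\lambda,\eta_\lambda^{(1)},\widehat\eta_\lambda}.
\]
Dividing by $Z_{\lambda,0}$ via \cref{eq:zero passage flux asymptotic} and \cref{eq:a-lambda-to-one} costs a bounded factor.

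The main obstacle is that $\binom{k}{r}$ is not summable against $\bar\eta_\lambda^r$ without help. We use $\binom{k}{r}\leq 2^k$, but more efficiently we bound $\binom{k}{r}N^{k-r}/(k-r)!\leq (k!/(r!(k-r)!^2))N^{k-r}$ and note that the sum is $\leq\sum_r \binom kr \frac{N^{k-r}}{(k-r)!}\bar\eta^r$. For $k-r$ large the factorial wins; for $r$ large the factor $\bar\eta_\lambda^r\binom kr\leq(2\bar\eta_\lambda)^k\cdot$const. Concretely, choose $\lambda$ large so that $\bar\eta_\lambda\leq 1/8$. Then
\[
\sum_r\binom kr\frac{N^{k-r}}{(k-r)!}8^{-r}\leq 2^k\max_{m}\frac{N^m}{m!}8^{-(k-m)}\cdot k .
\]
We split $m\leq k/2$, giving at most $C\,2^k8^{-k/2}k=C k\,(1/\sqrt2)^k\cdot$, from the $m\leq k/2$ range; and $m>k/2$, giving at most $2^kN^{k/2}/\lceil k/2\rceil!$. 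After the extra factor $k+1$, both are $\leq C_N2^{-k}$, which yields \cref{eq:summable-uniform-k-bound}. The case $k=0$ is trivial.
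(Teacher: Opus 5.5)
Your argument follows the paper's proof. You expand every cyclic factor in \cref{eq:cyclic-upper-bound} into good and defect parts and cut the cycle at a good rank-one factor. A distinguished defect $u_k$ is handled with $\eta^{(1)}_\lambda$, and the all-defect cycle with $\widehat\eta_\lambda\eta_\lambda^{k-1}$. The only difference is the final bookkeeping. The paper keeps the refined counts $\binom{k-1}{m-1}$ and $\binom{k-1}{m}$ and sums exactly with generating functions at $z=2$. You use the uniform per-word bound $(k+1)\frac{N^{k-r}}{(k-r)!}\bar\eta_\lambda^{\,r}$ and sum elementarily. That route can work, but your last step is wrong as written, and one intermediate paragraph rests on a miscount.

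\textbf{The final step.} With $\bar\eta_\lambda\le 1/8$, your estimate on the range $m\le k/2$ gives
\[
2^k\,\ee^N 8^{-k/2}\cdot k(k+1)=\ee^N k(k+1)\,2^{-k/2},
\]
and $k^2 2^{-k/2}$ is not $\calO(2^{-k})$. The loss comes from $\binom kr\le 2^k$. The repair is immediate because $\bar\eta_\lambda\to0$.
\begin{itemize}
\item Either require $\bar\eta_\lambda\le 1/64$. Then this range is at most $\ee^N k(k+1)4^{-k}\le C_N2^{-k}$, and the range $m>k/2$ is superexponentially small, as you say.
\item Or use the exact identity
\[
\sum_{k\ge0}z^k\sum_{r=0}^k\binom kr\frac{N^{k-r}}{(k-r)!}\eta^r=\frac{1}{1-\eta z}\exp\br{\frac{Nz}{1-\eta z}}
\]
at $z=4$ with $\eta\le1/8$. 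This bounds the coefficients by $C_N4^{-k}$, which absorbs the factor $k+1$.
\end{itemize}

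\textbf{The ``stray $\abs{\rho_\lambda}^{-1}$''.} This factor does not exist; it comes from miscounting dimensions. The $t_1$-integration was already carried out in deriving \cref{eq:cyclic-upper-bound}; that is where the weight $u_k$ comes from. What remains is an integral over the $(k-1)$-dimensional simplex in $u_1,\ldots,u_{k-1}$.
\begin{itemize}
\item If $u_k$ is good and there are $m=k-r$ good factors, only $m-1$ good durations are free. With $B=\beta$ minus the total defect duration,
\[
\int_{s_1+\cdots+s_{m-1}<B}\br{B-\textstyle\sum_i s_i}\dif{s}=\frac{B^m}{m!}\le\frac{\beta^m}{m!}.
\]
The word is then at most $\frac{N^m}{m!}\eta_\lambda^{\,r}$, with no stray factor. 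Your $\beta^{k-r}/(k-r-1)!$ is a harmless overestimate by the factor $m$.
\item If $u_k$ is a defect and $m\ge1$, first make a good duration the dependent variable, then integrate $u_k d_\lambda(u_k)$ freely. This gives $\abs{\rho_\lambda}\frac{N^{m-1}}{(m-1)!}\eta^{(1)}_\lambda\eta_\lambda^{\,r-1}$: an extra factor $\abs{\rho_\lambda}$, not $\abs{\rho_\lambda}^{-1}$. It is absorbed into $k\frac{N^m}{m!}\bar\eta_\lambda^{\,r}$ once $\abs{\rho_\lambda}\le N$.
\end{itemize}
So your per-word bound is correct, but its justification (``not integrating $t_1$ crudely'') should be replaced by this count.
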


\begin{proof}
The assertion is trivial for $k=0$, so suppose $k\geq1$. Expand each factor in \cref{eq:cyclic-upper-bound} according to \cref{eq:good-plus-defect}.

Suppose first that the distinguished factor corresponding to $u_k$ is good and that the word contains exactly $m$ good factors. There are
\eq{
\binom{k-1}{m-1}
}
such words. Cutting the cyclic product at the distinguished rank-one kernel, and using
\eq{
\int_{y\in\partial\Omega_{\lambda,\sigma}}m_{\lambda,\sigma}(y)\dif{y}=1,
}
shows that its spatial integral is bounded by
\eq{
\abs{\rho_\lambda}^m
\prod_{\substack{j:\ \text{$j$th factor is a defect}}}
d_\lambda(u_j).
}
After integrating the defect durations and then the $m$ good durations, the time integral is bounded by
\eql{
\abs{\rho_\lambda}^m
\frac{\beta^m}{m!}
\eta_\lambda^{k-m}.
\label{eq:word-bound-distinguished-good}
}
The factor $\beta^m/m!$ is the convolution of $m-1$ copies of the constant function $1$ and one copy of the function $t$ coming from the distinguished factor $u_k$.

Next suppose that $u_k$ is a defect, while exactly $m\geq1$ of the remaining $k-1$ factors are good. There are
\eq{
\binom{k-1}{m}
}
such words. Cutting the cycle at any good factor and absorbing the factor $u_k$ with $\eta_\lambda^{(1)}$ gives
\eql{
\abs{\rho_\lambda}^m
\frac{\beta^{m-1}}{(m-1)!}
\eta_\lambda^{(1)}
\eta_\lambda^{k-m-1}.
\label{eq:word-bound-distinguished-defect}
}

Finally suppose that every factor is a defect. Use the pointwise bound for the distinguished last kernel and the $L^1$ boundary bounds for the remaining $k-1$ kernels. The cyclic boundary integral is bounded by
\eq{
\widetilde d_\lambda(u_k)
\prod_{j=1}^{k-1}d_\lambda(u_j).
}
Since $u_k\widetilde d_\lambda(u_k)\leq\widehat\eta_\lambda$, integration over the remaining $k-1$ variables gives
\eql{
\widehat\eta_\lambda \eta_\lambda^{k-1}.
\label{eq:all-defect-bound}
}
This is the point at which the pointwise estimate \cref{eq:qhat-lambda} is needed; an integrated first-moment estimate alone would not control the convolution at the prescribed total time $\beta$.

Since
\eq{
Z_{\lambda,0}(\beta)
\geq
2a_\lambda\ee^{-\beta e_\lambda^D}
}
and $a_\lambda\to1$, we may assume $a_\lambda\geq\frac12$ for all sufficiently large $\lambda$. Setting $\beta=N/\abs{\rho_\lambda}$ and summing \cref{eq:word-bound-distinguished-good,eq:word-bound-distinguished-defect,eq:all-defect-bound} over the possible positions of the good factors gives
\begin{align}
\frac{Z_{\lambda,k}\br{\frac{N}{\abs{\rho_\lambda}}}}{Z_{\lambda,0}\br{\frac{N}{\abs{\rho_\lambda}}}}
&\leq
2\sum_{m=1}^k
\binom{k-1}{m-1}
\frac{N^m}{m!}
\eta_\lambda^{k-m}
\nonumber\\
&\quad+
2\abs{\rho_\lambda} \eta_\lambda^{(1)}
\sum_{m=1}^{k-1}
\binom{k-1}{m}
\frac{N^{m-1}}{(m-1)!}
\eta_\lambda^{k-m-1}
\nonumber\\
&\quad+
2\widehat\eta_\lambda \eta_\lambda^{k-1}.
\label{eq:uniform-k-combinatorial-bound}
\end{align}
The right-hand side has a uniformly summable generating function. Indeed,
\begin{align}
&\sum_{k=1}^\infty z^k
\sum_{m=1}^k
\binom{k-1}{m-1}
\frac{N^m}{m!}
\eta_\lambda^{k-m}
\nonumber\\
&\qquad=
\exp\left(\frac{Nz}{1-\eta_\lambda z}\right)-1,
\label{eq:generating-function-first-sum}
\end{align}
\begin{align}
&\sum_{k=2}^\infty z^k
\sum_{m=1}^{k-1}
\binom{k-1}{m}
\frac{N^{m-1}}{(m-1)!}
\eta_\lambda^{k-m-1}
\nonumber\\
&\qquad=
\frac{z^2}{(1-\eta_\lambda z)^2}
\exp\left(\frac{Nz}{1-\eta_\lambda z}\right),
\label{eq:generating-function-second-sum}
\end{align}
and
\eql{
\sum_{k=1}^\infty
z^k\widehat\eta_\lambda \eta_\lambda^{k-1}
=
\widehat\eta_\lambda\frac{z}{1-\eta_\lambda z}.
\label{eq:generating-function-all-defect}
}
By \cref{eq:q-lambda,eq:q1-lambda,eq:qhat-lambda}, for all sufficiently large $\lambda$,
\eq{
\eta_\lambda\leq\frac14,
\qquad
\abs{\rho_\lambda }\eta_\lambda^{(1)}\leq1,
\qquad
\widehat\eta_\lambda\leq1.
}
Evaluating \cref{eq:generating-function-first-sum,eq:generating-function-second-sum,eq:generating-function-all-defect} at $z=2$ gives
\eq{
\sum_{k=0}^\infty
2^k
\frac{Z_{\lambda,k}\br{\frac{N}{\abs{\rho_\lambda}}}}{Z_{\lambda,0}\br{\frac{N}{\abs{\rho_\lambda}}}}
\leq
C_N
}
with $C_N$ independent of $\lambda$. Since all coefficients are nonnegative,
\eq{
\frac{Z_{\lambda,k}\br{\frac{N}{\abs{\rho_\lambda}}}}{Z_{\lambda,0}\br{\frac{N}{\abs{\rho_\lambda}}}}
\leq
C_N2^{-k}.
}
This proves the lemma.
\end{proof}

\begin{proof}[Proof of \cref{thm:Poisson distribution}]
The existence and positivity of the limit defining $-\rho_\lambda$ have already been proved. Moreover, \cref{eq:hopping-exponent} gives
\eq{
\abs{\rho_\lambda}
=
\exp\br{-\lambda S(d,-d)+o(\lambda)},
}
and hence
\eq{
\log(\lambda)\abs{\rho_\lambda}\longrightarrow0.
}
Fix $N>0$. By \cref{lem:fixed-k-factorization}, for every $\ell\in\NN_{\geq0}$,
\eq{
\frac{Z_{\lambda,\ell}\br{\frac{N}{\abs{\rho_\lambda}}}}{Z_{\lambda,0}\br{\frac{N}{\abs{\rho_\lambda}}}}
\longrightarrow
\frac{N^\ell}{\ell!}.
}
By \cref{lem:uniform-summable-passage-bound}, for all sufficiently large $\lambda$,
\eq{
\frac{Z_{\lambda,\ell}\br{\frac{N}{\abs{\rho_\lambda}}}}{Z_{\lambda,0}\br{\frac{N}{\abs{\rho_\lambda}}}}
\leq
C_N2^{-\ell},
}
and $\sum_{\ell=0}^\infty C_N2^{-\ell}<\infty$. Dominated convergence therefore gives
\eq{
\sum_{\ell=0}^\infty
\frac{Z_{\lambda,\ell}\br{\frac{N}{\abs{\rho_\lambda}}}}{Z_{\lambda,0}\br{\frac{N}{\abs{\rho_\lambda}}}}
\longrightarrow
\sum_{\ell=0}^\infty\frac{N^\ell}{\ell!}
=
\ee^N.
}
Consequently, for every fixed $k\in\NN_{\geq0}$,
\begin{align}
\mathbb P_{\lambda,\frac{N}{\abs{\rho_\lambda}}}^{\rm instanton}\left[\Set{k}\right]
&=
\frac{
\displaystyle
\frac{Z_{\lambda,k}\br{\frac{N}{\abs{\rho_\lambda}}}}{Z_{\lambda,0}\br{\frac{N}{\abs{\rho_\lambda}}}}
}{
\displaystyle
\sum_{\ell=0}^\infty
\frac{Z_{\lambda,\ell}\br{\frac{N}{\abs{\rho_\lambda}}}}{Z_{\lambda,0}\br{\frac{N}{\abs{\rho_\lambda}}}}
}
\nonumber\\
&\longrightarrow
\frac{N^k/k!}{\ee^N}
=
\ee^{-N}\frac{N^k}{k!}.
\end{align}
This proves \cref{thm:Poisson distribution}.
\end{proof}
\appendix

\section{Logarithmic estimates for the exterior-ball domains}
\label{sec:exterior-ball-path-estimates}

We state the estimates only for
\eq{
\Omega_\lambda
=
\RR^n\setminus A_{\lambda,-},
\qquad
-\Omega_\lambda
=
\RR^n\setminus A_{\lambda,+},
\qquad
r_\lambda=\lambda^{-1/4}.
}
When $n=1$, each domain is understood to mean the component containing
its corresponding well. Let $\PP$ be Wiener measure for Brownian motion
with generator $\Delta$, let $\EE$ be its expectation, and write
\eq{
\EE_z[F(\gamma)]
:=
\EE[F(z+\gamma)].
}
For $D_\lambda\in\Set{\Omega_\lambda,-\Omega_\lambda}$, set
\eq{
\tau_{D_\lambda}(\gamma)
:=
\inf\Set{
t>0
\mid
\gamma(t)\notin D_\lambda
}.
}

\begin{lem}[Exterior-ball logarithmic estimates]
\label[lem]{lem:exterior-ball-path-estimates}
Let
\eq{
(D_\lambda,q)
\in
\Set{
(\Omega_\lambda,d),
(-\Omega_\lambda,-d)
}.
}
The following estimates hold.

\begin{enumerate}[label=\textnormal{(\roman*)}]

\item
For every fixed compact set $K\subseteq\RR^n$ and every
$0<T_0<L<\infty$,
\eql{
-\frac1\lambda
\log
K_{\lambda,T/\lambda}^{D_\lambda}(x,y)
=
S_{T,D_\lambda}(x,y)+o(1)
\label{eq:exterior-ball-kernel}
}
uniformly for $T_0\leq T\leq L$ and
\eq{
x,y\in K\cap D_\lambda,
\qquad
\dist\br{
\Set{x,y},
\partial D_\lambda
}
\geq r_\lambda.
}

\item
Suppose that, for some fixed $\eta\in(0,\frac12)$,
\eq{
c\lambda
\leq
e_{\lambda,0}^{D_\lambda}
\leq
C\lambda,
\qquad
\sup_{x\in B_{\lambda^{-1/2+\eta}}(q)}
\frac1\lambda
\log
\vf_{\lambda,0}^{D_\lambda}(x)
\longrightarrow0.
}
Then, for every fixed compact set $K\subseteq\RR^n$,
\eql{
\sup_{\substack{
x\in K\cap D_\lambda\\
\dist(x,\partial D_\lambda)\geq r_\lambda
}}
\left|
-\frac1\lambda
\log
\vf_{\lambda,0}^{D_\lambda}(x)
-
S_{D_\lambda}(q,x)
\right|
\longrightarrow0.
\label{eq:exterior-ball-ground-state-profile}
}
Moreover,
\eq{
\limsup_{\lambda\to\infty}
\sup_{x\in K\cap D_\lambda}
\left\{
\frac1\lambda
\log
\vf_{\lambda,0}^{D_\lambda}(x)
+
S_{D_\lambda}(q,x)
\right\}
\leq0,
}
the estimate in
\cref{eq:exterior-ball-ground-state-profile} is uniform on compact
families of hypersurfaces whose distance from
$\partial D_\lambda$ is at least $r_\lambda$, and
\eq{
\norm{
\vf_{\lambda,0}^{D_\lambda}
}_\infty
\leq
C\lambda^{n/4}.
}

\item
Assume the hypotheses in \textnormal{(ii)} for both
$\Omega_\lambda$ and $-\Omega_\lambda$. For every fixed $L>0$,
\eql{
&-\frac1\lambda
\log
\int_{z\in\Omega_\lambda}
\vf_{\lambda,0}^{\Omega_\lambda}(z)
\EE_z\left[
\exp\left(
-\lambda^2
\int_0^{\tau_{\Omega_\lambda}(\gamma)}
V\br{\gamma(t)}
\dif{t}
\right)
\right.
\nonumber\\
&\hspace{14em}\left.
\times
\vf_{\lambda,0}^{-\Omega_\lambda}
\br{
\gamma\br{
\tau_{\Omega_\lambda}(\gamma)
}
}
\chi_{\Set{
\tau_{\Omega_\lambda}(\gamma)
\leq
L/\lambda
}}
\right]
\dif{z}
\nonumber\\
&\qquad=
\inf_{\substack{
z\in\Omega_\lambda,\ 
y\in\partial\Omega_\lambda\\
0<T\leq L
}}
\left\{
S_{\Omega_\lambda}(d,z)
+
S_{T,\Omega_\lambda}(z,y)
+
S_{-\Omega_\lambda}(y,-d)
\right\}
+o(1).
\label{eq:exterior-ball-integrated-exit}
}
The boxwise upper bounds used to prove
\cref{eq:exterior-ball-integrated-exit} hold uniformly after restricting
the starting points, exit points, or exit times to any of the
finite-dimensional blocks described below, and incur only
$\exp(o(\lambda))$ losses. A matching boxwise lower bound holds whenever
the starting block is at distance at least $r_\lambda$ from
$\partial\Omega_\lambda$ and the restricted variational problem admits
an admissible comparison path. In addition, uniformly for the compact families of starting points used below,
\eq{
\EE_x\left[
\exp\left(
-\lambda^2\int_0^{\tau_{D_\lambda}}V(\gamma(t))\dif{t}
\right);
\ \tau_{D_\lambda}<\infty
\right]
\leq
\exp\left(
-\lambda S_{D_\lambda}\br{x,\partial D_\lambda}+o(\lambda)
\right).
}
For every fixed compact set $K\subseteq\RR^n$, every $\delta>0$,
and every $T_\lambda\downarrow0$, the corresponding pointwise
first-exit density satisfies the same exponential upper bound, up to
an $\exp(o(\lambda))$ factor after multiplication by $t$ and by the
surface area of $\partial D_\lambda$, uniformly for
\eq{
x\in K\cap D_\lambda,
\qquad
\dist\br{x,\partial D_\lambda}\geq\delta,
\qquad
y\in\partial D_\lambda,
\qquad
0<t\leq T_\lambda.
}

\end{enumerate}
\end{lem}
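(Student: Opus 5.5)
We would prove the three parts by the standard route of Freidlin--Wentzell/Varadhan large deviations for Brownian motion at small time, rescaled so that the exterior-ball geometry (radius $r_\lambda=\lambda^{-1/4}$) becomes harmless. Throughout we use the time change $t=T/\lambda$: for a Brownian path $\gamma$ with generator $\Delta$ on $[0,T/\lambda]$, the rescaled path $\Gamma(s):=\gamma(s/\lambda)$ is a Brownian motion with diffusivity $1/\lambda$. The Feynman--Kac weight then reads $\exp(-\lambda\int_0^T V(\Gamma))$, and Schilder's theorem produces the rate $\lambda S_T$ with $S_T$ as in \cref{eq:classical Euclidean action}.

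For (i), the upper bound follows from the Schilder upper bound applied to the bridge, together with Varadhan's lemma (the potential $V$ is continuous and nonnegative). The Dirichlet constraint is a closed condition on paths that stay in $\overline{D_\lambda}$. We approximate $D_\lambda$ from outside by a fixed $\lambda$-independent set $\RR^n\setminus B_{\delta}(-q)$ and let $\delta\downarrow0$. The action of a detour near the excised ball is $O(\delta^2+\delta)$ because $V(x)\le C\norm{x+q}^2$ there, so $S_{T,D_\lambda}$ and the unconstrained-outside-small-ball actions agree up to $o(1)$. For the lower bound we take a near-minimizing path in $D_\lambda$ and push it to distance $\gtrsim r_\lambda$ from the sphere. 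This costs $o(1)$ in action, since the endpoints are already at distance $r_\lambda$. We then use the tube estimate: the probability that the $\lambda^{-1}$-diffusive bridge stays in a tube of radius $r_\lambda/2=\lambda^{-1/4}/2\gg\lambda^{-1/2}$ around the path is $\exp(-\lambda S+o(\lambda))$. Uniformity in $T\in[T_0,L]$ and in $x,y\in K$ comes from equicontinuity of $S_{T,D_\lambda}$ in these parameters.

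For (ii), we write $\vf_{\lambda,0}^{D_\lambda}(x)=\ee^{t e_{\lambda,0}}\int K_{\lambda,t}^{D_\lambda}(x,y)\vf_{\lambda,0}^{D_\lambda}(y)\dif y$ with $t=T/\lambda$, so that $\ee^{te}=\ee^{O(T)}$ is subexponential. For the lower bound we restrict $y$ to $B_{\lambda^{-1/2+\eta}}(q)$, where by hypothesis $\vf\ge \ee^{-o(\lambda)}$, and apply (i); optimizing over $T$ gives $S_{D_\lambda}(q,x)$. For the upper bound we first get the $L^\infty$ bound: $\vf=\ee^{te}K_t\vf$ with $t=\lambda^{-1}$, followed by Cauchy--Schwarz and $K_{2t}(x,x)\le(8\pi t)^{-n/2}$, gives $\norm{\vf}_\infty\le C\lambda^{n/4}$. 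We then split $y$ into near $q$ and far from $q$. The contribution from $y$ far from $q$ is handled by an Agmon-type bootstrap: the weighted function $\ee^{\lambda(1-\epsilon)S_{D_\lambda}(q,\cdot)}\vf$ is controlled using the eikonal inequality $\tfrac14\abs{\nabla S}^2\le V$ and the energy bound $e\le C\lambda$. Large $T$ is controlled by the spectral fact $\vf=\ee^{te}\ee^{-th}\vf$, which is valid for all $t$. Uniformity on hypersurfaces follows from the uniform versions of (i).

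For (iii), we use the Markov property at the exit time to rewrite the integrand as $\int_0^{L/\lambda}\int_{\partial\Omega_\lambda}J(t;z,y)\vf^{-\Omega_\lambda}(y)$. The upper bound is obtained by partitioning the $(z,y,t)$-space into finitely many blocks, giving only $\ee^{o(\lambda)}$ losses. On each block we bound $J$ by the kernel at a nearby interior point: a parabolic boundary Harnack/Gaussian gradient estimate on scale $\sqrt{t}$ costs only polynomial factors, using the rescaled sphere radius $\lambda^{3/4}\to\infty$ at the scale $\tau_\lambda$. We then combine this with (i) and (ii). The lower bound is a tube estimate around a concatenated near-optimal path that hits the sphere transversally. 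The first-exit Agmon bound for $\EE_x[\ldots;\tau<\infty]$ follows by applying the optional stopping theorem to the supersolution $\ee^{\lambda(1-\epsilon)S_{D_\lambda}(x,\cdot)}$, suitably regularized. The pointwise short-time density bound combines this with the boundary gradient estimate.

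The main obstacle we expect is the up-to-the-boundary control in (iii): relating the normal derivative of the Dirichlet kernel on a shrinking sphere to interior values with only $\ee^{o(\lambda)}$ loss, uniformly as $r_\lambda\to0$. We would attack this first through the explicit comparison with the free kernel on the exterior of a ball. This comparison is scale-invariant, and after rescaling by $\sqrt{t}$ the sphere is essentially flat.
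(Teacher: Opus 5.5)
Your architecture (small-noise large deviations for (i), Agmon plus the ground-state equation for (ii), an optional-stopping barrier and boundary regularity for (iii)) runs on a different engine from the paper, which never invokes Schilder or Varadhan. The paper proves (i) and the blockwise first-exit bounds by explicit discretization: a mesh $a_\lambda=\lambda^{-3/4}$, polygonal interpolation with superexponentially small bridge deviations, $\ee^{o(\lambda)}$ strings of $b_\lambda$-cubes, and a radial map with $\norm{D\mathcal R_{\lambda,\delta}-1}\leq C\delta/r_\lambda$ that pushes bounded-action paths into $D_\lambda$ at cost $o(1)$. It proves the total-exit bound in $L^2$: weighted coercivity of $H_\lambda^{D_\lambda}$ from $e_{\lambda,0}^{D_\lambda}\geq c\lambda$, the identity $u_\lambda=g_\lambda-(H_\lambda^{D_\lambda})^{-1}H_\lambda g_\lambda$, then the mean-value inequality for the subharmonic $u_\lambda$. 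It gets the pointwise density from the semigroup identity at $\tau_\lambda=\lambda^{-2}$ plus the boundary-layer bound $u_\lambda\geq\ee^{-C_R\sqrt\lambda}$. Your route is more conceptual, and could work, but as written two steps are wrong.

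First, in (i), $\RR^n\setminus B_\delta(-q)\subseteq D_\lambda$ once $r_\lambda<\delta$. It is an inner approximation and gives a lower bound, not an upper bound. Since $r_\lambda\to0$, no fixed constraint contains all paths staying in $\overline{D_\lambda}$, so the Schilder upper bound only yields the unconstrained $S_T(x,y)$. You then need $S_{T,D_\lambda}(x,y)\leq S_T(x,y)+o(1)$ uniformly. Your detour cost $O(\delta^2+\delta)$ presumes the detour may take time of order $\delta$, but the original path may cross the ball much faster. You need a uniform Lipschitz bound on minimizers (from energy conservation) or a near-identity deformation. The same applies to pushing a near-minimizer a distance $\sim r_\lambda$ off the sphere in your lower bound; the paper pushes only by $3\lambda^{-1/3}\ll r_\lambda$ and uses tubes of radius $\lambda^{-1/3}$.

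Second, $S_{D_\lambda}(q,\cdot)$ is the Agmon distance for $\sqrt V\abs{\dif x}$, since $\frac14\norm{\dot\gamma}^2+V\geq\sqrt V\norm{\dot\gamma}$. The inequality you need is therefore $\norm{\nabla S}^2\leq V$. With your $\frac14\norm{\nabla S}^2\leq V$, the Agmon identity only tolerates the weight $\ee^{(1-\epsilon)\lambda S/2}$, so both your ground-state upper bound and your barrier would come out with half the exponent.

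The step carrying the real difficulty is the one you dispose of with ``suitably regularized''. With $\phi=S_{D_\lambda}(x,\cdot)$ and $c=(1-\epsilon)\lambda$, the supersolution condition is $\lambda^2V\geq c^2\norm{\nabla\phi}^2+c\Delta\phi$. At the retained well, $V$ and $\nabla\phi$ vanish, but $\Delta\phi$ need not be nonpositive: in one dimension, $\phi''=+\sqrt{V''(q)/2}$ on the side of $q$ away from $x$. The condition therefore fails on a neighbourhood of size $\asymp(\epsilon\lambda)^{-1/2}$ of $q$, which is exactly where the starting points $x\in A_{\lambda,\sigma}$ of \cref{lem:integrated-defect-estimate} sit. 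Mollifying does not help, because the defect is curvature rather than roughness. You need either an explicit modification of the exponent on that scale (affordable, since $\lambda\phi$ varies only by $O(1/\epsilon)$ there), or the paper's $L^2$ argument, in which only $\nabla\Phi$ appears and $e_{\lambda,0}^{D_\lambda}\geq c\lambda$ supplies the coercivity near $q$.

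Several uniformities are also only asserted.
\begin{enumerate}
\item Since (i) is uniform only for $T\in[T_0,L]$, ``optimizing over $T$'' in (ii) requires a single $T_{\varepsilon,K}$, independent of $\lambda$ and $x$, with $S_{T_{\varepsilon,K},D_\lambda}(q,x)\leq S_{D_\lambda}(q,x)+\varepsilon$. The infimum over $T$ is in general approached only as $T\to\infty$; the paper proves exactly this uniform approximation.
\item The hypothesis in (ii), as stated, controls only the supremum of $\vf_{\lambda,0}^{D_\lambda}$ on the small ball. A Harnack step (the paper's ball $B_{c/\lambda}(w_\lambda)$) is needed before you restrict the $y$-integral.
\item In (iii), your interior comparison points are much closer to $\partial D_\lambda$ than $r_\lambda$, and exit times go down to $0$, both outside the range of (i). You need a kernel upper bound valid up to the boundary and as $T\downarrow0$. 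This regime cannot be discarded, because the leading contribution comes from starting points in a shrinking neighbourhood of $\partial\Omega_\lambda$ (the paper's lower bound uses $z_\lambda$ at distance $2r_\lambda$).
\end{enumerate}
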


\begin{proof}
We first prove the finite-time path estimate. In the rescaled time
variable $s=\lambda t$, the process
\eq{
X_\lambda(s)
:=
\gamma(s/\lambda),
\qquad
0\leq s\leq L,
}
has increments with density
\eq{
\left(
\frac{\lambda}{4\pi(s-r)}
\right)^{\frac n2}
\exp\left(
-\frac{\lambda\norm{x-y}^2}
{4(s-r)}
\right).
}
Under this change of variables, the Feynman--Kac weight becomes
\eql{
&\exp\left(
-\lambda^2
\int_0^{T/\lambda}
V\br{\gamma(t)}
\dif{t}
\right)
\nonumber\\
&\qquad=
\exp\left(
-\lambda
\int_0^T
V\br{X_\lambda(s)}
\dif{s}
\right).
\label{eq:exterior-ball-rescaled-feynman-kac}
}
Thus the rate functional is
\eq{
\gamma
\longmapsto
\int_0^T
\left(
\frac14
\norm{\dot\gamma(s)}^2
+
V\br{\gamma(s)}
\right)
\dif{s}.
}

Choose
\eq{
a_\lambda
:=
\lambda^{-3/4},
\qquad
\widetilde r_\lambda
:=
\lambda^{-1/3},
\qquad
b_\lambda
:=
a_\lambda^2
=
\lambda^{-3/2}.
}
These scales satisfy
\eql{
\frac{\log\lambda}{\lambda}
\ll
a_\lambda
\ll
r_\lambda^2,
\qquad
\sqrt{a_\lambda}
\ll
\widetilde r_\lambda
\ll
r_\lambda,
\qquad
b_\lambda
\ll
\widetilde r_\lambda.
\label{eq:exterior-ball-mesh-scales}
}
Divide $[0,T]$ into intervals whose lengths lie between
$a_\lambda/2$ and $a_\lambda$. If
\eq{
0=s_0<s_1<\cdots<s_m=T
}
are the resulting mesh points, then, uniformly for
$T\in[T_0,L]$,
\eq{
m=O(a_\lambda^{-1}),
\qquad
m\log\lambda=o(\lambda).
}
The joint density of the mesh values
$x_j=X_\lambda(s_j)$ is
\eql{
&\prod_{j=1}^m
\left(
\frac{\lambda}
{4\pi(s_j-s_{j-1})}
\right)^{\frac n2}
\nonumber\\
&\qquad\times
\exp\left[
-\lambda
\sum_{j=1}^m
\frac{
\norm{x_j-x_{j-1}}^2
}{
4(s_j-s_{j-1})
}
\right]
\dif{x_1}
\cdots
\dif{x_m}.
\label{eq:exterior-ball-mesh-density}
}
The exponent in
\cref{eq:exterior-ball-mesh-density} is precisely the kinetic action
of the polygonal interpolation, whereas every normalization factor
produced by the discretization is $\exp(o(\lambda))$.

Let $X_\lambda^{(m)}$ denote the polygonal interpolation of the mesh
values. Conditionally on those values,
$X_\lambda-X_\lambda^{(m)}$ is a collection of independent Brownian
bridges. The reflection estimate and a union bound give
\eql{
\PP\left[
\norm{
X_\lambda-X_\lambda^{(m)}
}_\infty
>
\widetilde r_\lambda
\right]
\leq
Cm
\exp\left(
-\frac{
c\lambda\widetilde r_\lambda^2
}{
a_\lambda
}
\right).
\label{eq:exterior-ball-bridge-bound}
}
By \cref{eq:exterior-ball-mesh-scales},
\eq{
\frac{
\lambda\widetilde r_\lambda^2
}{
a_\lambda
}
=
\lambda^{13/12}.
}
Consequently, the right-hand side of
\cref{eq:exterior-ball-bridge-bound} is superexponentially small on
the scale $\lambda$.

We next control the error caused by the shrinking boundary. For
$n\geq2$, every point in a neighborhood of the excluded ball can be
written as
\eq{
x=-q+s\omega,
\qquad
s>0,
\qquad
\omega\in\bS^{n-1}.
}
Choose $\vartheta\in C^\infty(\RR;[0,1])$ such that
\eq{
\vartheta(u)=1
\quad
\text{for }
u\leq\frac14,
\qquad
\vartheta(u)=0
\quad
\text{for }
u\geq1.
}
For $0<\delta=o(r_\lambda)$ and
$s\geq r_\lambda-\delta$, define
\eq{
f_{\lambda,\delta}(s)
:=
s
+
3\delta
\vartheta\left(
\frac{s-r_\lambda}{r_\lambda}
\right).
}
For all sufficiently large $\lambda$, this function is increasing and
satisfies
\eq{
f_{\lambda,\delta}(s)
\geq
r_\lambda+2\delta
\qquad
\left(
s\geq r_\lambda-\delta
\right).
}
Indeed, when
$r_\lambda-\delta\leq s\leq r_\lambda+2\delta$, the cutoff equals one
for all sufficiently large $\lambda$, while for
$s\geq r_\lambda+2\delta$ the inequality follows from
$f_{\lambda,\delta}(s)\geq s$. Moreover,
\eq{
f_{\lambda,\delta}(s)
=
s
\quad
\text{for }
s\geq2r_\lambda,
}
and
\eq{
\sup_{s\geq r_\lambda-\delta}
\abs{
f_{\lambda,\delta}(s)-s
}
\leq
3\delta,
\qquad
\sup_{s\geq r_\lambda-\delta}
\abs{
f_{\lambda,\delta}'(s)-1
}
\leq
C\frac{\delta}{r_\lambda}.
}
Define
\eq{
\mathcal R_{\lambda,\delta}
\br{-q+s\omega}
:=
-q+
f_{\lambda,\delta}(s)\omega
}
on the $\delta$-neighborhood of $\overline{D_\lambda}$. This map
takes that neighborhood into
\eq{
\Set{
x\in D_\lambda
\mid
\dist(x,\partial D_\lambda)
\geq2\delta
}
}
and is the identity whenever
$\dist(x,\partial D_\lambda)\geq r_\lambda$. Since
$s\geq r_\lambda-\delta$ on its domain,
\eq{
\sup
\norm{
D\mathcal R_{\lambda,\delta}-1
}
\leq
C\frac{\delta}{r_\lambda}.
}
The corresponding construction in dimension one is immediate.

If $\gamma$ remains in a fixed compact set and has action bounded by
$M$, then
\eql{
&
\left|
\int_0^T
\left(
\frac14
\norm{
\frac{\dif}{\dif{s}}
\mathcal R_{\lambda,\delta}
\br{\gamma(s)}
}^2
+
V\br{
\mathcal R_{\lambda,\delta}
\br{\gamma(s)}
}
\right)
\dif{s}
\right.
\nonumber\\
&\hspace{6em}\left.
-
\int_0^T
\left(
\frac14
\norm{\dot\gamma(s)}^2
+
V\br{\gamma(s)}
\right)
\dif{s}
\right|
\nonumber\\
&\qquad\leq
C_M
\left(
\frac{\delta}{r_\lambda}
+
\delta
\right).
\label{eq:exterior-ball-radial-deformation}
}
Here we used the Lipschitz continuity of $V$ on compact sets. In
particular, taking $\delta=\widetilde r_\lambda$ in
\cref{eq:exterior-ball-radial-deformation} changes the action by
$o(1)$.

Paths which leave a sufficiently large fixed ball may be discarded.
Indeed, the finite-dimensional Gaussian density gives an upper bound
of the form
\eq{
C
\exp\left(
-\frac{c\lambda R^2}{L}
\right)
}
for paths which start and end in a fixed compact set but leave
$B_R(0)$. We may therefore perform the remaining argument inside a
fixed compact set.

Cover this compact set by cubes of side $b_\lambda$. The logarithm
of the number of possible strings of cubes is
\eq{
O\br{
m\log(b_\lambda^{-1})
}
=
o(\lambda).
}
On bounded action sets, replacing each mesh value by any point in
the same cube changes the kinetic action by $o(1)$. Indeed, the
Cauchy--Schwarz inequality reduces the error to
\eq{
C
\left(
\sum_{j=1}^m
\frac{
\norm{x_j-x_{j-1}}^2
}{
s_j-s_{j-1}
}
\right)^{1/2}
\left(
\sum_{j=1}^m
\frac{
b_\lambda^2
}{
s_j-s_{j-1}
}
\right)^{1/2}
+
C
\sum_{j=1}^m
\frac{
b_\lambda^2
}{
s_j-s_{j-1}
},
}
and
\eq{
\sum_{j=1}^m
\frac{
b_\lambda^2
}{
s_j-s_{j-1}
}
=
O\left(
\frac{
b_\lambda^2
}{
a_\lambda^2
}
\right)
=
O(a_\lambda^2)
=
o(1).
}
The error in the potential term is also $o(1)$. This follows from
the uniform continuity of $V$ on the compact set and the
Cauchy--Schwarz bound on the oscillation of a bounded-action
polygonal path over one mesh interval.

For the upper bound in
\cref{eq:exterior-ball-kernel}, sum over the cube strings and use
\cref{eq:exterior-ball-bridge-bound}. If the Brownian path stays in
$D_\lambda$, then, outside a superexponentially small event, its
polygonal interpolation stays within
$\widetilde r_\lambda$ of $D_\lambda$. Applying
$\mathcal R_{\lambda,\widetilde r_\lambda}$ and
\cref{eq:exterior-ball-radial-deformation} produces an admissible
path whose action differs by $o(1)$. Since the number of cube strings
is $\exp(o(\lambda))$, this proves the required upper bound.

For the lower bound, choose a polygonal path whose action is within
$o(1)$ of $S_{T,D_\lambda}(x,y)$ and apply the radial deformation
with $3\widetilde r_\lambda$ in place of
$\widetilde r_\lambda$. Because the endpoints are at distance at
least $r_\lambda$ from the boundary and
$\widetilde r_\lambda\ll r_\lambda$, the deformation leaves the
endpoints fixed. Restrict the mesh values to cubes of side
$b_\lambda$ about the deformed path and restrict every Brownian bridge
to remain within $\widetilde r_\lambda$ of its linear interpolation.
All paths in this tube remain in $D_\lambda$. The mesh volumes,
Gaussian normalization factors, and bridge restrictions together
contribute only $\exp(o(\lambda))$. This proves the lower bound and
hence \cref{eq:exterior-ball-kernel}.

We next establish the corresponding first-exit estimate. The scale
$b_\lambda$ will continue to be used for the mesh-node integrations,
but it is unnecessarily small for the exit-time and exit-point
decompositions. Set
\eq{
\ell_\lambda
:=
\widetilde r_\lambda.
}
Cover $(0,L]$ by overlapping intervals of length
$O(\ell_\lambda)$ whose concentric subintervals of half that length
still cover $(0,L]$. Likewise, cover $\partial D_\lambda$ by
overlapping coordinate patches of diameter $O(\ell_\lambda)$ whose
concentric subpatches of half that diameter still cover the boundary.
The total number of resulting space-time blocks is bounded by
\eq{
C
\ell_\lambda^{-1}
\left(
1+
\frac{r_\lambda}{\ell_\lambda}
\right)^{n-1}
=
\exp(o(\lambda)).
}
In dimension one there is only one boundary point for the relevant
component.

Fix one exit-time block and one boundary patch. We first prove the
upper bound. Sort the paths according to the fine mesh interval
$[s_{j-1},s_j]$ containing their first exit. On the complement of
the event in
\cref{eq:exterior-ball-bridge-bound}, the polygonal interpolation is
within $\ell_\lambda$ of the Brownian path. Consequently, if the true
path exits through the prescribed patch during the prescribed time
block, then the polygonal path reaches the
$C\ell_\lambda$-enlargement of that patch during the
$C\ell_\lambda$-enlargement of the time block.

Up to its final mesh interval, the polygonal path lies in the
$\ell_\lambda$-neighborhood of $\overline{D_\lambda}$. Applying
$\mathcal R_{\lambda,\ell_\lambda}$ pushes this portion into
$D_\lambda$ and changes its action by $o(1)$ according to
\cref{eq:exterior-ball-radial-deformation}. At the end of the
deformed portion, append a radial segment of length
$O(\ell_\lambda)$ which ends on $\partial D_\lambda$. Traverse this
segment in time $O(\ell_\lambda)$. Since $V$ is bounded on the fixed
compact set under consideration, its kinetic and potential actions
are both $O(\ell_\lambda)$. Thus the resulting path is admissible for
the enlarged space-time block and its action differs from the
polygonal action by $o(1)$.

This proves the upper bound with the infimum taken over the enlarged
block. The enlargement does not affect the summed estimate. Indeed,
every enlarged time interval or boundary patch meets only a bounded
number of the original overlapping blocks. Reassigning its
contribution to these neighboring blocks therefore changes the sum
by only a fixed factor. Moreover, the endpoint profiles used below
are uniformly locally Lipschitz, so moving exit data by
$O(\ell_\lambda)$ changes those profiles by $o(1)$. Since the number
of blocks is $\exp(o(\lambda))$, the full upper bound incurs only an
$\exp(o(\lambda))$ loss.

This argument also covers exits occurring during the first fine mesh
interval. For completeness, if the starting point has distance at
least $\ell_\lambda$ from the boundary, the Gaussian maximal estimate
gives
\eq{
\PP_z\left[
\tau_{D_\lambda}
\leq
\frac{a_\lambda}{\lambda}
\right]
\leq
C
\exp\left(
-\frac{
c\lambda\ell_\lambda^2
}{
a_\lambda
}
\right),
}
which is superexponentially small by
\cref{eq:exterior-ball-mesh-scales}. Starting points in the
$\ell_\lambda$ boundary layer are retained in the block decomposition
and are handled by the preceding deformation argument.

We now prove the lower bound. Let $\gamma$ be an admissible path whose
action is within $\varepsilon$ of the constrained infimum for a
space-time block. By an arbitrarily small perturbation in a terminal
neighborhood, we may arrange that $\gamma$ remains strictly inside
$D_\lambda$ before its terminal time and reaches
$\partial D_\lambda$ transversely at a point in the interior of the
chosen boundary patch and at a time in the interior of the chosen
time block. This perturbation changes the action by at most
$\varepsilon$. Extend $\gamma$ for a short time after its terminal
time along the outward normal. The extension may be chosen with
bounded speed and duration $O(\ell_\lambda)$, and hence has kinetic
action $O(\ell_\lambda)$.

For all sufficiently large $\lambda$, the portion of $\gamma$ before
the final $O(\ell_\lambda)$ time interval has a positive distance
from the boundary. Push that portion farther into the domain, if
necessary, by applying
$\mathcal R_{\lambda,3\ell_\lambda}$. In the final
$O(\ell_\lambda)$ interval, retain the transverse terminal segment
and its outward extension. We thereby obtain a comparison path which
has distance at least $2\ell_\lambda$ from the boundary before its
terminal interval, crosses the boundary transversely inside the
prescribed block, and lies at distance at least
$2\ell_\lambda$ outside the domain at the end of its outward
extension. Its action before the exit differs from that of $\gamma$
by $o(1)$.

Restrict every fine-mesh value before the outward extension to a cube
of side $b_\lambda$ about this comparison path and restrict each
conditional Brownian bridge to remain within $\ell_\lambda$ of its
linear interpolation. Before the terminal interval, every path in
this event remains inside $D_\lambda$. At the end of the outward
extension it lies outside $D_\lambda$. Continuity therefore forces a
first exit during the prescribed enlarged time block and through the
prescribed enlarged boundary patch. The bridge restriction has
conditional probability
\eq{
1-
O\left(
m
\exp\left(
-\frac{
c\lambda\ell_\lambda^2
}{
a_\lambda
}
\right)
\right)
=
1-o(1),
}
while the mesh-cube volumes and Gaussian normalization factors
contribute only $\exp(o(\lambda))$. The additional kinetic action of
the outward extension is $o(1)$, and the Feynman--Kac potential is
integrated only up to the first exit. Letting
$\varepsilon\downarrow0$ proves the matching lower bound.

We have therefore proved the first-exit Laplace bounds uniformly over
the space-time blocks. The exit blocks have diameter
$O(\ell_\lambda)$, while the logarithmic endpoint profiles used below
are uniformly locally Lipschitz. Replacing a point by any other point
in the same or a neighboring block consequently changes those
profiles by $o(1)$. Thus all blockwise sums retain only
$\exp(o(\lambda))$ losses.

We also record the total-exit upper bound used in
\cref{lem:integrated-defect-estimate}. Set
\eql{
u_\lambda(x)
:=
\EE_x\left[
\exp\left(
-\lambda^2\int_0^{\tau_{D_\lambda}}V(\gamma(t))\dif{t}
\right);
\ \tau_{D_\lambda}<\infty
\right].
\label{eq:total-exit-harmonic-function}
}
Thus $u_\lambda$ is the minimal positive solution of
\eq{
\br{-\Delta+\lambda^2V}u_\lambda=0
\quad\text{in }D_\lambda,
\qquad
u_\lambda=1
\quad\text{on }\partial D_\lambda.
}
Write $H_\lambda^{D_\lambda}$ for the Dirichlet realization of
$-\Delta+\lambda^2V$. Choose
$g_\lambda\in C_c^\infty(\overline{D_\lambda})$ with boundary trace
one, supported in the collar
\eq{
\dist\br{x,\partial D_\lambda}
\leq
2\lambda^{-1/2},
}
and with
\eq{
\norm{\nabla^j g_\lambda}_\infty
\leq
C\lambda^{j/2},
\qquad
j\in\Set{0,1,2}.
}
The collar is contained in an $O(r_\lambda)$ neighborhood of the
excluded minimum. Taylor's theorem and radial comparison with the
boundary therefore give
\eql{
\sup_{\supp g_\lambda}V
\leq
C\lambda^{-1/2},
\qquad
\sup_{\supp g_\lambda}
S_{D_\lambda}\br{\cdot,\partial D_\lambda}
\leq
C\lambda^{-3/4}.
\label{eq:total-exit-collar-bounds}
}

By time reparametrization, $S_{D_\lambda}$ is the intrinsic distance
for the length element $\sqrt V\,\abs{\dif{x}}$. Consequently,
$S_{D_\lambda}(\cdot,\partial D_\lambda)$ is locally Lipschitz and
\eq{
\norm{
\nabla S_{D_\lambda}\br{\cdot,\partial D_\lambda}
}^2
\leq
V
}
almost everywhere. Put
\eq{
\Phi_\lambda
:=
\br{1-\lambda^{-1/2}}
\lambda
S_{D_\lambda}\br{\cdot,\partial D_\lambda}.
}
For $R>0$, put $\Phi_{\lambda,R}:=\Phi_\lambda\wedge R$. Initially
for $f\in C_c^\infty(D_\lambda)$, and hence by closure in the
quadratic-form sense, the Agmon identity gives
\eql{
\operatorname{Re}
\left\langle
f,
\ee^{\Phi_{\lambda,R}}
H_\lambda^{D_\lambda}
\ee^{-\Phi_{\lambda,R}}f
\right\rangle
&=
\norm{\nabla f}_2^2
+
\int_{D_\lambda}
\left(
\lambda^2V-\norm{\nabla\Phi_{\lambda,R}}^2
\right)
\abs{f}^2\dif{x}
\nonumber\\
&\geq
\left[
1-\br{1-\lambda^{-1/2}}^2
\right]
\left\langle
f,H_\lambda^{D_\lambda}f
\right\rangle
\nonumber\\
&\geq
c\lambda^{1/2}\norm{f}_2^2.
\label{eq:total-exit-weighted-coercivity}
}
Here the last inequality is precisely the assumed bound
$e_{\lambda,0}^{D_\lambda}\geq c\lambda$; in particular, the zero of
$V$ at the retained minimum causes no loss of coercivity. Applying
this form estimate to the weak resolvent equation, with compact
spatial cutoffs and then by density, gives, uniformly in $R$,
\eql{
\sup_{R>0}
\norm{
\ee^{\Phi_{\lambda,R}}
\br{H_\lambda^{D_\lambda}}^{-1}
\ee^{-\Phi_{\lambda,R}}
}_{2\to2}
\leq
C\lambda^{-1/2}.
\label{eq:total-exit-weighted-resolvent}
}
The stopped Feynman--Kac formula, followed by Dirichlet exhaustion of
the exterior domain, gives the exact identity
\eql{
u_\lambda
=
g_\lambda
-
\br{H_\lambda^{D_\lambda}}^{-1}H_\lambda g_\lambda.
\label{eq:total-exit-resolvent-representation}
}
Indeed, $H_\lambda^{D_\lambda}$ is invertible by the same spectral
lower bound, and this identity selects the minimal solution at
infinity; here $H_\lambda g_\lambda$ denotes the differential
expression, which is compactly supported and belongs to $L^2$. The
derivative bounds on $g_\lambda$ and
\cref{eq:total-exit-collar-bounds} imply
\eq{
\norm{\ee^{\Phi_\lambda}g_\lambda}_2
+
\norm{\ee^{\Phi_\lambda}H_\lambda g_\lambda}_2
\leq
\ee^{o(\lambda)}.
}
Thus the uniform truncated estimate in
\cref{eq:total-exit-weighted-resolvent,eq:total-exit-resolvent-representation}
and Fatou's lemma give
\eql{
\norm{\ee^{\Phi_\lambda}u_\lambda}_2
\leq
\ee^{o(\lambda)}.
\label{eq:total-exit-weighted-L2}
}

The compact starting families used below remain a fixed positive
distance from $\partial D_\lambda$. Since
$\Delta u_\lambda=\lambda^2Vu_\lambda\geq0$, the mean-value
inequality on $B_{\lambda^{-1}}(x)$, together with
\cref{eq:total-exit-weighted-L2} and the local Lipschitz bound on
$\Phi_\lambda$, gives
\eq{
u_\lambda(x)
\leq
C\lambda^{n/2}
\norm{u_\lambda}_{L^2(B_{\lambda^{-1}}(x))}
\leq
\exp\left(
-\lambda S_{D_\lambda}\br{x,\partial D_\lambda}
+o(\lambda)
\right).
}
Here $S_{D_\lambda}(x,\partial D_\lambda)$ is uniformly bounded on
the starting families, so replacing $\Phi_\lambda(x)$ by
$\lambda S_{D_\lambda}(x,\partial D_\lambda)$ costs only
$o(\lambda)$.
This proves, uniformly on those families,
\eql{
\EE_x\left[
\exp\left(
-\lambda^2\int_0^{\tau_{D_\lambda}}V(\gamma(t))\dif{t}
\right);
\ \tau_{D_\lambda}<\infty
\right]
\leq
\exp\left(
-\lambda S_{D_\lambda}\br{x,\partial D_\lambda}+o(\lambda)
\right).
\label{eq:total-exit-agmon-resolvent-bound}
}

It remains to pass from the total exit to the pointwise first-exit
density. Let $\nu_{D_\lambda}$ denote the outward unit normal and
write
\eq{
J_\lambda^{D_\lambda}(t;x,y)
:=
-\nu_{D_\lambda}(y)\cdot
\nabla_yK_{\lambda,t}^{D_\lambda}(x,y),
\qquad
y\in\partial D_\lambda.
}
Feynman--Kac domination, followed by the inward boundary quotient,
bounds this by the corresponding Dirichlet first-exit density
$J_0^{D_\lambda}$ with $V=0$. The free Gaussian bound followed by the
local Dirichlet boundary gradient estimate on a parabolic cylinder of
spatial radius $\sqrt{s}$ gives
\eql{
0
\leq
J_\lambda^{D_\lambda}(s;z,y)
\leq
J_0^{D_\lambda}(s;z,y)
\leq
Cs^{-(n+1)/2}
\exp\left(
-\frac{c\norm{z-y}^2}{s}
\right),
\qquad
0<s\leq r_\lambda^2.
\label{eq:exterior-ball-poisson-kernel-bound}
}
The constants are uniform in $\lambda$: after rescaling by
$\sqrt{s}$, the curvature of the boundary sphere is bounded by
$\sqrt{s}/r_\lambda\leq1$.

Set $\tau_\lambda:=\lambda^{-2}$. For every fixed $R>0$,
\eql{
\inf_{\substack{
y\in\partial D_\lambda,\ z\in D_\lambda\\
\norm{z-y}\leq R/\sqrt\lambda
}}
u_\lambda(z)
\geq
\exp\br{-C_R\sqrt\lambda}.
\label{eq:total-exit-boundary-layer-lower}
}
Indeed, constrain the Brownian motion for a time $O_R(\lambda^{-1})$
to a tube of radius $O_R(\lambda^{-1/2})$ which reaches the boundary.
After parabolic rescaling, this event has probability bounded below
by a positive constant depending only on $R$. Throughout the tube,
$V\leq Cr_\lambda^2=C\lambda^{-1/2}$, so its Feynman--Kac weight is
at least $\exp\br{-C_R\sqrt\lambda}$.

For $t>\tau_\lambda$, the semigroup property and differentiation at
the boundary give
\eql{
J_\lambda^{D_\lambda}(t;x,y)
=
\int_{z\in D_\lambda}
K_{\lambda,t-\tau_\lambda}^{D_\lambda}(x,z)
J_\lambda^{D_\lambda}(\tau_\lambda;z,y)
\dif{z}.
\label{eq:pointwise-exit-semigroup}
}
Split this integral according to
$\norm{z-y}\leq R/\sqrt\lambda$. By the strong Markov property and
sub-Markovianity,
\eq{
\int_{z\in D_\lambda}
K_{\lambda,s}^{D_\lambda}(x,z)u_\lambda(z)\dif{z}
\leq
u_\lambda(x),
\qquad
\int_{z\in D_\lambda}
K_{\lambda,s}^{D_\lambda}(x,z)\dif{z}
\leq
1.
}
Consequently,
\cref{eq:exterior-ball-poisson-kernel-bound,eq:total-exit-boundary-layer-lower,eq:pointwise-exit-semigroup}
give
\eql{
J_\lambda^{D_\lambda}(t;x,y)
\leq
C\lambda^{n+1}
\left[
\ee^{C_R\sqrt\lambda}u_\lambda(x)
+
\ee^{-cR^2\lambda}
\right],
\qquad
t>\tau_\lambda.
\label{eq:pointwise-exit-from-total-exit}
}
The actions are uniformly bounded on every fixed compact family of
starting points. We may therefore fix $R$ so large that
\eq{
cR^2
>
1+
\sup_{\lambda,x}
S_{D_\lambda}\br{x,\partial D_\lambda},
}
where the supremum is restricted to the starting family. The second
term in \cref{eq:pointwise-exit-from-total-exit} then has the required
exponential bound. It follows that
\eq{
J_\lambda^{D_\lambda}(t;x,y)
\leq
\exp\left(
-\lambda S_{D_\lambda}\br{x,\partial D_\lambda}
+o(\lambda)
\right),
\qquad
t>\tau_\lambda.
}
For the starting families used below there is also a fixed positive
lower bound on $\dist(x,\partial D_\lambda)$. Hence, for
$0<t\leq\tau_\lambda$,
\cref{eq:exterior-ball-poisson-kernel-bound} gives instead
\eq{
J_\lambda^{D_\lambda}(t;x,y)
\leq
\exp\br{-c\lambda^2}.
}
Since $\abs{\partial D_\lambda}$ is polynomial in $\lambda$, we have
therefore proved, for every $T_\lambda\downarrow0$,
\eql{
t\abs{\partial D_\lambda}
J_\lambda^{D_\lambda}(t;x,y)
\leq
\exp\left(
-\lambda S_{D_\lambda}\br{x,\partial D_\lambda}
+o(\lambda)
\right),
\qquad
0<t\leq T_\lambda,
\label{eq:pointwise-first-exit-bound}
}
uniformly in the starting and exit points under consideration.

We next prove
\cref{eq:exterior-ball-ground-state-profile}. Write
\eq{
\vf_\lambda
:=
\vf_{\lambda,0}^{D_\lambda},
\qquad
e_\lambda
:=
e_{\lambda,0}^{D_\lambda},
\qquad
S_\lambda(x)
:=
S_{D_\lambda}(q,x).
}
Domination by the free heat kernel at time $1/\lambda$ gives
\eq{
\norm{\vf_\lambda}_\infty
\leq
\ee^{e_\lambda/\lambda}
\norm{
\ee^{-H_\lambda^{D_\lambda}/\lambda}
}_{2\to\infty}
\leq
C\lambda^{n/4}.
}

The nondegeneracy of the two minima and
$r_\lambda=\lambda^{-1/4}$ imply
\eql{
\lambda
\inf_{
x\in
D_\lambda
\setminus
B_{\lambda^{-1/2+\eta}}(q)
}
V(x)
\geq
c
\min\Set{
\lambda^{2\eta},
\lambda^{1/2},
\lambda
}
\longrightarrow
\infty.
\label{eq:exterior-ball-coercivity}
}
The three terms in
\cref{eq:exterior-ball-coercivity} correspond, respectively, to the
neighborhood of $q$, the neighborhood of the excluded well, and the
complement of fixed neighborhoods of the two wells.

We shall also use the Agmon-length representation
\eql{
S_\lambda(x)
=
\inf_{\substack{
\sigma(0)=q,\ 
\sigma(1)=x\\
\sigma((0,1))
\subseteq
D_\lambda
}}
\int_0^1
\sqrt{
V\br{\sigma(s)}
}
\norm{\dot\sigma(s)}
\dif{s}.
\label{eq:exterior-ball-agmon-length}
}
For one direction, use
\eq{
\frac14
\norm{\dot\gamma}^2
+
V(\gamma)
\geq
\sqrt{V(\gamma)}
\norm{\dot\gamma}.
}
For the reverse direction, parameterize a spatial path away from the
zeros of $V$ by
\eq{
\norm{\dot\gamma}
=
2\sqrt{V(\gamma)}.
}
Near a zero of $V$, replace this parameterization by constant speed
and then shrink the replaced neighborhood. Since
$V(x)\leq C\norm{x-q}^2$ near $q$, the resulting excess action tends
to zero. This proves \cref{eq:exterior-ball-agmon-length}.

It follows from \cref{eq:exterior-ball-agmon-length} that $S_\lambda$
is locally Lipschitz and
\eq{
\norm{\nabla S_\lambda(x)}^2
\leq
V(x)
}
almost everywhere. Fix $\varepsilon\in(0,1)$ and apply the Agmon
identity to the Dirichlet ground state with the Lipschitz weight
\eq{
\exp\br{
(1-\varepsilon)
\lambda S_\lambda
}.
}
Using bounded smooth approximations to the weight, one obtains
\eql{
0
&=
\int_{x\in D_\lambda}
\norm{
\nabla\left(
\ee^{(1-\varepsilon)\lambda S_\lambda(x)}
\vf_\lambda(x)
\right)
}^2
\dif{x}
\nonumber\\
&\quad+
\int_{x\in D_\lambda}
\left[
\lambda^2V(x)
-
e_\lambda
-
(1-\varepsilon)^2
\lambda^2
\norm{\nabla S_\lambda(x)}^2
\right]
\nonumber\\
&\hspace{8em}\times
\ee^{2(1-\varepsilon)\lambda S_\lambda(x)}
\vf_\lambda(x)^2
\dif{x}.
\label{eq:exterior-ball-agmon-identity}
}
The expression in square brackets is bounded below by
\eq{
\br{
1-(1-\varepsilon)^2
}
\lambda^2V(x)
-
e_\lambda.
}
By \cref{eq:exterior-ball-coercivity} and $e_\lambda\leq C\lambda$,
this expression is positive outside
$B_{\lambda^{-1/2+\eta}}(q)$ for all sufficiently large $\lambda$.

Inside that ball, the straight path from $q$ to $x$ and the
nondegeneracy of the minimum give
\eq{
\sup_{
x\in
B_{\lambda^{-1/2+\eta}}(q)
}
\lambda S_\lambda(x)
=
O\br{\lambda^{2\eta}}
=
o(\lambda).
}
Using the $L^2$ normalization of $\vf_\lambda$ in
\cref{eq:exterior-ball-agmon-identity} therefore gives
\eql{
\int_{x\in D_\lambda}
\ee^{2(1-\varepsilon)\lambda S_\lambda(x)}
\vf_\lambda(x)^2
\dif{x}
\leq
\exp(o(\lambda)).
\label{eq:exterior-ball-weighted-L2}
}
Interior and Dirichlet boundary elliptic estimates on balls and
boundary half-balls of radius $\lambda^{-1}$ turn
\cref{eq:exterior-ball-weighted-L2} into
\eq{
\limsup_{\lambda\to\infty}
\sup_{x\in K\cap D_\lambda}
\left\{
\frac1\lambda
\log\vf_\lambda(x)
+
(1-\varepsilon)
S_\lambda(x)
\right\}
\leq0.
}
These estimates are uniform at the boundary because
\eq{
\lambda^{-1}
\ll
r_\lambda,
\qquad
\lambda r_\lambda
\longrightarrow
\infty,
}
so the boundary curvature is uniformly bounded after rescaling by
$\lambda$. Since $S_\lambda$ is uniformly bounded on
$K\cap D_\lambda$, letting $\varepsilon\downarrow0$ proves
\eq{
\limsup_{\lambda\to\infty}
\sup_{x\in K\cap D_\lambda}
\left\{
\frac1\lambda
\log\vf_\lambda(x)
+
S_\lambda(x)
\right\}
\leq0.
}

For the reverse inequality, we first prove the uniform finite-time
approximation
\eql{
S_{T_{\varepsilon,K},D_\lambda}(q,x)
\leq
S_{D_\lambda}(q,x)
+
\varepsilon
\label{eq:uniform-finite-time-approximation}
}
for some $T_{\varepsilon,K}<\infty$ independent of $\lambda$, uniformly
for
\eq{
x\in K\cap D_\lambda,
\qquad
\dist(x,\partial D_\lambda)
\geq
r_\lambda.
}

To prove
\cref{eq:uniform-finite-time-approximation}, note first that the
exterior of a ball is uniformly quasiconvex. If a straight segment
meets the excluded ball, replace the part inside that ball by the
shorter spherical arc. The resulting path has length bounded by a
constant multiple of the distance between its endpoints, uniformly
in $\lambda$. The corresponding statement in dimension one is
immediate.

Choose an almost minimizing path in
\cref{eq:exterior-ball-agmon-length}. Outside fixed small
neighborhoods of the two minima, $V$ is bounded below by a positive
constant. Its Euclidean length there may therefore be taken uniformly
bounded. Inside a $\delta$-neighborhood of either minimum, replace
the path by radial segments and, near the excluded well, by a
spherical arc if necessary. The replacement has Euclidean length
$O(\delta)$ and changes the Agmon length by at most $O(\delta^2)$.

On the part where $V$ is bounded below, parameterize the resulting
path by
\eq{
\norm{\dot\gamma}
=
2\sqrt{V(\gamma)}.
}
Traverse the portions inside the $\delta$-neighborhoods at constant
speed in a fixed amount of time. Their kinetic and potential actions
are both $O(\delta^2)$. Thus the total traversal time is bounded
independently of $\lambda$ and $x$, while the action exceeds
$S_{D_\lambda}(q,x)$ by at most $C\delta^2$. Choose $\delta$ so that
$C\delta^2<\varepsilon$. Waiting at $q$, where $V(q)=0$, makes the
total time equal to one common value $T_{\varepsilon,K}$. This proves
\cref{eq:uniform-finite-time-approximation}.

The localization hypothesis gives points
\eq{
w_\lambda
\in
B_{\lambda^{-1/2+\eta}}(q)
}
such that
\eq{
\vf_\lambda(w_\lambda)
\geq
\exp\br{
-o(\lambda)
}.
}
After rescaling the ground-state equation on a ball of radius
$c/\lambda$ about $w_\lambda$, the zeroth-order coefficient remains
uniformly bounded. The interior Harnack inequality therefore gives a
ball
\eq{
B_{c/\lambda}(w_\lambda)
\subseteq
D_\lambda
}
on which
\eql{
\inf_{
z\in
B_{c/\lambda}(w_\lambda)
}
\vf_\lambda(z)
\geq
\exp\br{
-o(\lambda)
}.
\label{eq:exterior-ball-local-ground-state-lower-bound}
}
Moreover,
\eq{
S_{D_\lambda}(q,z)
=
o(1)
}
uniformly on this ball.

Set
\eq{
\widehat T_{\varepsilon,K}
:=
T_{\varepsilon,K}+1.
}
The ground-state equation gives
\eq{
\vf_\lambda(x)
=
\ee^{
\widehat T_{\varepsilon,K}e_\lambda/\lambda
}
\int_{z\in D_\lambda}
K_{\lambda,
\widehat T_{\varepsilon,K}/\lambda}^{D_\lambda}(x,z)
\vf_\lambda(z)
\dif{z}.
}
Restrict the integral to
$B_{c/\lambda}(w_\lambda)$. Reverse a path supplied by
\cref{eq:uniform-finite-time-approximation} and use the remaining
unit of time to join $q$ to $z$. The latter part has action $o(1)$.
Consequently,
\eq{
S_{\widehat T_{\varepsilon,K},D_\lambda}(x,z)
\leq
S_{D_\lambda}(q,x)
+
\varepsilon
+
o(1).
}
Using \cref{eq:exterior-ball-kernel} and
\cref{eq:exterior-ball-local-ground-state-lower-bound}, and observing
that the volume of the integration ball is subexponential, gives
\eq{
-\frac1\lambda
\log
\vf_\lambda(x)
\leq
S_{D_\lambda}(q,x)
+
\varepsilon
+
o(1)
}
uniformly for
\eq{
x\in K\cap D_\lambda,
\qquad
\dist(x,\partial D_\lambda)
\geq
r_\lambda.
}
Letting $\varepsilon\downarrow0$ proves
\cref{eq:exterior-ball-ground-state-profile}. The same proof is
uniform on the stated hypersurfaces.

It remains to prove
\cref{eq:exterior-ball-integrated-exit}. Denote its integral by
$I_\lambda(L)$. By
\cref{eq:exterior-ball-ground-state-profile}, the initial and terminal
weights have the logarithmic profiles
\eq{
S_{\Omega_\lambda}(d,z)
\qquad\text{and}\qquad
S_{-\Omega_\lambda}(y,-d),
}
respectively. For the initial factor, the corresponding upper bound
remains valid in the boundary layer. For the terminal factor, note
that
\eq{
\dist\br{
\partial\Omega_\lambda,
\partial(-\Omega_\lambda)
}
\geq
c>0
}
for all sufficiently large $\lambda$, so
\cref{eq:exterior-ball-ground-state-profile} applies uniformly on
$\partial\Omega_\lambda$.

Both profiles are uniformly locally Lipschitz on the relevant compact
sets. Indeed, nearby points in the exterior of a ball can be joined by
a path whose length is at most a constant multiple of their Euclidean
distance, and $V$ is bounded on these compact sets. Hence
\eq{
\abs{
S_{-\Omega_\lambda}(y,-d)
-
S_{-\Omega_\lambda}(y',-d)
}
\leq
C\norm{y-y'}
}
for nearby $y,y'\in\partial\Omega_\lambda$, and the analogous estimate
holds for the initial profile.

The $z$-integration is exponentially tight. By the polynomial
$L^\infty$ bounds already proved and the Gaussian maximal estimate,
\eq{
\PP_z\left[
\tau_{\Omega_\lambda}
\leq
L/\lambda
\right]
\leq
C
\exp\left(
-\frac{c\lambda}{L}
\dist\br{
z,
\partial\Omega_\lambda
}^2
\right).
}
Since $\partial\Omega_\lambda$ remains in a fixed compact set, the
contribution of $z\notin B_R(0)$ can be made smaller than
$\exp(-M\lambda)$, for any prescribed $M$, by choosing $R$ large.

Restrict now to $B_R(0)\cap\Omega_\lambda$ and cover the starting
points by cubes of side $b_\lambda$. Cover $(0,L]$ by the overlapping
exit-time blocks of length $O(\ell_\lambda)$ constructed above, and
cover $\partial\Omega_\lambda$ by the corresponding boundary patches
of diameter $O(\ell_\lambda)$. The total number of resulting blocks is
$\exp(o(\lambda))$. The first-exit upper bound proved above, together
with the upper logarithmic profiles of the two ground states, gives
\eq{
\liminf_{\lambda\to\infty}
\left(
-\frac1\lambda
\log
I_\lambda(L)
\right)
\geq
\liminf_{\lambda\to\infty}
\inf_{\substack{
z\in\Omega_\lambda,\ 
y\in\partial\Omega_\lambda\\
0<T\leq L
}}
\left\{
S_{\Omega_\lambda}(d,z)
+
S_{T,\Omega_\lambda}(z,y)
+
S_{-\Omega_\lambda}(y,-d)
\right\}.
}
The starting-point cover includes all cubes meeting the
$O(r_\lambda)$ boundary layer, while the overlapping exit-time cover
includes times arbitrarily close to zero. Thus no potentially leading
boundary-layer or short-time contribution has been removed.

For the reverse bound, set
\eq{
J_\lambda
:=
\inf_{y\in\partial\Omega_\lambda}
\left\{
S_{\Omega_\lambda}(d,y)
+
S_{-\Omega_\lambda}(y,-d)
\right\}.
}
Concatenation gives
\eq{
S_{\Omega_\lambda}(d,y)
\leq
S_{\Omega_\lambda}(d,z)
+
S_{T,\Omega_\lambda}(z,y),
}
and hence
\eq{
\inf_{\substack{
z\in\Omega_\lambda,\ 
y\in\partial\Omega_\lambda\\
0<T\leq L
}}
\left\{
S_{\Omega_\lambda}(d,z)
+
S_{T,\Omega_\lambda}(z,y)
+
S_{-\Omega_\lambda}(y,-d)
\right\}
\geq
J_\lambda.
}
Conversely, splitting an almost minimizing path sufficiently close to
its terminal point shows that, for every
$y\in\partial\Omega_\lambda$,
\eq{
\inf_{\substack{
z\in\Omega_\lambda\\
0<T\leq L
}}
\left\{
S_{\Omega_\lambda}(d,z)
+
S_{T,\Omega_\lambda}(z,y)
\right\}
=
S_{\Omega_\lambda}(d,y).
}
Therefore the variational expression in
\cref{eq:exterior-ball-integrated-exit} equals $J_\lambda$.

Choose $y_\lambda\in\partial\Omega_\lambda$ such that
\eq{
S_{\Omega_\lambda}(d,y_\lambda)
+
S_{-\Omega_\lambda}(y_\lambda,-d)
\leq
J_\lambda
+
o(1).
}
Write
\eq{
y_\lambda
=
-d
+
r_\lambda\omega_\lambda
}
and define
\eq{
z_\lambda
:=
-d
+
3r_\lambda\omega_\lambda.
}
Then
\eq{
\dist\br{
z_\lambda,
\partial\Omega_\lambda
}
=
2r_\lambda.
}
The radial segment from $z_\lambda$ to $y_\lambda$, traversed in any
fixed time $T_*\in(0,L)$, has action
\eql{
S_{T_*,\Omega_\lambda}(z_\lambda,y_\lambda)
\leq
C_Lr_\lambda^2
=
o(1).
\label{eq:exterior-ball-short-exit-segment}
}
Indeed, its kinetic action is $O(r_\lambda^2)$ and
$V(x)\leq C\norm{x+d}^2=O(r_\lambda^2)$ along the segment. Similarly,
\cref{eq:exterior-ball-agmon-length} gives
\eql{
S_{\Omega_\lambda}(d,z_\lambda)
\leq
S_{\Omega_\lambda}(d,y_\lambda)
+
Cr_\lambda^2.
\label{eq:exterior-ball-boundary-layer-profile}
}

Restrict the $z$-integral to a cube of side $b_\lambda$ about
$z_\lambda$, the exit time to an
$O(\ell_\lambda)$ block containing $T_*$ in its interior, and the exit
point to an $O(\ell_\lambda)$ boundary patch containing
$y_\lambda$ in its interior. The starting cube remains at distance at
least $r_\lambda$ from the boundary. Hence
\cref{eq:exterior-ball-ground-state-profile} supplies the lower bound
for the initial ground state there, while the same estimate for
$-\Omega_\lambda$ supplies the lower bound for the terminal ground
state.

Alter the radial path from $z_\lambda$ to $y_\lambda$ by an
$o(1)$-action perturbation so that it reaches the boundary transversely
at time $T_*$. Extend it a distance $O(\ell_\lambda)$ along the
outward normal after that time. The first-exit lower bound proved above
then applies to this block. Together with
\cref{eq:exterior-ball-short-exit-segment,eq:exterior-ball-boundary-layer-profile},
it gives
\eq{
\limsup_{\lambda\to\infty}
\left(
-\frac1\lambda
\log
I_\lambda(L)
\right)
\leq
\limsup_{\lambda\to\infty}
J_\lambda.
}
This proves
\cref{eq:exterior-ball-integrated-exit}. The same blockwise argument
gives the stated uniform upper bounds and the lower bounds for interior
starting blocks.
\end{proof}

\section{Orthogonal-complement estimates for the one-passage sector}
\label[appendix]{app:one-passage-error-estimates}

We give the details of the error estimate used in
\cref{eq:one passage flux asymptotic}. Throughout this appendix,
$\lambda$ is fixed, so all constants implicit in $\calO_\lambda$ may depend
on $\lambda$ but not on $\beta$.

Using the decomposition
$\Id=P_{\lambda,0}^{\Omega_\lambda}+Q_\lambda^{\Omega_\lambda}$ in the
first heat kernel in \cref{eq:one passage flux convolution}, we have
\eq{
-\nu_\lambda(y)\cdot\nabla_y
K_{\lambda,t}^{\Omega_\lambda}(x,y)
&=
\ee^{-t e_\lambda^D}
\vf_{\lambda,0}^{\Omega_\lambda}(x)
\left[
-\nu_\lambda(y)\cdot
\nabla\vf_{\lambda,0}^{\Omega_\lambda}(y)
\right]
\\
&\qquad-
\nu_\lambda(y)\cdot\nabla_y
\left(
\ee^{-t h_\lambda^{\Omega_\lambda}}
Q_\lambda^{\Omega_\lambda}
\right)(x,y).
}
By reflection symmetry, the corresponding decomposition of the second
heat kernel is
\eq{
K_{\lambda,t}^{-\Omega_\lambda}(y,-x)
&=
\ee^{-t e_\lambda^D}
\vf_{\lambda,0}^{-\Omega_\lambda}(y)
\vf_{\lambda,0}^{-\Omega_\lambda}(-x)
\\
&\qquad+
\left(
\ee^{-t h_\lambda^{-\Omega_\lambda}}
R Q_\lambda^{\Omega_\lambda}R
\right)(y,-x).
}
Thus the expansion of \cref{eq:one passage flux convolution} has the
four contributions $PP$, $QP$, $PQ$, and $QQ$.

We first record the estimates on the two factors containing $Q$. For
fixed $\lambda$ there is a finite constant, independent of $t$, such
that
\begin{malign}
&\left(
\int_{x\in A_{\lambda,+}}
\chi_{\lambda,+}(x)^2
\int_{y\in\partial\Omega_\lambda}
\left|
\nu_\lambda(y)\cdot\nabla_y
\left(
\ee^{-t h_\lambda^{\Omega_\lambda}}
Q_\lambda^{\Omega_\lambda}
\right)(x,y)
\right|^2
\dif{y}\dif{x}
\right)^{1/2}
\\
&\qquad\leq
\calO_\lambda(1)
\ee^{-t\br{e_\lambda^D+\delta_\lambda^D}},
\end{malign}
and
\begin{malign}
&\left(
\int_{x\in A_{\lambda,+}}
\chi_{\lambda,+}(x)^2
\int_{y\in\partial\Omega_\lambda}
\left|
\left(
\ee^{-t h_\lambda^{-\Omega_\lambda}}
R Q_\lambda^{\Omega_\lambda}R
\right)(y,-x)
\right|^2
\dif{y}\dif{x}
\right)^{1/2}
\\
&\qquad\leq
\calO_\lambda(1)
\ee^{-t\br{e_\lambda^D+\delta_\lambda^D}}.
\end{malign}
Here is the standard justification, including the normal derivative in
the first estimate. For $t\geq2$,
\eq{
\ee^{-t h_\lambda^{\Omega_\lambda}}
Q_\lambda^{\Omega_\lambda}
=
\ee^{-h_\lambda^{\Omega_\lambda}}
\left(
\ee^{-(t-2)h_\lambda^{\Omega_\lambda}}
Q_\lambda^{\Omega_\lambda}
\right)
\ee^{-h_\lambda^{\Omega_\lambda}}.
}
The two fixed-time heat operators provide the required smoothing: one
controls evaluation on the compact support of $\chi_{\lambda,+}$ and
the other controls the Dirichlet boundary normal trace on
$\partial\Omega_\lambda$. Standard parabolic regularity and the spectral
theorem therefore give
\eq{
\calO_\lambda(1)
\left\|
\ee^{-(t-2)h_\lambda^{\Omega_\lambda}}
Q_\lambda^{\Omega_\lambda}
\right\|_{2\to2}
&\leq
\calO_\lambda(1)
\ee^{-(t-2)\br{e_\lambda^D+\delta_\lambda^D}}
\\
&=
\calO_\lambda(1)
\ee^{-t\br{e_\lambda^D+\delta_\lambda^D}}.
}
The reflected argument gives the second estimate. For $0<t\leq2$, the
support of $\chi_{\lambda,+}$ is at positive distance from
$\partial\Omega_\lambda$, and its reflection is likewise at positive
distance from that hypersurface. The standard off-diagonal short-time
heat-kernel estimate and its Dirichlet boundary normal-derivative version
give
\eq{
\left|
\nu_\lambda(y)\cdot\nabla_y
K_{\lambda,t}^{\Omega_\lambda}(x,y)
\right|
+
\left|
K_{\lambda,t}^{-\Omega_\lambda}(y,-x)
\right|
\leq
\calO_\lambda\left(
 t^{-M}\ee^{-c_\lambda/t}
\right)
}
for some $M<\infty$ on the sets occurring above. After subtracting the
smooth ground-state terms, the two displayed $Q$-expressions are
therefore $\calO_\lambda(1)$ for $0<t\leq2$. Since
$\ee^{-t\br{e_\lambda^D+\delta_\lambda^D}}$ is bounded below by a
positive constant on this interval, the preceding two estimates hold for
all $t>0$. This is the reason one separates short endpoint times from the
intervening region when deriving the estimates directly from the
spectral gap.

We now bound the three contributions containing at least one $Q$.
First, the contribution with $Q$ in the first heat kernel and the
ground-state projection in the second is bounded, by Cauchy--Schwarz in
$(x,y)$, by
\begin{malign}
&2\left|
\int_0^\beta
\int_{x\in A_{\lambda,+}}
\chi_{\lambda,+}(x)^2
\int_{y\in\partial\Omega_\lambda}
\left[
-\nu_\lambda(y)\cdot\nabla_y
\left(
\ee^{-t h_\lambda^{\Omega_\lambda}}
Q_\lambda^{\Omega_\lambda}
\right)(x,y)
\right]
\right.
\\
&\hspace{8em}\left.
\times
\ee^{-(\beta-t)e_\lambda^D}
\vf_{\lambda,0}^{-\Omega_\lambda}(y)
\vf_{\lambda,0}^{-\Omega_\lambda}(-x)
\dif{y}\dif{x}\dif{t}
\right|
\\
&\qquad\leq
\calO_\lambda(1)
\ee^{-\beta e_\lambda^D}
\int_0^\beta
\ee^{-t\delta_\lambda^D}\dif{t}
=
\calO_\lambda\br{
\ee^{-\beta e_\lambda^D}
}.
\end{malign}

Similarly, the contribution with the ground-state projection in the
first heat kernel and $Q$ in the second is bounded by
\begin{malign}
&2\left|
\int_0^\beta
\int_{x\in A_{\lambda,+}}
\chi_{\lambda,+}(x)^2
\int_{y\in\partial\Omega_\lambda}
\ee^{-t e_\lambda^D}
\vf_{\lambda,0}^{\Omega_\lambda}(x)
\left[
-\nu_\lambda(y)\cdot
\nabla\vf_{\lambda,0}^{\Omega_\lambda}(y)
\right]
\right.
\\
&\hspace{8em}\left.
\times
\left(
\ee^{-(\beta-t)h_\lambda^{-\Omega_\lambda}}
R Q_\lambda^{\Omega_\lambda}R
\right)(y,-x)
\dif{y}\dif{x}\dif{t}
\right|
\\
&\qquad\leq
\calO_\lambda(1)
\ee^{-\beta e_\lambda^D}
\int_0^\beta
\ee^{-(\beta-t)\delta_\lambda^D}\dif{t}
=
\calO_\lambda\br{
\ee^{-\beta e_\lambda^D}
}.
\end{malign}

Finally, when both heat kernels contain $Q$, Cauchy--Schwarz and the two
$Q$-estimates above give
\begin{malign}
&2\left|
\int_0^\beta
\int_{x\in A_{\lambda,+}}
\chi_{\lambda,+}(x)^2
\int_{y\in\partial\Omega_\lambda}
\left[
-\nu_\lambda(y)\cdot\nabla_y
\left(
\ee^{-t h_\lambda^{\Omega_\lambda}}
Q_\lambda^{\Omega_\lambda}
\right)(x,y)
\right]
\right.
\\
&\hspace{8em}\left.
\times
\left(
\ee^{-(\beta-t)h_\lambda^{-\Omega_\lambda}}
R Q_\lambda^{\Omega_\lambda}R
\right)(y,-x)
\dif{y}\dif{x}\dif{t}
\right|
\\
&\qquad\leq
\calO_\lambda(1)
\int_0^\beta
\ee^{-t\br{e_\lambda^D+\delta_\lambda^D}}
\ee^{-(\beta-t)\br{e_\lambda^D+\delta_\lambda^D}}
\dif{t}
\\
&\qquad=
\calO_\lambda\left(
\beta\ee^{-\beta\br{e_\lambda^D+\delta_\lambda^D}}
\right)
=
\calO_\lambda\br{
\ee^{-\beta e_\lambda^D}
}.
\end{malign}
The last equality uses that
$\beta\ee^{-\beta\delta_\lambda^D}=\calO_\lambda(1)$ as
$\beta\to\infty$.

Thus the sum of all three terms containing at least one
$Q_\lambda^{\Omega_\lambda}$, or its reflected counterpart, is
\eq{
\calO_\lambda\br{
\ee^{-\beta e_\lambda^D}
}.
}
Together with the $PP$ contribution already computed in the main text,
namely
\eq{
2a_\lambda\beta
\ee^{-\beta e_\lambda^D}
q_\lambda,
}
this gives
\eq{
Z_{\lambda,1}(\beta)
=
2a_\lambda
\ee^{-\beta e_\lambda^D}
\br{
\beta q_\lambda
+
\calO_\lambda(1)
},
}
as asserted in \cref{eq:one passage flux asymptotic}.

\section{Verification of the one-well assumptions}
\label{sec:verification-one-well-assumptions}

We record without proof the standard harmonic-approximation facts needed
to justify the one-well assumptions when the Dirichlet boundary depends
on $\lambda$. The relevant condition is that the removed ball remain
large compared with the harmonic scale $\lambda^{-1/2}$.

\begin{prop}[Harmonic approximation for the moving one-well domains]
\label{prop:moving-domain-harmonic-approximation}
Suppose that $V\in C^2(\RR^n;[0,\infty))$ has precisely two
non-degenerate zeros at $\pm d$ and is bounded away from zero outside
fixed neighborhoods of these points. Let $r_\lambda\downarrow0$ satisfy
\eq{
\lambda r_\lambda^2\longrightarrow\infty.
}
For $q\in\Set{d,-d}$, let $D_\lambda(q)$ be the exterior of the ball of
radius $r_\lambda$ about $-q$, with the component containing $q$ taken
when $n=1$, and let $h_\lambda^{D_\lambda(q)}$ be the Dirichlet
restriction of
\eq{
-\Delta+\lambda^2V.
}
Let $\mu_j(q)$ denote the eigenvalues of the harmonic oscillator
\eq{
H_q^{\mathrm{har}}
:=
-\Delta_y
+
\frac12
y\cdot\br{\operatorname{Hess}V(q)}y.
}
Then, for every fixed $j\geq0$,
\eq{
\frac{e_{\lambda,j}^{D_\lambda(q)}}{\lambda}
\longrightarrow
\mu_j(q)
}
uniformly for $q\in\Set{d,-d}$. In particular, for some constants
$c,C>0$ and all sufficiently large $\lambda$,
\eq{
c\lambda
\leq
e_{\lambda,0}^{D_\lambda(q)}
\leq
C\lambda,
\qquad
e_{\lambda,1}^{D_\lambda(q)}
-
e_{\lambda,0}^{D_\lambda(q)}
\geq
c\lambda.
}
\end{prop}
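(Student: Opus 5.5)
The plan is the standard harmonic approximation (IMS localization), run with a length scale chosen so that the moving Dirichlet boundary never interacts with the harmonic region. First rescale $x = q + \lambda^{-1/2}y$. Near $q$, $H_\lambda$ becomes $\lambda\big(-\Delta_y + \lambda V(q+\lambda^{-1/2}y)\big)$. For $|y|\le \lambda^{\kappa}$ with small $\kappa>0$, Taylor expansion gives
\[
\lambda V(q+\lambda^{-1/2}y) = \tfrac12 y\cdot \operatorname{Hess}V(q)\,y + o(|y|^2).
\]
Uniformity in $q\in\{d,-d\}$ is automatic, since the two cases are reflections of one another ($RV=V$, and $D_\lambda(-q)=-D_\lambda(q)$). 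It therefore suffices to treat $q=d$.

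\textbf{Upper bound.} Take the first $j+1$ harmonic eigenfunctions of $H_q^{\mathrm{har}}$, rescale them, and cut them off smoothly at $|x-q|\le \lambda^{-1/2+\kappa}$. Because $\lambda r_\lambda^2\to\infty$ and $q$ sits at distance about $2\|d\|$ from the excluded ball, these cut-off functions lie in the form domain of the Dirichlet operator for large $\lambda$. The cutoff errors are Gaussian small. Min-max then gives
\[
\limsup_{\lambda\to\infty} \frac{e_{\lambda,j}^{D_\lambda(q)}}{\lambda} \le \mu_j(q).
\]

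\textbf{Lower bound.} Take an IMS partition of unity $J_0^2+J_1^2+J_2^2=1$ subordinate to three regions:
\begin{itemize}
\item $J_0$: the ball of radius $\lambda^{-1/2+\kappa}$ about $q$;
\item $J_2$: the ball of radius $\lambda^{-1/2+\kappa}$ about $-q$ (the excluded well);
\item $J_1$: the rest.
\end{itemize}
Choose $\kappa$ small enough that $\lambda^{-1/2+\kappa}\ll r_\lambda$ fails to be needed: the relevant point is that $J_2$ lives on a scale comparable to the harmonic scale, while the excluded ball of radius $r_\lambda$ is much larger. The localization error is $\sum_i|\nabla J_i|^2 = O(\lambda^{1-2\kappa}) = o(\lambda)$.

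On $\supp J_1$ the potential satisfies $\lambda^2V \ge c\lambda^{1+2\kappa}$, using nondegeneracy near the wells and the uniform lower bound away from them. So this piece contributes only energies $\gg \lambda$.

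On $\supp J_2$, the Dirichlet condition on $B_{r_\lambda}(-q)$ is in force. The set $\supp J_2\cap D_\lambda$ is then either empty or contained in the region $|x+q|\ge r_\lambda$, where
\[
\lambda^2 V \ge c\lambda^2 r_\lambda^2 \gg \lambda,
\]
by the hypothesis $\lambda r_\lambda^2\to\infty$. This is exactly where that hypothesis enters. To make it clean I would simply enlarge $J_2$ to cover $B_{2r_\lambda}(-q)$, with gradient $O(1/r_\lambda)$, so that its error $O(r_\lambda^{-2})=o(\lambda)$ is also controlled by the same hypothesis.

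On $\supp J_0$, the operator is $\lambda$ times a small perturbation of $H_q^{\mathrm{har}}$ localized to $|y|\le\lambda^\kappa$. The standard finite-rank argument (as in Simon's harmonic approximation) then gives
\[
\liminf_{\lambda\to\infty} \frac{e_{\lambda,j}^{D_\lambda(q)}}{\lambda} \ge \mu_j(q).
\]

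\textbf{Consequences.} The "in particular" statements follow immediately. The spectrum of $H_q^{\mathrm{har}}$ consists of $\mu_j = \sum_i \sqrt{\omega_i}\,(2n_i+1)$-type values, with $\omega_i>0$ the eigenvalues of $\tfrac12\operatorname{Hess}V(q)$. Hence $\mu_0>0$ and $\mu_1-\mu_0>0$, since the harmonic ground state is simple. Taking $c<\min(\mu_0,\mu_1-\mu_0)$ and $C>\mu_0$ gives the stated bounds.

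\textbf{Main obstacle.} The main obstacle is the lower bound near the excluded well. One has to check that the Dirichlet boundary, whose curvature radius $r_\lambda$ shrinks, creates no spurious low-energy states. The route above handles this by a potential lower bound on $B_{2r_\lambda}(-q)\cap D_\lambda$, rather than by any analysis of boundary geometry, and that is why only $\lambda r_\lambda^2\to\infty$ is needed.
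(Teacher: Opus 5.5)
Your proof is correct, and it is the standard IMS harmonic-approximation argument that the paper invokes but records without proof. The hypothesis $\lambda r_\lambda^2\to\infty$ enters exactly where you use it: it gives $\lambda^2V\ge c\lambda^2r_\lambda^2\gg\lambda$ on $D_\lambda(q)$ near the excised well (and $|\nabla J_2|^2=O(r_\lambda^{-2})=o(\lambda)$ if you enlarge $J_2$).

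Two cosmetic points, neither affecting correctness. First, the sentence about choosing $\kappa$ is garbled and can be dropped, because a two-piece partition $J_0^2+J_1^2=1$ already suffices: the Dirichlet condition forces $|x+q|\ge r_\lambda$, so $\lambda^2V\ge c\lambda\min\{\lambda^{2\kappa},\lambda r_\lambda^2,\lambda\}$ on $\supp J_1\cap D_\lambda(q)$. Second, uniformity over the two-point set $\{d,-d\}$ is automatic without invoking $RV=V$, which the proposition does not assume.
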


\begin{prop}[Ground-state localization on the intermediate scale]
\label{prop:moving-domain-ground-state-localization}
Under the hypotheses of
\cref{prop:moving-domain-harmonic-approximation}, let
$\vf_{\lambda,0}^{D_\lambda(q)}$ be the positive normalized ground
state. Then, for every fixed $\eta\in(0,\frac12)$,
\eq{
\sup_{x\in B_{\lambda^{-1/2+\eta}}(q)}
\left|
\frac1\lambda
\log
\vf_{\lambda,0}^{D_\lambda(q)}(x)
\right|
\longrightarrow0
}
uniformly for $q\in\Set{d,-d}$. In particular, the choice
$r_\lambda=\lambda^{-1/4}$ satisfies the hypotheses above and verifies
the one-well assumptions used in the main text.
\end{prop}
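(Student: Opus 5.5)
The statement is Proposition~\ref{prop:moving-domain-ground-state-localization}. I would prove it by combining an IMS-type harmonic comparison with Agmon decay. This gives a lower bound on $\vf_{\lambda,0}^{D_\lambda(q)}$ at one point near $q$, which a Harnack chain then spreads over the whole ball $B_{\lambda^{-1/2+\eta}}(q)$.

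The upper bound $\frac1\lambda\log\vf\leq o(1)$ is immediate. Domination by the free heat kernel at time $1/\lambda$, as in the appendix, gives $\norm{\vf}_\infty\leq C\lambda^{n/4}$, because $e_{\lambda,0}\leq C\lambda$ by Proposition~\ref{prop:moving-domain-harmonic-approximation}. This quantity is $\ee^{o(\lambda)}$.

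For the lower bound I would proceed in three steps.

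\textbf{Step 1 (mass near $q$).} I would first show that a fixed fraction of the $L^2$ mass of $\vf$ lies in $B_{M\lambda^{-1/2}}(q)$ for some fixed large $M$. Take an IMS partition of unity with a cutoff $J_1$ supported near $q$ at scale $M\lambda^{-1/2}$, and its complement $J_2$. Outside the support of $J_1$, the potential satisfies $\lambda^2V\geq c\lambda M^2$ in a fixed neighbourhood of $q$. Near the excluded well the domain is Dirichlet, and away from both wells $V$ is bounded below. The only delicate region is near the excluded ball. There the Dirichlet exterior-ball operator has bottom of spectrum at least $\lambda\mu_0(-q)-o(\lambda)$, which is comparable to $\lambda$. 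This follows by the same harmonic approximation, since $\lambda r_\lambda^2\to\infty$ means the hole is large at the harmonic scale. That bound alone is not enough to exclude mass there. However, $\mu_0(-q)=\mu_0(q)$ by symmetry, and the hole forces an additional energy penalty. Quantitatively, the Dirichlet condition on a ball of radius $r_\lambda\gg\lambda^{-1/2}$ centred at the minimum raises the local ground energy by at least a quantity of order $\lambda^2 r_\lambda^2\cdot$ const, far more than $\lambda$. So the localization error is controlled by $\norm{\nabla J}^2=O(\lambda/M^2)$, against a gain of order $\lambda$.

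This gives $\norm{J_1\vf}_2^2\geq 1/2$.

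\textbf{Step 2 (a good point).} Since the volume of $B_{M\lambda^{-1/2}}(q)$ is of order $\lambda^{-n/2}$, Step 1 produces a point $w_\lambda$ in that ball with $\vf(w_\lambda)\geq c\lambda^{n/4}$.

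\textbf{Step 3 (Harnack chain).} I would join $w_\lambda$ to any $x\in B_{\lambda^{-1/2+\eta}}(q)$ by a chain of $O(\lambda^{1/2+\eta})$ balls of radius $c/\lambda$. Rescaled to unit size, the equation $(-\Delta+\lambda^2V-e)\vf=0$ has zeroth-order coefficient $\lambda^{-2}(\lambda^2V-e)$. On this ball $\lambda^2 V=O(\lambda^{1+2\eta})$, so that coefficient is $O(\lambda^{2\eta-1})$, which is bounded. Each Harnack step therefore loses only a constant factor. The total loss is $\exp(O(\lambda^{1/2+\eta}))=\ee^{o(\lambda)}$ because $\eta<\tfrac12$. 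This yields $\frac1\lambda\log\vf\geq -o(1)$ uniformly on the ball, and uniformly in $q$ by reflection symmetry.

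Finally I would check that $r_\lambda=\lambda^{-1/4}$ satisfies $\lambda r_\lambda^2=\lambda^{1/2}\to\infty$. Together with Proposition~\ref{prop:moving-domain-harmonic-approximation} this verifies the assumptions.

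The main obstacle is Step 1: excluding that the ground state concentrates near the excluded well. The harmonic energy at $-q$ is the same as at $q$, so one must quantify the energy raised by the Dirichlet hole. I would try a Hardy-type inequality on the exterior of the ball, in polar coordinates about $-q$, combined with $V\geq c\norm{x+q}^2\geq cr_\lambda^2$ there. This gives a lower bound of order $\lambda^2r_\lambda^2=\lambda^{3/2}\gg\lambda$ on the local form.
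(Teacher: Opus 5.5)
Your argument is correct. The paper records this proposition without proof, as a standard harmonic-approximation fact in the final appendix, so there is no argument of the paper's to compare against; your proposal supplies a complete one.

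Its ingredients match tools the paper uses elsewhere. Your upper bound $\norm{\vf}_\infty\leq C\lambda^{n/4}$ is exactly the free-heat-kernel bound proved in part (ii) of the exterior-ball lemma. Your Step 3 is the same device the paper uses there to spread the localization hypothesis over $B_{c/\lambda}(w_\lambda)$: rescale to radius $c/\lambda$, where the zeroth-order coefficient $V-e/\lambda^2=O(\lambda^{2\eta-1})$ is bounded, and apply Harnack. You iterate it along $O(\lambda^{1/2+\eta})$ balls, and that is precisely where $\eta<\tfrac12$ enters.

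One simplification is worth making. What you call the main obstacle in Step 1 is immediate, and no Hardy inequality is needed. On $\{r_\lambda\leq\norm{x+q}\leq\delta_0\}$, nondegeneracy of the minimum at $-q$ gives the pointwise bound $\lambda^2V\geq c\lambda^2r_\lambda^2=c\lambda\cdot(\lambda r_\lambda^2)$. Since the Dirichlet Laplacian is nonnegative, this pointwise bound already bounds the local form from below. The hypothesis $\lambda r_\lambda^2\to\infty$ then makes it exceed $c\lambda M^2/4$ for every fixed $M$ once $\lambda$ is large.

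Hence $\lambda^2V\geq c\lambda M^2/4$ holds on all of $\supp J_2$: near $q$ this is nondegeneracy, near $-q$ it is the bound just described, and elsewhere $V$ is bounded below. IMS, together with $e_{\lambda,0}\leq C\lambda$, gives $\br{cM^2/4-C}\lambda\norm{J_2\vf}_2^2\leq C\lambda/M^2$, and so $\norm{J_1\vf}_2^2\geq\tfrac12$ for $M$ large. The detour through the spectral bottom $\lambda\mu_0(-q)$ of the exterior-ball operator can be dropped.

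A minor optional economy: Harnack balls of radius about $\lambda^{-1/2-\eta}$ still keep the rescaled coefficient bounded on $B_{2\lambda^{-1/2+\eta}}(q)$. They reduce the total loss to $\exp\br{O(\lambda^{2\eta})}$, though your cruder count already suffices.
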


\newpage
\begingroup
\let\itshape\upshape
\printbibliography

@article{Simon_1984_10.2307/2007072,
	ISSN = {0003486X},
	URL = {http://www.jstor.org/stable/2007072},
	author = {Barry Simon},
	journal = {Annals of Mathematics},
	number = {1},
	pages = {89--118},
	publisher = {Annals of Mathematics},
	title = {Semiclassical Analysis of Low Lying Eigenvalues, II. Tunneling},
	volume = {120},
	year = {1984}
}

@article{Helffer_Sjostrand_1984,
	author = {   B.   Helffer  and    J.   Sjostrand },
	title = {Multiple wells in the semi-classical limit I},
	journal = {Communications in Partial Differential Equations},
	volume = {9},
	number = {4},
	pages = {337-408},
	year  = {1984},
	publisher = {Taylor & Francis},
	doi = {10.1080/03605308408820335},
	
	URL = { 
	https://doi.org/10.1080/03605308408820335
	
	},
	eprint = { 
	https://doi.org/10.1080/03605308408820335
	
	}
	
}

@article{Fefferman2025,
  author   = {Fefferman, Charles L. and Shapiro, Jacob and Weinstein, Michael I.},
  title    = {Lower Bounds on Quantum Tunneling for Excited States},
  journal  = {The Journal of Geometric Analysis},
  year     = {2025},
  volume   = {35},
  number   = {11},
  pages    = {357},
  doi      = {10.1007/s12220-025-02196-w},
  url      = {https://doi.org/10.1007/s12220-025-02196-w},
  issn     = {1559-002X}
}

@incollection{Coleman1985Uses,
  author    = {Coleman, Sidney},
  title     = {The Uses of Instantons},
  booktitle = {Aspects of Symmetry: Selected Erice Lectures},
  publisher = {Cambridge University Press},
  address   = {Cambridge},
  year      = {1985},
  chapter   = {7},
  pages     = {265--350},
  doi       = {10.1017/CBO9780511565045.008}
}

@article{VainshteinEtAl1982ABC,
  author  = {Vainshtein, A. I. and Zakharov, V. I. and
             Novikov, V. A. and Shifman, M. A.},
  title   = {{ABC} of Instantons},
  journal = {Soviet Physics Uspekhi},
  volume  = {25},
  number  = {4},
  pages   = {195--215},
  year    = {1982},
  doi     = {10.1070/PU1982v025n04ABEH004533},
  url     = {https://ufn.ru/en/articles/1982/4/a/}
}

@article{JonaLasinioMartinelliScoppola1981,
  author  = {Jona-Lasinio, Giovanni and Martinelli, Fabio and
             Scoppola, Enzo},
  title   = {New Approach to the Semiclassical Limit of Quantum
             Mechanics. {I}. Multiple Tunnelings in One Dimension},
  journal = {Communications in Mathematical Physics},
  volume  = {80},
  number  = {2},
  pages   = {223--254},
  year    = {1981},
  doi     = {10.1007/BF01213012}
}

@article{CombesDuclosSeiler1983,
  author  = {Combes, Jean-Michel and Duclos, Pierre and Seiler, Ruedi},
  title   = {Convergent Expansions for Tunneling},
  journal = {Communications in Mathematical Physics},
  volume  = {92},
  number  = {2},
  pages   = {229--245},
  year    = {1983},
  doi     = {10.1007/BF01210848}
}

@article{COP93,
  author  = {Cassandro, Marzio and Orlandi, Enza and Presutti, Errico},
  title   = {Interfaces and Typical {Gibbs} Configurations for One-Dimensional {Kac} Potentials},
  journal = {Probability Theory and Related Fields},
  volume  = {96},
  number  = {1},
  pages   = {57--96},
  year    = {1993},
  doi     = {10.1007/BF01195883}
}

@article{BBB08,
  author  = {Bertini, Lorenzo and Brassesco, Stella and Butt{\`a}, Paolo},
  title   = {Dobrushin States in the {$\phi^4_1$} Model},
  journal = {Archive for Rational Mechanics and Analysis},
  volume  = {190},
  number  = {3},
  pages   = {477--516},
  year    = {2008},
  doi     = {10.1007/s00205-008-0151-3}
}

@article{Web10,
  author  = {Weber, Hendrik},
  title   = {Sharp Interface Limit for Invariant Measures of a Stochastic {Allen--Cahn} Equation},
  journal = {Communications on Pure and Applied Mathematics},
  volume  = {63},
  number  = {8},
  pages   = {1071--1109},
  year    = {2010},
  doi     = {10.1002/cpa.20323}
}

@article{OWW14,
  author  = {Otto, Felix and Weber, Hendrik and Westdickenberg, Maria G.},
  title   = {Invariant Measure of the Stochastic {Allen--Cahn} Equation: The Regime of Small Noise and Large System Size},
  journal = {Electronic Journal of Probability},
  volume  = {19},
  number  = {23},
  pages   = {1--76},
  year    = {2014},
  doi     = {10.1214/EJP.v19-2813}
}

@article{RS23,
  author  = {Rezakhanlou, Fraydoun and Seo, Insuk},
  title   = {Scaling Limit of Small Random Perturbation of Dynamical Systems},
  journal = {Annales de l'Institut Henri Poincar{\'e}, Probabilit{\'e}s et Statistiques},
  volume  = {59},
  number  = {2},
  pages   = {867--903},
  year    = {2023},
  doi     = {10.1214/22-AIHP1275}
}

@article{BBDG25,
  author  = {Bertini, Lorenzo and Butt{\`a}, Paolo and Di Ges{\`u}, Giacomo},
  title   = {Asymptotics of the {$\phi^4_1$} Measure in the Sharp Interface Limit},
  journal = {Archive for Rational Mechanics and Analysis},
  volume  = {249},
  number  = {5},
  pages   = {60},
  year    = {2025},
  doi     = {10.1007/s00205-025-02130-y}
}

@article{FeffermanLeeThorpWeinstein2018,
  author  = {Fefferman, Charles L. and Lee-Thorp, James P. and Weinstein, Michael I.},
  title   = {Honeycomb {Schr\"odinger} Operators in the Strong Binding Regime},
  journal = {Communications on Pure and Applied Mathematics},
  volume  = {71},
  number  = {6},
  pages   = {1178--1270},
  year    = {2018},
  doi     = {10.1002/cpa.21735}
}

@article{FeffermanShapiroWeinstein2022,
  author  = {Fefferman, Charles and Shapiro, Jacob and Weinstein, Michael I.},
  title   = {Lower Bound on Quantum Tunneling for Strong Magnetic Fields},
  journal = {SIAM Journal on Mathematical Analysis},
  volume  = {54},
  number  = {1},
  pages   = {1105--1130},
  year    = {2022},
  doi     = {10.1137/21M1429412}
}

@incollection{DimassiSjostrand1999InteractionMatrix,
  author    = {Dimassi, Mouez and Sj{\"o}strand, Johannes},
  title     = {Tunnel Effect and Interaction Matrix},
  booktitle = {Spectral Asymptotics in the Semi-Classical Limit},
  series    = {London Mathematical Society Lecture Note Series},
  volume    = {268},
  pages     = {49--74},
  publisher = {Cambridge University Press},
  address   = {Cambridge},
  year      = {1999},
  doi       = {10.1017/CBO9780511662195.007},
  url       = {https://doi.org/10.1017/CBO9780511662195.007}
}
\endgroup
\end{document}